\newtheorem{thm}{Theorem}[section]
\newtheorem{prop}[thm]{Proposition}
\newtheorem{lemma}[thm]{Lemma}
\newtheorem{cor}[thm]{Corollary}
\newtheorem{definition}[thm]{Definition}
\theoremstyle{definition}
\def\One{\mathbb{I}}
\title{Diffeomorphisms of quantum fields}
\author{Dirk Kreimer and Karen Yeats}
\thanks{KY thanks the Kolleg Mathematik Physik Berlin (KMPB) for support during a visit in summer 2016.  KY is supported by an NSERC Discovery grant. DK thanks the IHES for hospitality September 2016.}
\begin{document}
\maketitle
\begin{abstract}
We study field diffeomorphisms $\phi(x)\to F(\phi(x))=a_0\phi(x)+a_1\phi^2(x)+\ldots=\sum_{j+0}^\infty a_j \phi^{j+1},
$ for free and interacting quantum fields $\Phi$. We find that the theory is invariant under such diffeomorphisms if and only if kinematic renormalization schemes are used.  
\end{abstract}
\section{Introduction}
\subsection{The problem and the results}
In \cite{KVdiff} Andrea Velenich and one of us investigated what happens if one applies a field diffeomorphism
\[
\phi(x)\to F(\phi(x))=a_0\phi(x)+a_1\phi^2(x)+\ldots=\sum_{j+0}^\infty a_j \phi^{j+1},
\]
to a free scalar quantum field $\Phi(x)$ with Lagrangian density
\[
L(\phi)=\frac{1}{2} \partial_\mu\phi(x)\partial^\mu\phi(x)-\frac{m^2}{2}\phi^2(x).
\]
We set $a_0=1$ (and the diffeomorphism is tangent to the identity, so no constant term) in the following without loss of generality. The question to study is how, in terms of these seemingly interacting fields,
one recognizes the underlying  free field theory. No recourse to formal manipulations of a path integral or the path integral measure 
was made in \cite{KVdiff} nor is it made here. 

Instead, in the context of kinematic renormalization schemes, it was shown for the massive theory that 
interacting tree-level amplitudes vanish, through explicit computations summing all amplitudes up to six external legs. 
The vanishing reveals itself only in the sum of all tree amplitudes with a given number of external legs and is based on non-trivial cross cancellations. 

In that first paper we could not provide an all orders proof of the vanishing of the tree-level amplitudes. This is the crucial requirement to understand the situation in general though: the vanishing of loop amplitudes follows from the vanishing of tree-amplitudes and analytic properties of amplitudes in the context of those renormalization schemes which subtract at a renormalization point given by kinematic conditions on the amplitude.

With loops, the same was established at first loop order for such kinematic renormalization schemes.\footnote{In \cite{Gervais} a similar result was obtained.
A formal use of a Jacobian of the field diffeomorphism  in the path integral leads to erroneous results by terms which would vanish in kinematic renormalization. There, a solution was proposed modifying the path integral formalism.} For the massless theory, the vanishing of all interacting tree- and loop-amplitudes was shown on analysing the structure of the $S$-matrix.

In this document we will prove the vanishing of interacting amplitudes after diffeomorphism  for all $n>2$ and for  all loop orders, where $n$ is the number of external legs for the amplitude under consideration, with $n=2$ describing propagation, and $n>2$ interaction.

The first step is to prove it at tree-level.  This involves two steps, a reduction of the problem to a purely combinatorial identity and a proof of this identity involving some non-trivial manipulations of Bell polynomials.  In order to achieve this we will first need a digression into Bell polynomials and Bell polynomial identities.  It shouldn't be surprising that Bell polynomials are important since Bell polynomials can be used to describe compositions of series, see further comments at the beginning of Section~\ref{sec bell}.

Once the tree-level result is proved we will extend it inductively to all loop orders using that loop amplitudes are formed from tree amplitudes with off-shell legs identified between trees, or equivalently from Cutkosky rules and the optical theorem.  

We verify  that taking the sum over all possible left hand and right hand sides  --- for a gluing of loop amplitudes from two tree-amplitudes or vice versa from cutting a loop amplitude for a Cutkosky cut --- with each diagram on each side weighted by its symmetry factor will give the correct symmetry factors for the full diagrams.  

Once that is in hand the sums on both sides of the cut are themselves vanishing amplitudes inductively since the cut edges are on-shell and so Cutkosky tells us that the imaginary part of the whole amplitude vanishes.  Then the optical theorem gives that the amplitude itself vanishes proving the main result. Equivalently, upon gluing Feynman rules of the diffeomorphed theory reduce amplitudes to tadpole amplitudes which vanish in kinematic renormalization. 

Our final result is then in accordance with the result in \cite{Flume}
which was obtained in coordinate space renormalization: invariance under point transformation for a theory renormalized by local subtractions.

We will then continue and study field diffeomorphisms tangent to the identity for an interacting field theory.
Again, we show that the structure of the newly generated vertices and the demands of $S$-matrix theory suffice to conclude the diffeomorphisms invariance of Wightman functions --- but only in the context of kinematic renormalization. Remarkably, the Jacobian of the field diffeomorphisms plays no role in this proof. We conclude with some considerations on the equivalence class 
defined by field diffeomorphisms tangent to the identity and exhibit consequences for the adiabatic limit in the context of Haag's theorem.
\subsection{Set-up and previous work}\label{subsec prev}
It has long been known that loop contributions to quantum S-matrix elements can be obtained from tree-level amplitudes using unitarity methods based on the optical theorem and dispersion relations. Indeed, $d$-dimensional unitarity methods allow us to compute S-matrix elements without the need of an underlying Lagrangian and represent an alternative to the usual quantization prescriptions based on path integrals or canonical quantization. 

We consider field diffeomorphisms of a free field theory which generate a seemingly interacting field theory. 
This is an old albeit somewhat controversial topic in the literature,
see \cite{KVdiff} and references there. 

As in \cite{KVdiff} we address it here from a minimalistic approach ignoring any path-integral heuristics.

As any interacting field theory, an interacting  field theory whose interactions originate from field diffeomorphisms
of a free field theory alone has a perturbative expansion which is governed by a corresponding tower of Hopf algebras \cite{Suj}. 
It starts from the core Hopf algebra for which only one-loop graphs are primitive:
\begin{equation}
\Delta \Gamma=\Gamma\otimes\One+\One\otimes\Gamma +\sum_{\cup_i\gamma_i=\gamma\subset\Gamma}\gamma\otimes\Gamma/\gamma,
\end{equation}
 and ends with a Hopf algebra for which any one-particle irreducible (1PI) graph is primitive:
\begin{equation}
\Delta \Gamma=\Gamma\otimes\One+\One\otimes\Gamma.
\end{equation}
Here, subgraphs $\gamma_i$ are 1PI. Intermediate between these two Hopf algebras are those for which graphs of a prescribed superficial degree of divergence contribute
in the coproduct.

All these Hopf algebras allow for maximal co-ideals. The core Hopf algebra has a maximal  ideal which relates to the celebrated BCFW relations: if the latter relations hold, the Feynman rules are well defined on the quotient of the core Hopf algebra by  this maximal ideal \cite{Suj}.

In the Hopf algebra of Feynman diagrams such  Hopf co-ideals are known to encode the symmetries of a field  theory \cite{Suj}. Such co-ideals enforce relations among the n-point 1-particle irreducible Green functions $\Gamma^{(n)}_{1PI}$ or among the connected Green functions $\Gamma^{(n)}_c$, which generically are of the form:
\begin{equation} \label{decore}
\Gamma^{(n)}_{1PI} = \Gamma^{(j)}_{1PI} \frac{1}{\Gamma^{(2)}_{1PI}} \Gamma^{(k)}_{1PI} \qquad \forall \; j,k>2 \; ; \; j+k = n+2.
\end{equation}
with relations which characterize a Hopf ideal \cite{Suj}.
This is equivalent to the identity familiar in gauge theory \cite{anatomy} and quantum gravity \cite{Kgravity}:
\[
\frac{\Gamma^{(n+1)}_{1PI}}{\Gamma^{(n)}_{1PI}}
=
\frac{\Gamma^{(n)}_{1PI}}{\Gamma^{(n-1)}_{1PI}},\, n\geq 3.
\]

For gravity, the core Hopf algebra is involved as gravity Feynman rules are necessarily such that the power-counting for 
vertices (involving two derivatives at a vertex) compensates the power-counting for internal edges \cite{Kgravity}. The same power-counting is generated through our field diffeomorphisms through the 
\[
\frac{1}{2}\partial_\mu F(\phi) \partial^\mu F(\phi)
\]
term, motivating future study. 

The corresponding Hopf ideal is related to the diffeomorphism invariance of the  theory:
we prove that the connected amplitudes vanish which leads to the relations Eq.(\ref{decore}) above, for example for the four-point function we get
\[
0=\Gamma^{4}_c=\Gamma^{(4)}_{1PI}+\Gamma^{(3)}_{1PI}\frac{1}{\Gamma^{(2)}_{1PI}}\Gamma^{(3)}_{1PI}
\]
\begin{figure}[!h]
\includegraphics[width=12cm]{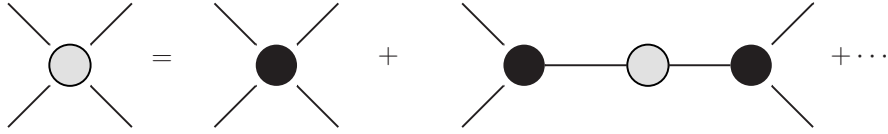}
\caption{The identity $\Gamma^{(4)}_c=\Gamma^{(4)}_{1PI}+\Gamma^{(3)}_{1PI}\frac{1}{\Gamma^{(2)}_{1PI}}\Gamma^{(3)}_{1PI}$.}
\label{conn1PI}
\end{figure}
Note that the 1PI  two-point function $\Gamma^{(2)}_{1PI}=1-\Sigma$ is never vanishing: a free field theory provides the lowest order in the perturbation expansion of a field theory, $\Gamma^{(2)}_{1PI}\not=0$ even for vanishing interactions.  
Hence if
$\Gamma^{(3)}_{1PI}=0$ in Eq.(\ref{decore}) we conclude $\Gamma^{(n)}_c=0,n\geq 3$.

To start, let us consider real scalar fields then defined on a 4-dimensional Minkowski space-time 
$\phi\equiv\phi(x):\mathbb{R}^{1,3} \rightarrow \mathbb{R}$ and field diffeomorphisms $F(\phi)$ specified by choosing a set of real coefficients $\{ a_k \}_{k \in \mathbb{N}}$ which do not depend on the space-time coordinates:
\begin{equation} \label{defF}
F(\phi) = \sum_{k=0}^{\infty} a_k\phi^{k+1} = \phi + a_1 \phi^2 + a_2 \phi^3 + \ldots \qquad (\mathrm{with} \; a_0=1).
\end{equation}
These transformations are often called point transformations. They preserve Lagrange's equations, they are a subset of the canonical transformations \cite{castro}, and in the quantum formalism they become unitary transformations of the Hamiltonian \cite{nakai}. 

The two field theories which we will consider are derived from the free massless and the massive scalar field theories, 
with Lagrangian densities ${L}(\phi)$ and with $F$ defined as in (\ref{defF}):
\begin{eqnarray}
\label{L1} & L(\phi)=\frac{1}{2}\partial_\mu\phi\partial^\mu \phi \to L_F(\phi) = \frac{1}{2} \partial_{\mu} F(\phi) \, \partial^{\mu} F(\phi), \\
\label{L2} & L(\phi)=\frac{1}{2}\partial_\mu\phi\partial^\mu \phi-\frac{m^2}{2}\phi^2 \to L_F(\phi) = \frac{1}{2} \partial_{\mu} F(\phi) \, \partial^{\mu} F(\phi) -
 \frac{m^2}{2} F(\phi) F(\phi),
\end{eqnarray}

For the massive theory ($m\not=0$) this generates two types of vertices of any valence, massive vertices $\sim m^2$ and kinematic vertices $\sim p_i^2$, for external momenta $p_i$ at edge $e_i$. For any external edge $e$, set $x_e:=p_e^2-m^2$. 

We then can combine the two vertex types in a single vertex $v$ say of valence $n$, which  provides a linear combination of variables $x_e$ and $m^2$ as a Feynman rule, fixed through the coefficients $a_i$ of the field diffeomorphism.

This is our starting point in Sect.(\ref{treeamp}) for the study of tree amplitudes which a priori are rational functions in variables $x_e,m^2$.

For loop amplitudes we saw in \cite{KVdiff} already that connected $n$-point amplitudes do not vanish due to the appearance of tadpole diagrams which spoil the Hopf ideal structure. A necessary condition to regain diffeomorphism invariance is the use of a renormalization scheme  which eliminates all contributions from tadpole diagrams. These are also mathematically preferred schemes \cite{BrK}.

We define them following \cite{BrK}. We start with the vector space 
$Q_\Gamma=\mathbb{R}^{4n-10}$ for $\Gamma$ a graph with $n\geq 4$ external momenta, spanned by scalar products $q_i\cdot q_j$, subject to overall momentum conservation and four-dimensionality of spacetime. For a two-point graph, $Q_\Gamma=\mathbb{R}^{1}$, for a  three-point graph $Q_\Gamma=\mathbb{R}^{3}$, see \cite{iz,BlK1loop}. Note that for a graph with a single external leg, and hence no momentum flow through the graph (a tadpole), we have a trivial $Q_\Gamma=\mathbb{R}^{0}$.

Consider now a parametric representation of Feynman rues as given by the two Symanzik polynomials. In a kinematic renormalization scheme $R$, renormalized Feynman rules \[ \Phi_R(\Gamma): Q_\Gamma\times Q_\Gamma\to \mathbb{C}\] for a graph $\Gamma$  are of the 
form \cite{BrK}
\[
\Phi_R(\Gamma)(p,p_0)=\int_{\mathbb{P}_\Gamma}\sum_{F\in\mathcal{F}_\Gamma}(-1)^{|F|}\frac{\ln\frac{\Phi_{\Gamma/F}\psi_F+\Phi^0_F\psi_{\Gamma/F}}{\Phi^0_{\Gamma/F}\psi_F+\Phi^0_F\psi_{\Gamma/F}}}{\psi^2_{\Gamma/F}\psi^2_F} \Omega_\Gamma,
\]
where the (second Symanzik) polynomial $\Phi_h$ evaluates at $p\in Q_h$,
and $\Phi^0_h$ evaluates at $p_0\in Q_h$, for $h\subseteq\Gamma$ a sub- or co-graph of $\Gamma$ contributing in the above forest sum.

Note that by construction in such a scheme $R$, for any tadpole graph $\Gamma$, $\Phi_R(\Gamma)=0$ as necessarily $\Phi\equiv\Phi^0$, as there is no $p$ dependence in $\Phi_h$ for any $h\subseteq\Gamma$. 

Kinematic renormalization schemes include BPHZ and MOM-scheme renormalization, and quite generally any scheme where subtractions are done 
by determining renormalization conditions for the kinematics of amplitudes.
In particular, we have in such schemes 
\[
\Phi_R(\Gamma)(p_0,p_0)=0,\,\forall\Gamma.
\]
Renormalized Feynman rules for kinematic schemes have characteristic properties with respect to the renormalization group and their scale and angle dependence \cite{BrK,BlK}
as well as with respect to  their monodromies in varying external momenta over thresholds
\cite{BlKcut}.

A popular class of renormalization scheme which are not kinematic are minimmal subtraction schemes $MS,\overline{MS}$ and such.
We see below that they indeed, in contrast to kinematic schemes, fail to deliver diffeomorphism invariance.

\section{Bell polynomial identities}\label{sec bell}

The first step towards proving our results is to develop some results on Bell polynomials which will be needed for the tree-level result.

It shouldn't be surprising that Bell polynomials are important in this argument. Bell polynomials describe compositions of formal series, that is of diffeomorphisms.  Consequently a generating functions proof should be possible instead of the direct manipulations of Bell polynomials which we use, and such a proof may be able to provide more insight -- this is one of the basic insights of enumerative combinatorics and can be found in a variety of references, one recent reference is \cite{FSbook}.  However, the details are intricate in our case: the quantum field theory hides the diffeomorphism quite well and many different Bell polynomial results at different levels were needed to uncover it which would make the translation to generating functions non-trivial.  We will stick to a direct, predominantly non-generating function, approach here.

The Bell polynomials, sometimes known as \emph{partial} or \emph{incomplete} Bell polynomials, are defined as follows.

\begin{definition}Suppose $0\leq k\leq n$ are integers, then the \emph{Bell polynomials} are defined by
\[
B_{n,k} (x_1, x_2, \ldots) = \sum_{\substack{j_1+j_2+j_3+\cdots = k \\j_1+2j_2+3j_3+\cdots = n\\j_i\geq 0}}\frac{n!}{j_1!j_2!j_3!\cdots}\left(\frac{x_1}{1!}\right)^{j_1}\left(\frac{x_2}{2!}\right)^{j_2}\left(\frac{x_3}{3!}\right)^{j_3} \cdots
\]
\end{definition}
At the level of generating functions this definition becomes
\[
\exp\left(u\sum_{j=1}^{\infty}x_j\frac{t^j}{j!}\right) = \sum_{n,k\geq 0}B_{n,k}(x_1, x_2, \ldots)\frac{t^n}{n!}u^k
\]
Pulling out the coefficients of $u$ we get the composition formula for $k\geq 0$
\begin{equation}\label{eq composition 1}
\frac{1}{k!}\left(\sum_{j=1}^{\infty}x_j\frac{t^j}{j!}\right)^k = \sum_{n \geq k}B_{n,k}(x_1, x_2, \ldots)\frac{t^n}{n!}
\end{equation}

It will also be important for us that Bell polynomials count set partitions in the following sense
\[
\sum_{\substack{P_1 \cup P_2 \cup \cdots \cup P_k = \{1,\ldots, n\}\\P_i \text{ disjoint, nonempty}}} x_{|P_1|}x_{|P_2|} \cdots x_{|P_k|} = B_{n,k}(x_1, x_2, \ldots)
\]

Bell polynomial identities will be central to proving our main result.  Some of these identities were known and others can be derived using techniques in the literature.
First, note the following standard fact on shifting arguments in Bell polynomials.  The proof is classical and is a simple calculation from the definition so we will not give it.
\begin{lemma}\label{lem shift}
\[
B_{n,k}(1, x_2, x_3, x_4, \ldots) = \sum_{0 \leq j \leq k}\frac{n!}{(n-k)!j!}B_{n-k,k-j}(x_2/2, x_3/3, x_4/4, \ldots)
\]
\end{lemma}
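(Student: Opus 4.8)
The plan is to prove the identity directly from the exponential generating function characterization of the Bell polynomials, by isolating the effect of setting the first argument equal to $1$. The observation that drives everything is that fixing $x_1=1$ factors the defining exponential into a \emph{singleton} piece and a \emph{nonsingleton} piece, and the nonsingleton piece is itself the generating function of a Bell polynomial in the shifted variables $x_{i+1}/(i+1)$. This is exactly why those particular shifted arguments appear on the right.

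Concretely, I would first set $x_1=1$ in the bivariate generating function and split off the linear term:
\[
\exp\left(u\sum_{j=1}^\infty x_j\frac{t^j}{j!}\right)\Bigg|_{x_1=1} = \exp(ut)\,\exp\left(u\sum_{j=2}^\infty x_j\frac{t^j}{j!}\right).
\]
The key manipulation is the reindexing $\sum_{j\ge 2}x_j t^j/j! = t\sum_{i\ge1}(x_{i+1}/(i+1))\,t^i/i!$, which exhibits the second exponential as $\exp\!\big((ut)\sum_{i\ge1}y_i t^i/i!\big)$ with $y_i = x_{i+1}/(i+1)$; by the composition formula (\ref{eq composition 1}) this equals $\sum_{m,\ell\ge0}B_{m,\ell}(y_1,y_2,\ldots)\,t^{m+\ell}u^\ell/m!$. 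Multiplying by $\exp(ut)=\sum_{r\ge0}t^r u^r/r!$ and extracting the coefficient of $t^n u^k/n!$ forces $r+\ell=k$ and $m=n-k$; writing $j=k-\ell$ and collecting the factorials yields exactly the claimed sum.

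The only genuine bookkeeping obstacle is that the block-counting marker in the nonsingleton piece carries an extra factor of $t$ (because the argument is $ut$, not $u$), so the power of $t$ attached to $B_{m,\ell}$ is $m+\ell$ rather than $m$. One has to track this shift carefully when matching powers, otherwise the relation $m=n-k$ and the prefactor $n!/\big((n-k)!\,j!\big)$ come out wrong. Everything else is routine coefficient extraction.

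Alternatively, and perhaps more in the spirit of a "calculation from the definition," I would run the same argument purely on the summation formula: in $B_{n,k}(1,x_2,\ldots)$ let $j$ denote the multiplicity $j_1$ of singleton blocks, so that the factor $(x_1/1!)^{j_1}=1$ drops out. The remaining data is a choice of multiplicities $j_2,j_3,\ldots$ with $\sum_{i\ge2}j_i=k-j$ and $\sum_{i\ge2}i\,j_i=n-j$. Setting $\ell_i=j_{i+1}$ identifies these with the index set of $B_{n-k,k-j}(x_2/2,x_3/3,\ldots)$, since then $\sum_i\ell_i=k-j$ and $\sum_i i\,\ell_i=n-k$, and the weight $\prod_i\big((x_{i+1}/(i+1))/i!\big)^{\ell_i}$ collapses to $\prod_{s\ge2}(x_s/s!)^{j_s}$. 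Comparing the multinomial prefactors $n!/(j!\prod_{s\ge2}j_s!)$ against $\tfrac{n!}{(n-k)!\,j!}\cdot(n-k)!/\prod_i\ell_i!$ shows they agree term by term, and summing over $j$ from $0$ to $k$ reassembles $B_{n,k}(1,x_2,\ldots)$.
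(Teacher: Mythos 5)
Both of your arguments are correct; the paper omits the proof entirely, describing it only as ``a simple calculation from the definition,'' and your second argument is precisely that calculation (the generating-function version being an equivalent repackaging of it). The bookkeeping checks out: with $j=j_1$ singleton parts one has $\sum_{s\ge 2} j_s = k-j$ and $\sum_{s\ge 2} s\,j_s = n-j$, hence $\sum_{s\ge 2}(s-1)j_s = n-k$, which is exactly why the first index drops to $n-k$ and the shifted arguments $x_{i+1}/(i+1)$ appear, and the multinomial prefactors match term by term as you state.
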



Birmajer, Gil, and Weiner in \cite{BGWbell} give some inverse relations between Bell polynomials, the following one of which will be useful for us.

\begin{thm}[\cite{BGWbell} Theorem 15]\label{thm BGW}
Let $a, b \in \mathbb{Z}$.  Given $x_1, x_2, \ldots$, define $y_1, y_2, \ldots$ by
\[
y_n = \sum_{k=1}^{n}\binom{an+bk}{k-1}(k-1)!B_{n,k}(x_1, x_2, \ldots)
\]
Then for any $\lambda \in \mathbb{C}$, 
\[
\sum_{k=1}^n\binom{\lambda}{k-1}(k-1)!B_{n,k}(y_1, y_2, \ldots) = \sum_{k=1}^n\binom{\lambda+an+bk}{k-1}(k-1)!B_{n,k}(x_1, x_2, \ldots)
\]
\end{thm}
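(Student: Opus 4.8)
The statement is Theorem~15 of~\cite{BGWbell}; I sketch a self-contained generating-function proof. The plan is to pass to exponential generating functions throughout. Writing $\tilde f(t)=\sum_{j\ge 1}x_j t^j/j!$ and $\Psi(t)=1+\tilde f(t)$, the composition formula~\eqref{eq composition 1} gives $\tilde f^k/k!=\sum_{n\ge k}B_{n,k}(x)\,t^n/n!$. The first step is the clean constant-parameter identity
\[
\sum_{n\ge 1}\left(\sum_{k=1}^n\binom{\lambda}{k-1}(k-1)!\,B_{n,k}(x)\right)\frac{t^n}{n!}=\frac{\Psi(t)^{\lambda+1}-1}{\lambda+1},
\]
which I would prove by writing $\binom{\lambda}{k-1}(k-1)!=\lambda^{\underline{k-1}}=(\lambda+1)^{\underline{k}}/(\lambda+1)$, summing $\sum_{k\ge 1}(\lambda+1)^{\underline{k}}\,\tilde f^k/k!=\Psi^{\lambda+1}-1$ by the binomial theorem, and dividing by $\lambda+1$. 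Applying this with the sequence $y$ in place of $x$ immediately rewrites the left-hand side of the theorem as $n!\,[t^n]\,\frac{(1+g)^{\lambda+1}-1}{\lambda+1}$, where $g$ is the exponential generating function of $(y_n)$.

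It remains to compute the $(an+bk)$-weighted transform, both in the definition of $y$ and on the right-hand side. Here the weight $\binom{an+bk}{k-1}(k-1)!$ depends on $n$ and on the number of blocks $k$, so it cannot be pulled out of a single power of $\Psi$; this is exactly the point at which Lagrange inversion enters. I would linearise the block-dependent shift $bk$ by introducing the series $W=W(t)$ determined by the functional equation $W=t\,\Psi(W)^{b}$ (a generalised binomial substitution), and handle the diagonal shift $an$ through the Lagrange diagonal formula $[t^n]\,H(t)\,\phi(t)^{n}=[t^n]\,H(W)/(1-t\,\phi'(W))$ with $\phi=\Psi^{a}$. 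The upshot I expect is that, after passing to the $W$-variable, the $(an+bk)$-transform of $x$ is represented by a single power $\Psi(W)^{\alpha}$ for an exponent built from $a$ and $b$, with the denominators from the first step absorbed into the Lagrange Jacobian.

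The conceptual heart, and the place where the identity really lives, is an additivity-of-exponents law: performing the $\lambda$-transform after the $(an+bk)$-transform shifts the exponent of $\Psi(W)$ by $\lambda$, producing total exponent $\lambda+an+bk+\mathrm{const}$, which is precisely the exponent produced by the single $(\lambda+an+bk)$-transform on the right-hand side; so both sides reduce to the same power of $\Psi(W)$ and agree. As a safety net that also structures the argument, I note that both sides are polynomials in $\lambda$ of degree at most $n-1$, since $\binom{\lambda}{k-1}(k-1)!$ and $\binom{\lambda+an+bk}{k-1}(k-1)!$ are the falling factorials $\lambda^{\underline{k-1}}$ and $(\lambda+an+bk)^{\underline{k-1}}$, so it suffices to verify equality at $n$ values of $\lambda$, e.g.\ the nonnegative integers, where the transforms admit a direct compositional reading. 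The hard part is the Lagrange-inversion bookkeeping in the second step: checking that the two shifts $an$ and $bk$ recombine into a single exponent exactly, and that the normalisation denominators and the functional-equation Jacobian cancel cleanly; the integrality hypothesis $a,b\in\mathbb{Z}$ and the reduction to integer $\lambda$ are what keep these manipulations legitimate.
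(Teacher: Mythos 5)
The paper does not prove this statement at all: it is imported verbatim as Theorem~15 of Birmajer--Gil--Weiner \cite{BGWbell}, so there is no internal proof to compare against. Your first step is correct and complete: writing $\binom{\lambda}{k-1}(k-1)!=\lambda^{\underline{k-1}}=(\lambda+1)^{\underline{k}}/(\lambda+1)$ and summing via the binomial theorem does give $\sum_{n\ge 1}\bigl(\sum_{k}\binom{\lambda}{k-1}(k-1)!B_{n,k}(x)\bigr)t^n/n!=\bigl(\Psi(t)^{\lambda+1}-1\bigr)/(\lambda+1)$, and the reduction to polynomial identities in $\lambda$ of degree at most $n-1$ is legitimate. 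This is also the right general strategy: the published proof of the cited theorem is likewise a Lagrange-inversion argument about powers of $\Psi$ composed with a substitution determined by the parameters $a,b$.

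However, as written your argument has a genuine gap exactly where the content of the theorem lies. The entire second step --- that after substituting $W=t\,\Psi(W)^b$ and applying the Lagrange diagonal formula the $(an+bk)$-weighted transform collapses to a \emph{single} power $\Psi(W)^{\alpha}$, and that the subsequent $\lambda$-transform shifts that exponent additively so that the two sides match --- is announced as ``the upshot I expect'' rather than computed. You yourself flag that the recombination of the two shifts $an$ and $bk$ into one exponent, and the cancellation of the normalisation denominators $1/(\lambda+1)$, $1/(an+bk+1)$ against the Jacobian factor $1/(1-t\phi'(W))$, are unverified; but these cancellations are precisely what makes the identity true, and they are not automatic (for instance, the $\lambda$-transform applied to $y$ involves $\Psi_y=1+g$ built from the $y_n$, and identifying $\Psi_y(t)$ with $\Psi(W(t))^{\text{const}}$ is itself a Lagrange-inversion lemma that must be proved, not assumed). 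The fallback of checking nonnegative integer $\lambda$ does not help, since the $an+bk$ dependence on the right-hand side is unaffected by specialising $\lambda$. Until that computation is carried out, the proposal is a plausible plan consistent with the literature, not a proof; the safe alternative is to do what the paper does and cite \cite{BGWbell}.
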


Next we need some identities on sums of products of Bell polynomials and their arguments.  

\begin{lemma}\label{lem classical}
Suppose $n, k>0$.
\[
B_{n,k}(x_1, x_2, \ldots) = \frac{n!}{k}\sum_{s=0}^n \frac{x_s}{s!}\frac{B_{n-s, k-1}(x_1, x_2, \ldots)}{(n-s)!}
\]
and
\[
\sum_{s=0}^{n}s \frac{x_s}{s!}\frac{B_{n-s, k-1}(x_1, x_2, \ldots)}{(n-s)!} = \frac{B_{n,k}(x_1, x_2, \ldots)}{(n-1)!}
\]
\end{lemma}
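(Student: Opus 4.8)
The plan is to prove both identities via exponential generating functions, leaning on the composition formula~(\ref{eq composition 1}). Write $X(t)=\sum_{j\geq 1}x_j\,t^j/j!$ for the generating series of the arguments, so that (\ref{eq composition 1}) reads $X(t)^k/k! = \sum_{n\geq k}B_{n,k}(x_1,x_2,\ldots)\,t^n/n!$. For the first identity I would start from the trivial factorization $\frac{1}{k}\,X(t)\cdot\frac{X(t)^{k-1}}{(k-1)!}=\frac{X(t)^k}{k!}$ and simply compare coefficients of $t^n$ on both sides. The right-hand side contributes $B_{n,k}/n!$, while the Cauchy product on the left contributes $\frac{1}{k}\sum_{s=0}^n \frac{x_s}{s!}\frac{B_{n-s,k-1}}{(n-s)!}$; multiplying through by $n!$ gives exactly the first claimed equation.

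The second identity comes from the same generating series but differentiated once in $t$. Since $\frac{d}{dt}\frac{X(t)^k}{k!}=X'(t)\,\frac{X(t)^{k-1}}{(k-1)!}$ with $X'(t)=\sum_{s\geq 1}s\,\frac{x_s}{s!}\,t^{s-1}$, I would extract the coefficient of $t^{n-1}$. On the left this is $B_{n,k}/(n-1)!$, and on the right the Cauchy product of $X'$ and $X^{k-1}/(k-1)!$ yields $\sum_{s=0}^n s\,\frac{x_s}{s!}\,\frac{B_{n-s,k-1}}{(n-s)!}$, which is precisely the second equation. Both derivations are short, and as a sanity check one can verify the case $k=1$, where $B_{m,0}=\delta_{m,0}$ collapses each sum to its $s=n$ term and reproduces $B_{n,1}=x_n$.

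Almost everything here is routine, so rather than a genuine obstacle the only point needing care is the treatment of the $s=0$ boundary term against the stated range $0\leq s\leq n$: since $X(t)$ has no constant term we set $x_0=0$, so the $s=0$ summand vanishes in the first identity and is killed by the explicit factor $s$ in the second, making both ranges harmless. As an alternative I could give purely combinatorial proofs from the set-partition interpretation $B_{n,k}=\sum x_{|P_1|}\cdots x_{|P_k|}$: rewriting the first identity as $k\,B_{n,k}=\sum_s \binom{n}{s}x_s\,B_{n-s,k-1}$ it counts a set partition into $k$ blocks together with a choice of one distinguished block, and rewriting the second as $B_{n,k}=\sum_s \binom{n-1}{s-1}x_s\,B_{n-s,k-1}$ it sorts partitions by the block containing a fixed element (say $n$). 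I would nonetheless present the generating-function argument, since it handles both statements uniformly and makes the \emph{product}-versus-\emph{derivative} dichotomy transparent.
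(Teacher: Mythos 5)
Your proof is correct, and it takes a genuinely different (and more self-contained) route than the paper. The paper does not prove these identities from scratch: it cites Cvijovi\'c, taking the first identity directly from his equation 1.4 (with $k_1=1$, $k_2=k-1$) and deriving the second by combining the first with his equation 2.3, which after the substitution $g_n(k)=k!B_{n,k}/n!$, $f_n=x_n/n!$ reads $\sum_{s}(sk-n)\frac{x_s}{s!}\frac{B_{n-s,k-1}}{(n-s)!}=0$; eliminating the $n$-term via the first identity then yields the second. Your generating-function argument replaces both citations with two one-line computations --- comparing coefficients in $\frac{1}{k}X(t)\cdot\frac{X(t)^{k-1}}{(k-1)!}=\frac{X(t)^{k}}{k!}$ and in its $t$-derivative --- which is exactly the kind of manipulation the paper itself uses later (e.g.\ in Lemma~\ref{lem FG} and Lemma~\ref{lem AB}), so it fits the surrounding machinery well and makes the product-versus-derivative origin of the two identities transparent. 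Your handling of the $s=0$ boundary term (the convention $x_0=0$ here, in contrast to the convention $x_0=-1$ used in Lemma~\ref{lem bell prod}) is the one point that genuinely needed care, and you addressed it correctly; the combinatorial reformulations via distinguished blocks are also valid and would serve equally well.
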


\begin{proof}
These are known.  The first is Cvijovi\'c \cite{Cbell} equation 1.4 with $1$ as $k_1$ and $k-1$ as $k_2$.  Then from Cvijovi\'c equation 2.3 with the substitution $g_n(k) = k!B_{n,k}/n!$ and $f_n = x_n/n!$ and with $k-1$ for $k$ we get
\[
\sum_{s=0}^n(sk-n)(k-1)!\frac{x_s}{s!}\frac{B_{n-s, k-1}(x_1, x_2, \ldots)}{(n-s)!}=0
\]
so we obtain the second equation
\[
\sum_{s=0}^ns\frac{x_s}{s!}\frac{B_{n-s, k-1}(x_1, x_2, \ldots)}{(n-s)!} = \frac{n}{k}\sum_{s=0}^n\frac{x_s}{s!}\frac{B_{n-s, k-1}(x_1, x_2, \ldots)}{(n-s)!}= \frac{1}{(n-1)!}B_{n,k}(x_1, x_2, \ldots)
\]
\end{proof}

The proof technique for the remaining identities follows that of Cvijovi\'c in \cite{Cbell} which he used to get the equations used above.

\begin{lemma}\label{lem FG}
Let $G(t) = \sum_{n\geq 0}g_n t^n$ and $F(t) = \sum_{n\geq 1}f_n t^n$.  Suppose $k\geq 1$ and 
\[
G(t) = F(t)^{k-1}
\]
then
\[
\sum_{i=0}^n(n(n-1) - ki(n-1))f_{i}g_{n-i} = 0
\]
\end{lemma}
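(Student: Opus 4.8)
The plan is to convert the hypothesis $G = F^{k-1}$ into a first-order differential relation and then read off the coefficient of a single power of $t$. First I would differentiate $G(t) = F(t)^{k-1}$ to get $G'(t) = (k-1)F(t)^{k-2}F'(t)$, then multiply through by $F(t)$ and use $F(t)^{k-1} = G(t)$ to eliminate the awkward intermediate power. This yields the clean identity
\[
F(t)\,G'(t) = (k-1)\,F'(t)\,G(t),
\]
which is just the logarithmic-derivative relation $G'/G = (k-1)F'/F$ cleared of denominators. For $k=1$ both sides vanish, consistent with $G \equiv 1$, so that degenerate case is handled automatically.

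Next I would extract coefficients, this being the step that mirrors Cvijovi\'c's technique used for the earlier lemmas. Writing $F(t) = \sum_{i\geq 1} f_i t^i$ and $G(t) = \sum_{j\geq 0} g_j t^j$, the coefficient of $t^{n-1}$ on the left is $\sum_{i+j=n} j\,f_i g_j$ and on the right is $(k-1)\sum_{i+j=n} i\,f_i g_j$; the $j$-weight kills the $j=0$ contribution on the left and the vanishing $f_0$ kills the $i=0$ contribution, so both sums may be taken over $i=0,\dots,n$ without change. Setting $j = n-i$ and collecting terms gives
\[
\sum_{i=0}^n \big[(n-i)-(k-1)i\big] f_i g_{n-i} = \sum_{i=0}^n (n-ki)\,f_i g_{n-i} = 0.
\]
Finally, multiplying this by the scalar $(n-1)$ produces exactly the stated form $\sum_{i=0}^n\big(n(n-1)-ki(n-1)\big)f_i g_{n-i}=0$; the factor $(n-1)$ plays no role in the derivation and is presumably present only to match the shape in which the identity is applied later.

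There is no serious obstacle here: the whole argument is a routine generating-function manipulation, and the only points requiring care are bookkeeping. I would double-check the index shift in $[t^{n-1}](FG')$ versus $[t^{n-1}](F'G)$ and confirm the degenerate regimes — for $k=1$ we have $G\equiv 1$, so $g_0=1$ and all higher $g_j$ vanish and every sum collapses to $0$; for $n\leq 1$ the prefactor $(n-1)$ (or the vanishing of $f_0$) makes the claim trivial. As a sanity check on signs I would verify the case $k=2$, where $G=F$ forces $g_i=f_i$ and the identity $\sum_i (n-2i) f_i f_{n-i}=0$ follows instead from pure antisymmetry under $i\leftrightarrow n-i$, which reassuringly recovers the same conclusion by an independent route.
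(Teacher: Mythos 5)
Your proof is correct, and it starts from the same place as the paper's — the logarithmic-derivative relation $G'(t)F(t)=(k-1)G(t)F'(t)$ — but then diverges in a useful way. The paper differentiates this relation once more to obtain the second-order identity $G''(t)F(t)+(2-k)G'(t)F'(t)+(1-k)G(t)F''(t)=0$, whose $t^{n-2}$ coefficient is exactly $\sum_{i=0}^n\bigl((n-i)(n-i-1)+(2-k)i(n-i)+(1-k)i(i-1)\bigr)f_ig_{n-i}$, and this polynomial in $i$ simplifies to $n(n-1)-ki(n-1)$, giving the stated form directly. You instead stop at the first-order relation, extract the $t^{n-1}$ coefficient to get the strictly stronger identity $\sum_{i=0}^n(n-ki)f_ig_{n-i}=0$, and multiply by the scalar $(n-1)$. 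Your index bookkeeping checks out (with the implicit convention $f_0=0$, which also justifies extending the sum to $i=0$), your treatment of the degenerate cases $k=1$ and $n\leq 1$ is sound, and the $k=2$ antisymmetry sanity check is a nice independent confirmation. What your route buys is one fewer differentiation and a cleaner intermediate statement; what the paper's route buys is that the coefficient identity lands exactly in the $(n-1)$-weighted shape in which it is consumed by Lemma~\ref{lem bell bparts} and the later propositions, with no need to remark that the weight is a removable scalar. Either version serves the downstream application equally well.
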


\begin{proof}
$G(t) = F(t)^{k-1}$ so taking a logarithmic derivative gives
\[
G'(t)F(t) = (k-1)G(t)F'(t).
\]
Taking another derivative and rearranging gives
\[
G''(t)F(t) + (2-k)G'(t)F'(t) + (1-k)G(t)F''(t) = 0
\]
Taking coefficients gives the equation.
\end{proof}

\begin{lemma}\label{lem AB}
Let $D(t) = \sum_{n\geq 0}d_n t^n$ and $C(t) = \sum_{n\geq 1}c_n t^n$.  Suppose 
\[
D(t) = \left(\frac{1}{1-C(t)}\right)^{s+1}
\]
then, with the convention $c_0=-1$, the following identities hold
\begin{enumerate}
\item $\sum_{i=0}^n\sum_{j=0}^i(2(s+1)(i-j)j + (n-i)i)d_{n-i}c_{i-j}c_j = 0$
\item $\sum_{i=0}^n\sum_{j=0}^i(2(n-i) + (s+1)i)d_{n-i}c_{i-j}c_j = 0$
\item $\sum_{i=0}^n\sum_{j=0}^i((s+1)(i-j)j(i-2) + (n-i)(j(j-1) + (i-j)(i-j-1)))d_{n-i}c_{i-j}c_j = 0$
\item $\sum_{i=0}^n\sum_{j=0}^i((s+1)i(i-1) + (n-i)i - (s+1)(j(j-1) + (i-j)(i-j-1)))d_{n-i}c_{i-j}c_j = 0$
\end{enumerate}
\end{lemma}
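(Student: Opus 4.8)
The plan is to mimic the logarithmic-derivative technique of Cvijovi\'c that powered Lemma~\ref{lem classical} and Lemma~\ref{lem FG}, and to organise all four identities as a single first-order relation read off in four different ways. Write $E(t)=1-C(t)$, so that the hypothesis becomes $D(t)\,E(t)^{s+1}=1$. Since $E(0)=1$ this is an identity of invertible formal power series, and differentiating it and dividing by $E^s$ (equivalently, taking a logarithmic derivative) yields the fundamental relation
\[
D'(t)\,E(t)+(s+1)\,D(t)\,E'(t)=0.
\]
The bridge to the combinatorial sums is the convention $c_0=-1$: it gives $1-C(t)=-\sum_{n\geq 0}c_n t^n$, hence $E(t)^2=\bigl(\sum_{n\geq 0}c_n t^n\bigr)^2$ and therefore $\sum_{j=0}^{i}c_{i-j}c_j=[t^i]E(t)^2$. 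Thus the inner $j$-sums in all four identities are coefficients of $E^2$, and after inserting the weights $(i-j)j$ or $j(j-1)+(i-j)(i-j-1)$ they become coefficients of $t^2(E')^2$ and $2\,t^2 E''E$ respectively.

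Next I would set up the elementary dictionary translating the polynomial weights in $n,i,j$ into differential operators: for $A=\sum a_m t^m$ and $B=\sum b_m t^m$ one has $\sum_i(n-i)a_{n-i}b_i=[t^n](tA')B$ and $\sum_i i\,a_{n-i}b_i=[t^n]A\,(tB')$, while the weights $i(i-1)$ and $(i-2)$ are produced by $t^2B''$ and by $tB'-2B$ in the obvious way. Feeding each weight through this dictionary, every double sum becomes the coefficient of $t^n$ in $D$ or $tD'$ multiplied by $E^2$, $(E^2)'=2EE'$, or $(E^2)''=2(E')^2+2EE''$, after which a short simplification collapses it onto the fundamental relation times an explicit prefactor. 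Concretely, identity (2) becomes $[t^n]\bigl(2tE\,(D'E+(s+1)DE')\bigr)$, and identity (3) becomes $[t^n]\bigl(2t^3E''\,(D'E+(s+1)DE')\bigr)$ once one checks that $tG'-2G=2t^3E'E''$ for $G=t^2(E')^2$. For identities (1) and (4) I would first use the algebraic identity $i(i-1)-j(j-1)-(i-j)(i-j-1)=2j(i-j)$ to see that their two weight polynomials are literally equal, so the two sums coincide; the common sum then collapses to $[t^n]\bigl(2t^2E'\,(D'E+(s+1)DE')\bigr)$. Each of these vanishes by the fundamental relation.

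The difficulties here are of bookkeeping rather than of concept. One must keep the $c_0=-1$ convention consistent throughout so the inner sums are genuinely coefficients of $E^2$ rather than of $C^2$, and one must track the $t$-powers and signs correctly when converting $tG'-2G$ into $2t^3E'E''$ in identity (3) and when seeing the $EE''$ contributions cancel in the $(E^2)''$ expansion. The one place where genuine insight is needed is recognising that no second differentiation of the fundamental relation is required: although the second-order weights $i(i-1)$ and $j(j-1)$ superficially suggest repeating the double-derivative step of Lemma~\ref{lem FG}, multiplying the first-order relation by $E''$ (appropriately $t$-scaled) already accounts for them. I expect this — together with spotting the weight identity that forces (1) and (4) to collapse to the same statement — to be the main obstacle to a clean exposition.
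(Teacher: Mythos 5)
Your proposal is correct and follows essentially the same route as the paper: take the logarithmic derivative of $D\,(1-C)^{s+1}=1$ to get $D'E+(s+1)DE'=0$ with $E=1-C$, then multiply by $E'$, $E$, and $E''$ (suitably $t$-scaled) and read off coefficients, using $c_0=-1$ to identify the inner $j$-sums with coefficients of $E^2$, $(E')^2$, and $E''E$. The only real divergence is your observation that the weights in (1) and (4) coincide via $i(i-1)-j(j-1)-(i-j)(i-j-1)=2j(i-j)$, which is a correct and mildly tidier shortcut than the paper's separate derivation of (4) from $(s+1)D((1-C)^2)''=-2D'(1-C)(1-C)'+2(s+1)D(1-C)(1-C)''$.
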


\begin{proof}
$D(t) = (1-C(t))^{-(s+1)}$ so taking the logarithmic derivative we get
\begin{equation}\label{eq log der}
D'(t)(1-C(t)) = (s+1)D(t)C'(t)
\end{equation}
So
\[
2(s+1)D(t)C'(t)C'(t) = 2D'(t)(1-C(t))C'(t) = -D'(t)((1-C(t))^2)'
\]
So
\[
2(s+1)D(t)((1-C(t))')^2 + D'(t)((1-C(t))^2)' = 0
\]
Taking coefficients gives the first equation.

Returning to \eqref{eq log der}
\[
2D'(t)(1-C(t))^2 = 2(s+1)D(t)C'(t)(1-C(t)) = -(s+1)D(t)((1-C(t))^2)'
\]
so
\[
2D'(t)(1-C(t))^2 + (s+1)D(t)((1-C(t))^2)' = 0
\]
Taking coefficients gives the second equation.

Similarly, calculate
\[
(s+1)D(t)(((1-C(t))')^2)' = -2D'(t)(1-C(t))(1-C(t))''
\]
and
\[
(s+1)D(t)((1-C(t))^2)'' = -2D'(t)(1-C(t))(1-C(t))' + 2(s+1)D(t)(1-C(t))(1-C(t))''
\]
Taking coefficients gives the third and fourth equations.
\end{proof}

\begin{lemma}\label{lem bell bparts}For $1 \leq k \leq n$ integers
\[
\sum_{s=1}^{n-k+1}\frac{x_s}{s!(n-s)!}B_{n-s, k-1}(x_1, x_2, \ldots) (n(n-1)-ks(n-1)) = 0
\]
\end{lemma}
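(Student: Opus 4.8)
The plan is to obtain this as an immediate specialization of Lemma~\ref{lem FG}, choosing the series $F$ and $G$ so that the composition formula \eqref{eq composition 1} reproduces exactly the Bell polynomials appearing in the statement.

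First I set $F(t) = \sum_{j\geq 1} x_j t^j/j!$, so that $f_j = x_j/j!$ for $j \geq 1$ and $f_0 = 0$. Applying \eqref{eq composition 1} with $k-1$ in place of $k$ gives
\[
F(t)^{k-1} = (k-1)!\sum_{m\geq k-1} B_{m,k-1}(x_1,x_2,\ldots)\frac{t^m}{m!},
\]
so setting $G(t) = F(t)^{k-1} = \sum_m g_m t^m$ I read off $g_m = (k-1)!\,B_{m,k-1}(x_1,x_2,\ldots)/m!$. (For $k=1$ this is consistent with $G = 1$ and $B_{m,0} = \delta_{m,0}$, so the argument needs no separate base case.)

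Next I invoke Lemma~\ref{lem FG}, whose hypothesis $G = F^{k-1}$ is now satisfied. Its conclusion $\sum_{i=0}^n (n(n-1) - ki(n-1)) f_i g_{n-i} = 0$ becomes, after substituting the expressions for $f_i$ and $g_{n-i}$ and cancelling the nonzero constant $(k-1)!$,
\[
\sum_{i=0}^n \bigl(n(n-1) - ki(n-1)\bigr)\frac{x_i}{i!}\frac{B_{n-i,k-1}(x_1,x_2,\ldots)}{(n-i)!} = 0.
\]
The $i=0$ term drops because $f_0 = 0$, so the sum effectively begins at $i=1$.

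Finally I reconcile the summation range. Renaming $i$ as $s$, the sum runs to $s = n$, whereas the claimed identity stops at $s = n-k+1$. These agree because $B_{n-s,k-1}$ vanishes as soon as $n-s < k-1$, that is whenever $s > n-k+1$ (a set of fewer than $k-1$ elements admits no partition into $k-1$ nonempty blocks). Thus every term with $s > n-k+1$ is identically zero and may be discarded, yielding precisely the stated identity. I expect no genuine obstacle: the content is entirely carried by Lemma~\ref{lem FG}, and the only points requiring care are the index bookkeeping when translating between $f_i,g_{n-i}$ and the Bell-polynomial form, and the truncation of the range through the vanishing of the Bell polynomials below the diagonal.
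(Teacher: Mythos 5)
Your proposal is correct and follows exactly the paper's own argument: set $f_s = x_s/s!$ and $g_s = (k-1)!B_{s,k-1}(x_1,x_2,\ldots)/s!$, use the composition formula \eqref{eq composition 1} to verify $G = F^{k-1}$, apply Lemma~\ref{lem FG}, and cancel $(k-1)!$. The extra care you take with the $i=0$ term and the truncation of the summation range via the vanishing of $B_{n-s,k-1}$ for $s > n-k+1$ is sound and merely makes explicit what the paper leaves implicit.
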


\begin{proof}
Let $f_s = x_s/s!$ and let $g_s = (k-1)!B_{s, k-1}(x_1, x_2, \ldots)/s!$.  By \eqref{eq composition 1}, apply Lemma~\ref{lem FG} and cancel the $(k-1)!$ to obtain the result.
\end{proof}

\begin{lemma}\label{lem bell prod} With the convention $x_0=-1$
\begin{enumerate}
\item \[
\sum_{i=0}^n\sum_{j=0}^i(2(s+1)(i-j)j + (n-i)i)\frac{x_{i-j}}{(i-j)!}\frac{x_j}{j!}\sum_{\ell=0}^n (s+\ell)!\frac{B_{n-i, \ell}(x_1, x_2, \ldots)}{(n-i)!} = 0 
\]
\item
\[
\sum_{i=0}^n\sum_{j=0}^i(2(n-i) + (s+1)i)\frac{x_{i-j}}{(i-j)!}\frac{x_j}{j!}\sum_{\ell=0}^n (s+\ell)!\frac{B_{n-i, \ell}(x_1, x_2, \ldots)}{(n-i)!} = 0
\]
\item
\[
\sum_{i=0}^n\sum_{j=0}^i((s+1)(i-j)j(i-2) + (n-i)(j(j-1) + (i-j)(i-j-1)))\frac{x_{i-j}}{(i-j)!}\frac{x_j}{j!}\sum_{\ell=0}^n (s+\ell)!\frac{B_{n-i, \ell}(x_1, x_2, \ldots)}{(n-i)!} = 0
\]
\item
\[
\sum_{i=0}^n\sum_{j=0}^i((s+1)i(i-1) + (n-i)i - (s+1)(j(j-1) + (i-j)(i-j-1)))\frac{x_{i-j}}{(i-j)!}\frac{x_j}{j!}\sum_{\ell=0}^n (s+\ell)!\frac{B_{n-i, \ell}(x_1, x_2, \ldots)}{(n-i)!} = 0
\]
\end{enumerate}
\end{lemma}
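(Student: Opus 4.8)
The plan is to derive all four identities from Lemma~\ref{lem AB} by the same substitution that converts a power-series identity into a Bell-polynomial identity, exactly as Lemma~\ref{lem bell bparts} was obtained from Lemma~\ref{lem FG}. Concretely, I would set $c_n = x_n/n!$ for $n\geq 1$ together with $c_0 = x_0 = -1$, so that $C(t) = \sum_{n\geq 1}\frac{x_n}{n!}t^n$ and the convention $c_0=-1$ of Lemma~\ref{lem AB} matches the convention $x_0=-1$ here. The two inner factors $c_{i-j}$ and $c_j$ in each equation of Lemma~\ref{lem AB} then become $\frac{x_{i-j}}{(i-j)!}$ and $\frac{x_j}{j!}$, which are precisely the factors appearing in Lemma~\ref{lem bell prod}, while the polynomial prefactors in $i,j,n$ are untouched by the substitution and so carry over verbatim.

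It then remains to identify the coefficient $d_{n-i}$ of $D(t)=(1-C(t))^{-(s+1)}$ with the Bell-polynomial sum attached to $d_{n-i}$ in the target identities. For this I would expand $D(t)$ by the binomial series as $D(t)=\sum_{k\geq 0}\binom{s+k}{k}C(t)^k$, and then apply the composition formula~\eqref{eq composition 1} to write $C(t)^k = k!\sum_{m\geq k}B_{m,k}(x_1,x_2,\ldots)\frac{t^m}{m!}$. Collecting the coefficient of $t^m$ and using $\binom{s+k}{k}k! = (s+k)!/s!$ gives
\[
d_m = \frac{1}{s!\,m!}\sum_{\ell=0}^m (s+\ell)!\,B_{m,\ell}(x_1,x_2,\ldots).
\]
Since $B_{m,\ell}=0$ for $\ell>m$, the upper limit may be replaced by $n$ without change, and the coefficient $\sum_{\ell=0}^n (s+\ell)!\frac{B_{n-i,\ell}(x_1,x_2,\ldots)}{(n-i)!}$ appearing in Lemma~\ref{lem bell prod} is thus exactly $s!\,d_{n-i}$.

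With these identifications in place, each of the four equations of Lemma~\ref{lem AB} becomes the corresponding equation of Lemma~\ref{lem bell prod} after multiplying through by the nonzero constant $s!$, which preserves the vanishing. I expect no genuine obstacle here: the conceptual content is already carried by Lemma~\ref{lem AB}, and the present lemma is essentially a dictionary translation. The only points requiring care are bookkeeping: confirming that the $c_0=-1$ convention is consistent with $x_0=-1$ so that the boundary terms $j=0$ and $j=i$ are treated correctly, and tracking the overall constant $s!$ (equivalently the identity $\binom{s+k}{k}k!=(s+k)!/s!$) so that the Bell-polynomial coefficient is matched up to this harmless factor rather than exactly.
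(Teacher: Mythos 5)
Your proposal is correct and follows essentially the same route as the paper: both identify $c_n = x_n/n!$ and $d_n = \sum_{\ell}\frac{(s+\ell)!}{s!\,n!}B_{n,\ell}(x_1,x_2,\ldots)$ via the composition formula~\eqref{eq composition 1} and the binomial series for $(1-C(t))^{-(s+1)}$, then invoke Lemma~\ref{lem AB} and cancel the factor $s!$. The only cosmetic difference is direction of derivation (you expand $D(t)$ to read off $d_m$, the paper sums the Bell-polynomial generating function to recognize $s!\,D(t)$), which is immaterial.
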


\begin{proof}
By \eqref{eq composition 1}, the composition formula for Bell polynomials, 
\begin{align*}
\sum_{n=0}^\infty\sum_{\ell=0}^n \frac{t^n(s+\ell)!}{n!}B_{n, \ell}(x_1, x_2, \ldots)
& = \sum_{i=0}^{\infty}\frac{(s+i)!}{i!}\sum_{j=1}^{\infty}\left(\frac{x_jt^j}{j!}\right)^i \\
& = s!\sum_{i=0}^{\infty}\binom{s+i}{i}C(t)^i \\
& = s!\sum_{i=0}^{\infty}\binom{-(s+1)}{i}(-C(t))^i \\
& = s!(1-C(t))^{-(s+1)}
\end{align*}
Where $C(t)= \sum_{i=1}^{\infty}\frac{x_i}{i!}t^i$.
So with $c_n = x_n/n!$ and $d_n = \sum_{\ell=0}^n \frac{(s+\ell)!}{s!n!}B_{n, \ell}(x_1, x_2, \ldots)$ we can apply Lemma~\ref{lem AB}.  Cancelling $s!$ gives the results.
\end{proof}

\section{Tree-level amplitudes}\label{treeamp}

\subsection{Reduction of tree-level problem to set partitions}\label{sec reduction}

The set up from \cite{KVdiff} was outlined in section~\ref{subsec prev}.  Consider in particular \eqref{L2}, the transformed part of which is repeated here
\[
L_F(\phi) = \frac{1}{2} \partial_{\mu} F(\phi) \, \partial^{\mu} F(\phi) -
 \frac{m^2}{2} F(\phi) F(\phi)
 \]
 where (see \eqref{defF})
 \[
F(\phi) = \sum_{k=0}^{\infty} a_k\phi^{k+1} = \phi + a_1 \phi^2 + a_2 \phi^3 + \ldots \qquad (\mathrm{with} \; a_0=1).
 \]
Expanding in the field, as noticed before, we obtain two types of vertex of each valence, massive vertices and kinematic vertices, and we can read off the corresponding Feynman rules from expanded Lagrangian.  Specifically, Feynman rules for the propagator are
\[
\frac{i}{p^2-m^2},
\]  
where $m$ is the mass and $p$ the momentum,
Feynman rules for the kinematic vertex of degree $n$ with momenta $p_1, p_2, \ldots, p_n$ for the incident edges are
\[
i \frac{d_{n-2}}{2}(p_1^2 + p_2^2 + \cdots + p_n^2),
\]
where
\[
d_n = n! \sum_{j=0}^{n}(j+1)(n-j+1)a_ja_{n-j},
\]
and finally, the Feynman rules for the massive vertex of degree $n$ are
\[
ic_{n-2}
\]
where
\[
c_n = -m^2\frac{(n+2)!}{2}\sum_{j=0}^na_ja_{n-j}.
\]
The tree-level $n$-point amplitude is the sum over all trees with $n$ external edges built out of these vertices.  
{}From a combinatorial perspective this is a sum over all trees with 
\begin{itemize}
  \item two different types of vertices (kinematic and massive),
  \item a momentum variable assigned to each edge (both internal and external),
  \item momentum conservation at each vertex (i.e. assign arbitrary orientations to the edges and then the sum of the momenta coming in each vertex must equal the sum of the momenta going out), 
  \item $n$ external edges (equivalently, $n$ edges which have a leaf at one end),
  \item the external edges labelled, everything else unlabelled, and
  \item $p^2=m^2$ for any momentum $p$ labelling an external edge (this is the on-shell condition, let $x_e:=p_e^2-m^2$ name the corresponding 
  off-shell variables for external edges $e$ with momentum $p_e$).
\end{itemize}
Each tree contributes the product of the Feynman rules of its internal edges and vertices (external edges contribute nothing).

As already observed, we can immediately simplify this by instead using a single vertex of each valence $n\geq 3$ which is the sum of the kinematic and massive vertices of valence $n$.  We will do this for the remainder of the section and so the $n$-point amplitude is simply the sum over all trees with $n$ external edges using only these combined vertices.
Note for later  that a different decomposition of such a combined vertex $v$ of valence $n$ is also helpful.  Specifically, $v$
can also be decomposed into $n+1$ vertices 
\[v=\sum_{j=0}^n v(j),\, v(0)=c_{n-2}+nm^2\frac{d_{n-2}}{2},\, v(j)=\frac{d_{n-2}x_j}{2},\,j>0,\] 
a fact which is useful when we study off-shell tree amplitudes.

As an example, consider the contribution to the $4$-point amplitude.  Thinking with the combined vertices we can either have a single 4-valent vertex with four external legs or one of three ways to put the four external legs enumerated $1,\ldots,4$ together using two three-point vertices connected by an internal edge.  Considering instead the fully expanded vertices which will be used in the off-shell discussion, the contribution to the $4$-point amplitude comes from 53 possible trees: For each of the three ways to put the external legs together using two three-point vertices there are four choices $v(i)$ for each three-point vertex $v$, so this gives $3\times 4 \times 4=48$ contributions. The remaining five are contributed by the five choices $w(i)$ for a four-valent vertex $w$.

\medskip

Suppose now we have an internal edge $e$ of a tree.  $e$ splits the tree into two pieces.  Thinking of drawing $e$ vertically, call these pieces the edges \emph{below} and \emph{above} $e$.  Suppose there are $n$ external edges below $e$.  Letting these edges be labelled $1, \ldots, n$ with corresponding momenta $p_1, \ldots, p_n$ then we see that edge $e$ contributes 
\[
\frac{i}{(p_1+\cdots+p_n)^2 - m^2}
\]
Now consider summing over all possible subtrees with $n$ external edges labelled by $1, \ldots, n$ below $e$.  For each such subtree apply Feynman rules to edge $e$ and the vertices and edges below $e$.  Define the result of this calculation to be $b_n$.  Let's work through some initial $b_n$ explicitly to clarify the definition.  For $b_1$, $e$ simply is external edge $1$, so this contributes $1$.  For $b_2$ the only contribution is
\[
\includegraphics{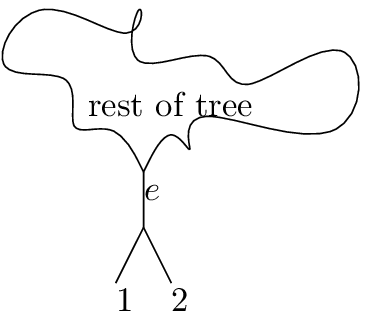}
\]
and we are only looking at the $e$ itself and the parts below $e$, namely, the contribution is
\begin{align*}
  \frac{i^2\frac{1}{2}d_1(p_1^2+p_2^1 + (p_1+p_2)^2) + i^2c_1}{(p_1+p_2)^2-m^2}
  & = -\frac{2a_1(2m^2 + (p_1+p_2)^2) - m^26a_1}{(p_1+p_2)^2-m^2} \\
  & = -\frac{2a_1((p_1+p_2)^2 - m^2)}{(p_1+p_2)^2-m^2} \\
  & = -2a_1
\end{align*}
so $b_2 = -2a_1$.
For $b_3$ there are 4 terms which contribute
\[
\includegraphics{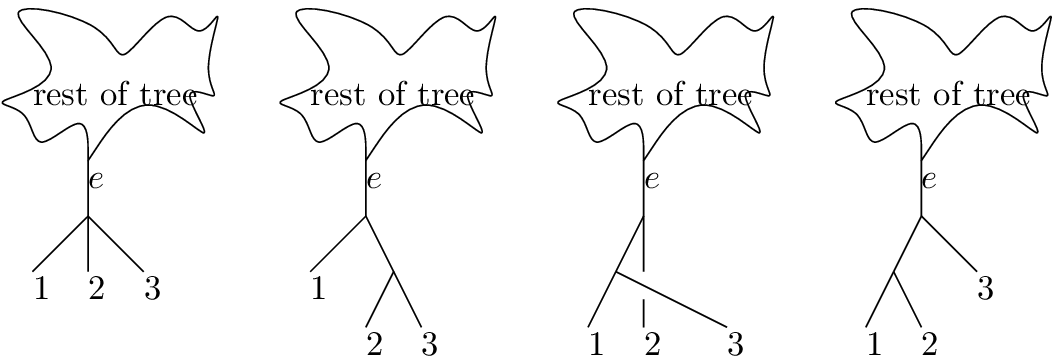}
\]
The last three of them are permutations of the labels on the same subtree shape.  Furthermore, in the last three subtrees, we don't need to recalculate the subtree consisting of an internal edge and two leaves because we already know it contributes $b_2$, so graphically we have 
\[
\includegraphics{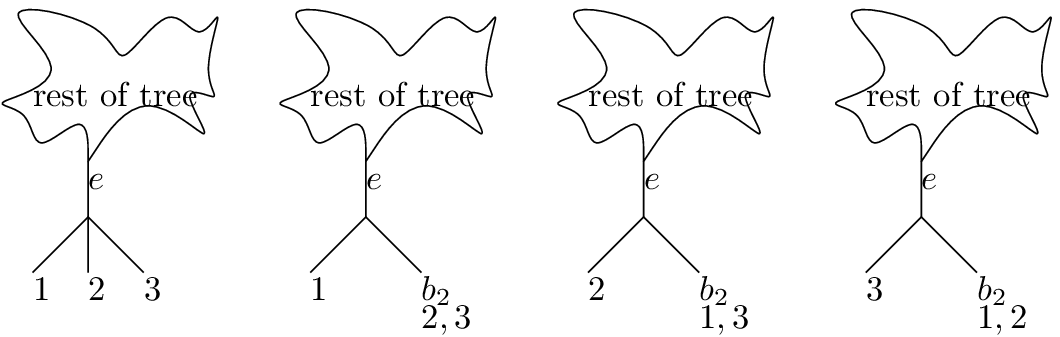}
\]
Calculating this we get
\begin{align*}
  & \frac{i^2(\frac{d_2}{2}(p_1^2+p_2^2+p_3^2+(p_1+p_2+p_3)^2) + c_2)}{(p_1+p_2+p_3)^2-m^2} + \frac{i^2(\frac{d_1}{2}(p_1^2 + (p_2+p_3)^2 + (p_1+p_2+p_3)^2) +c_1)b_2}{(p_1+p_2+p_3)^2-m^2} \\
   & + \frac{i^2(\frac{d_1}{2}(p_2^2 + (p_1+p_3)^2 + (p_1+p_2+p_3)^2) +c_1)b_2}{(p_1+p_2+p_3)^2-m^2} + \frac{i^2(\frac{d_1}{2}(p_3^2 + (p_1+p_3)^2 + (p_1+p_2+p_3)^2) +c_1)b_2}{(p_1+p_2+p_3)^2-m^2}
\end{align*}
Subbing in and completing the calculation gives $-6a_2+12a_1^2$.
In general the calculation consists of partitioning the external edges below $e$ in all possible ways, where each part of the partition contributes a factor of a previously calculated $b_i$ (see Proposition~\ref{prop bn rec} for this fact stated more formally).  Continuing to work out the first few $b_n$ explicitly we get
\begin{align*}
b_1 & = 1\\
b_2 & = -2a_1 \\
b_3 & = -6a_2 + 12a_1^2 \\
b_4 & = -24a_3 + 120a_1a_2 - 120 a_1^3 \\
b_5 & = -120a_4 + 720a_1a_3 + 360a_2^2 -2520a_1^2a_2 + 1680a_1^4
\end{align*}
Note that the $b_n$ do not depend on $m$ or the $p_i$ which is crucial, see Corollary~\ref{massindep}.  The bulk of the work for the main results consists in proving the general form for the $b_i$ given in Theorem~\ref{thm bn}.  Specifically, we will prove that
\begin{equation}\label{eq result}
b_{n+1} = \sum_{k=0}^{n} \frac{(n+k)!}{n!}B_{n,k}(-1!a_1, -2!a_2, -3!a_3, \ldots)
\end{equation}
Where the $B_{n,k}$ are the Bell polynomials (see Section~\ref{sec bell}).  The vanishing that we want is then a straightforward consequence, see Theorem~\ref{thm vanish}.

The definition of $b_n$ lets us give the following recursive expression for it.  

\begin{prop}\label{prop bn rec}
\[
b_n = - \sum_{\substack{k > 1 \\ P_1 \cup \cdots \cup P_k = \{1,\ldots, n\} \\ P_i \text{ disjoint, nonempty}}} b_{|P_1|}\cdots b_{|P_k|}\times\]
\[\times  \frac{\frac{(k-1)!}{2}\sum_{j=0}^{k-1}a_ja_{k-1-j}\left(-m^2(k+1)(k) + (j+1)(k-j)\sum_{i=1}^{k}\left(\sum_{e \in P_i}p_e\right)^2\right)}{(p_1+\cdots + p_n)^2 - m^2}.
\]
\end{prop}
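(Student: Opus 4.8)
The plan is to peel off the top vertex of each subtree and to recognize the leftover pieces as smaller instances of the same quantity $b_m$. Fix $n \geq 2$ and an internal edge $e$ with the $n$ labelled external edges $1,\dots,n$ below it carrying momenta $p_1,\dots,p_n$, and write $P = p_1+\cdots+p_n$ for the momentum on $e$. By definition $b_n$ is the sum, over all subtree shapes on these labelled legs, of the product of the Feynman rules of $e$ together with every vertex and internal edge strictly below $e$. In each such subtree there is a unique vertex $v$ incident to $e$. Since the expansion of $F$ begins at $\phi$ there is no bivalent interaction vertex, so $v$ has valence $k+1$ with $k \geq 2$; this is exactly the condition $k>1$ in the statement. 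The $k$ edges of $v$ other than $e$ carry the $k$ subtrees below $v$, whose external-leg sets form an unordered set partition $P_1 \cup \cdots \cup P_k = \{1,\dots,n\}$ into nonempty blocks.

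First I would establish that the Feynman-rule contribution of each subtree factors as the propagator of $e$, times the rule of $v$, times the contributions of the $k$ subtrees below $v$. Because the external legs are labelled while everything else is unlabelled, a tree on distinctly labelled leaves is rigid (has no nontrivial automorphism), so each labelled subtree shape is counted exactly once and no symmetry factors intervene. The contribution of the $i$-th subtree below $v$, \emph{including} the propagator of the edge joining it to $v$, is by definition precisely $b_{|P_i|}$, with $b_1=1$ recording a bare external leg attached directly to $v$ and carrying no internal propagator. Summing over subtree shapes thus reorganizes into a sum over $k\geq 2$ and over set partitions $\{P_1,\dots,P_k\}$, the $i$-th block contributing the factor $b_{|P_i|}$.

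Next I would insert the two analytic factors. The propagator of $e$ is $i/(P^2-m^2)$, supplying the denominator. The combined vertex $v$ of valence $k+1$ contributes its kinematic and massive parts together, $i\big(\tfrac{d_{k-1}}{2}\,S_v + c_{k-1}\big)$, where $S_v$ is the sum of squares of the momenta on the $k+1$ edges at $v$: the edge $e$ contributes $P^2$ and the $i$-th descending edge contributes $\big(\sum_{\ell\in P_i}p_\ell\big)^2$, the on-shell value $p_\ell^2=m^2$ being the $|P_i|=1$ special case. The two factors of $i$ combine to $i^2=-1$, which is the global minus sign. It then remains to substitute $d_{k-1} = (k-1)!\sum_{j=0}^{k-1}(j+1)(k-j)a_j a_{k-1-j}$ and $c_{k-1} = -m^2\tfrac{(k+1)!}{2}\sum_{j=0}^{k-1}a_j a_{k-1-j}$ and to factor out the common $\tfrac{(k-1)!}{2}\sum_{j}a_j a_{k-1-j}$; here the identity $(k+1)! = (k+1)k\,(k-1)!$ is what absorbs the mass term into the coefficient $-m^2(k+1)k$, producing the numerator in the proposition over the denominator $P^2-m^2$.

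The conceptual content sits entirely in the first two paragraphs and the remainder is bookkeeping. The step I expect to require the most care is the momentum accounting at $v$: recording that $e$ carries the total momentum $P$ while each descending edge carries the partial sum over its block, and applying $p_\ell^2=m^2$ only to genuine external legs. Once the incident momenta are pinned down, collecting the kinematic and massive pieces into a single $\sum_{j} a_j a_{k-1-j}(\cdots)$ through the factorial identity is routine, and the explicit low-order values $b_2,b_3,\dots$ computed above serve as a consistency check on the signs and constants.
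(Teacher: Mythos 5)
Your proof is correct and follows essentially the same route as the paper's: decompose each contributing subtree at the lower vertex of $e$, recognize the $k\geq 2$ hanging subtrees as a set partition of $\{1,\dots,n\}$ each contributing a factor $b_{|P_i|}$ (with $b_1=1$ for a bare leg), and multiply by the combined vertex rule and the propagator of $e$, the two factors of $i$ supplying the overall sign. One remark worth recording: your vertex sum correctly includes the $(p_1+\cdots+p_n)^2$ contribution of the edge $e$ itself, which is absent from the displayed formula of the proposition as printed but is clearly intended, since the paper's subsequent specializations (the $2nm^2$ in the definition of $b'_n$ and the $+\binom{n}{2}$ in the definition of $b''_n$) only follow if that term is present — so you have proved the statement as actually used.
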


\begin{proof}
Consider the edges incident to the lower vertex of $e$ other than $e$ itself.  Each defines a subtree (possibly just the edge itself as an internal edge).  
Fix a given assignment of the external edges to these subtrees.  Summing over all trees consistent with this assignment, the contribution of the each subtree is given by $b_i$ where $i$ is the number of external edges assigned to this subtree.  
All such assignments of external edges are given by set partitions of $\{1, \ldots, n\}$ with number of parts equal to the number of edges, other that $e$, incident to the lower vertex of $e$.  

Therefore $b_n$ is the sum over all possible degrees for the lower vertex of $e$ and over all possible set partitions with a compatible number of parts of the product of $b_i$ with $i$ running over the sizes of the parts of the set partition multiplied by the contribution of the lower vertex of $e$ and the contribution of $e$ itself.  This gives the statement of the proposition.
\end{proof}

We can view Proposition~\ref{prop bn rec} along with the initial condition $b_1=1$ as an alternate definition of $b_n$.  In this view, the main result is to prove that the solution to the recurrence of Proposition~\ref{prop bn rec} with initial condition $b_1=1$ is given by \eqref{eq result}.  This is a purely combinatorial problem.

From the point of view of the combinatorics $m$ is a formal variable and so in the recurrence of Proposition~\ref{prop bn rec} we can consider separately the part with $m^2$ in the numerator and the part with no $m^2$ in the numerator.  It suffices to show that the solution to these two parts separately are given by \eqref{eq result} with appropriate weights so that factoring out this common solution, the remaining coefficient of $m^2$ part and the remaining dot products in the $p_i$ occur with the correct coefficients to exactly cancel the denominator.  The remainder of this section works out the required definitions.

The $p_i$ are on-shell, so $p_i^2=m^2$.  Thus the square of any $j$ distinct $p_i$ is a sum of $jm^2$ and $j(j-1)/2$ terms of the form $2p_{i_1}\cdot p_{i_2}$.  Define, then, $b_n'$ to be given by the $m^2$ part of the recurrence of Proposition~\ref{prop bn rec}.  Specifically define
\begin{align*}
b'_n & = - \sum_{\substack{k > 1 \\ P_1 \cup \cdots \cup P_k = \{1,\ldots, n\} \\ P_i \text{ disjoint, nonempty}}} b'_{|P_1|}\cdots b'_{|P_k|} \frac{\frac{(k-1)!}{2}\sum_{j=0}^{k-1}a_ja_{k-1-j}\left(-m^2(k+1)(k) + 2nm^2(j+1)(k-j)\right)}{(n-1)m^2} \\
& = - \sum_{k = 2}^{n} B_{n,k}(b_1', b_2', \ldots) \frac{(k-1)!}{2}\sum_{j=0}^{k-1}a_j a_{k-1-j}\left(\frac{2n(j+1)(k-j) - k(k+1)}{n-1}\right)
\end{align*}

Now consider the dot product terms.  Since the entire sum is fully symmetric in the $p_i$, we do not need to keep track of which dot products appear.  We simply need to count the total number of dot product terms and by symmetry we know each possible dot product will appear equally.  Define, then, $b_n''$ to be given by the dot product part of the recurrence of Proposition~\ref{prop bn rec}.  Specifically define
\[
b''_n  = - \sum_{\substack{k > 1 \\ P_1 \cup \cdots \cup P_k = \{1,\ldots, n\} \\ P_i \text{ disjoint, nonempty}}} b''_{|P_1|}\cdots b''_{|P_k|} \frac{(k-1)!}{2}\sum_{j=0}^{k-1}a_ja_{k-1-j}\left(\sum_{i=1}^{k}\binom{|P_i|}{2} + \binom{n}{2}\right)\frac{(j+1)(k-j)}{\binom{n}{2}}
\]

\subsection{Tree-level results}

\begin{lemma}\label{lem equiv}
  Let $a_n$ and $b_n$ be sequences with $a_0=1$, $b_1=1$.
  The following are equivalent
  \begin{enumerate}
    \item $b_{n+1} = \sum_{k=0}^{n}\frac{(n+k)!}{n!}B_{n,k}(-a_1, -2!a_2, -3!a_3, \ldots)$  for $n\geq 0$
    \item $B_{m,m-n}(b_1, b_2, b_3, \ldots) = \sum_{k=0}^{n}\frac{(m-1+k)!}{(m-1-n)!n!}B_{n,k}(-a_1, -2!a_2, -3!a_3, \ldots)$  for $m > n \geq 0$
  \end{enumerate}
\end{lemma}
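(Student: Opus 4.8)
The two implications are very unequal in difficulty. The direction (2)$\Rightarrow$(1) is immediate: specialize $m=n+1$, so that $m-n=1$ and $m-1-n=0$; since $B_{n+1,1}(b_1,b_2,\ldots)=b_{n+1}$ and $(m-1-n)!=0!=1$, the right-hand side of (2) collapses to exactly the right-hand side of (1). All the content is therefore in (1)$\Rightarrow$(2), and my plan is to pass to exponential generating functions and recognize that statement (1) is nothing but the assertion that the sequence $(b_j)$ is the compositional inverse of $F$, after which (2) drops out of Lagrange inversion.

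First I would set up the dictionary. Write $B(t)=\sum_{j\ge 1}b_j t^j/j!$ and recall $F(t)=\sum_{k\ge 0}a_k t^{k+1}$, so that $F(t)/t=1+\sum_{j\ge1}a_jt^j$. With the Bell arguments $x_j:=-j!\,a_j$ one has $A(t):=\sum_{j\ge1}x_j t^j/j!=-\sum_{j\ge1}a_jt^j=1-F(t)/t$. Using the composition formula \eqref{eq composition 1} to write $B_{n,k}=n!\,[t^n]A^k/k!$, and summing the resulting binomial series $\sum_k\binom{n+k}{n}A^k=(1-A)^{-(n+1)}$, statement (1) becomes the single closed form
\[ b_{n+1}=n!\,[t^n]\bigl(t/F(t)\bigr)^{n+1}=n!\,[t^{-1}]F(t)^{-(n+1)}. \]
By the Lagrange--B\"urmann formula $[t^{m}]G(t)=\tfrac1m[t^{-1}]F(t)^{-m}$ for $G$ the compositional inverse of $F$, this says precisely $[t^{n+1}]G=b_{n+1}/(n+1)!$, i.e. $B=G=F^{\langle-1\rangle}$; each step is reversible, so (1) is \emph{equivalent} to $B=F^{\langle-1\rangle}$.

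With $B=F^{\langle-1\rangle}$ in hand I would verify (2) by computing both sides as coefficient extractions. The composition formula gives the left-hand side as $\tfrac{m!}{(m-n)!}[t^m]B(t)^{m-n}$, while the same binomial summation (now $\sum_k\binom{m-1+k}{k}A^k=(1-A)^{-m}$) gives the right-hand side as
\[ \frac{(m-1)!}{(m-1-n)!}\,[t^n](1-A)^{-m}=\frac{(m-1)!}{(m-1-n)!}\,[t^{n-m}]F(t)^{-m}. \]
Applying the power form of Lagrange inversion, $[t^m]B^{m-n}=\tfrac{m-n}{m}[t^{n-m}]F^{-m}$, to the left-hand side and simplifying the factorials via $\tfrac{m!}{(m-n)!}\cdot\tfrac{m-n}{m}=\tfrac{(m-1)!}{(m-1-n)!}$ reproduces the right-hand side exactly. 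This closes the cycle (1)$\Rightarrow B=F^{\langle-1\rangle}\Rightarrow$(2)$\Rightarrow$(1) and proves the equivalence.

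The only real obstacles are bookkeeping ones: keeping the two binomial closed forms and the index shifts in the Lagrange step straight, and justifying the residue extractions $[t^{-1}]F^{-(n+1)}$ and $[t^{n-m}]F^{-m}$ as honest coefficients of formal Laurent series, which is legitimate since $F(t)=t+O(t^2)$ gives $F^{-m}=t^{-m}(1+O(t))$. If one prefers to stay inside the paper's toolkit rather than invoke Lagrange--B\"urmann, Theorem~\ref{thm BGW} supplies an in-house inverse relation and Lemma~\ref{lem shift} identifies the relevant shifted sequence $y_j=b_{j+1}/(j+1)$, i.e. $b_2/2,b_3/3,\ldots$, though matching coefficients to the single-parameter binomials there is messier; alternatively one can give a fully elementary induction on $m$ using the first identity of Lemma~\ref{lem classical} to reduce $B_{m,m-n}(b)$ to $B_{m-s,m-n-1}(b)$ and feeding in (1) for the $b_s$. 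I expect the generating-function route above to be by far the cleanest.
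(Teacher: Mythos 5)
Your argument is correct, and for the hard direction (1)$\Rightarrow$(2) it takes a genuinely different route from the paper. The paper's proof handles (2)$\Rightarrow$(1) exactly as you do (specialize $m=n+1$), but for the converse it stays entirely inside the Bell-polynomial toolkit: it invokes the Birmajer--Gil--Weiner inverse relation, Theorem~\ref{thm BGW}, with $a=b=1$, $\lambda=m-n-1$, $x_i=-i!\,a_i$, computes that the transformed sequence is $y_n=b_{n+1}/(n+1)$, and then uses the shift identity Lemma~\ref{lem shift} to rewrite $B_{m,m-n}(b_1,b_2,\ldots)$ as a sum of $B_{n,k}(b_2/2,b_3/3,\ldots)=B_{n,k}(y_1,y_2,\ldots)$, so that the BGW conclusion becomes statement (2) after bookkeeping of factorials. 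Your fallback remark at the end --- that $y_j=b_{j+1}/(j+1)$ is the relevant shifted sequence for Theorem~\ref{thm BGW} and Lemma~\ref{lem shift} --- is precisely the paper's proof, so you identified it correctly even while preferring another path. What your Lagrange--B\"urmann route buys is conceptual transparency: it exposes that condition (1) is exactly the statement $B=F^{\langle-1\rangle}$, i.e.\ that the $b_j$ undo the field diffeomorphism, which is the structural fact the paper itself says (at the start of Section~\ref{sec bell}) a generating-function treatment ought to reveal; your coefficient extractions and the factorial simplification $\frac{m!}{(m-n)!}\cdot\frac{m-n}{m}=\frac{(m-1)!}{(m-1-n)!}$ all check out, and the residues $[t^{-1}]F^{-(n+1)}$, $[t^{n-m}]F^{-m}$ are legitimate formal Laurent coefficients since $F(t)=t+O(t^2)$. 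What the paper's route buys is self-containedness within the identities it has already catalogued (and which it reuses heavily in Propositions~\ref{prop m part} and \ref{prop p.p part}), at the cost of hiding the compositional-inverse interpretation.
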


\begin{proof}
First note that if the second equation holds then taking the special case of $m=n+1$ we get 
\[
B_{n+1, 1}(b_1, b_2, \ldots) = \sum_{k=0}^{n}\frac{(n+k)!}{n!}B_{n,k}(-a_1, -2!a_2, -3!a_3, \ldots)
\]
and $B_{n+1,1}(b_1, b_2, \ldots) = b_{n+1}$ giving the first equation.

Now assume the first equation.  Apply the result of Birmajer, Gil, and Weiner given in Theorem~\ref{thm BGW} with $a=1$, $b=1$, $\lambda = m-n -1$, and $x_i = -i!a_i$.  Then
\begin{align*}
y_n & = \sum_{k=1}^{n}\binom{n+k}{k-1}(k-1)!B_{n,k}(-a_1, -2!a_2, -3!a_3, \ldots) \\
& = \sum_{k=1}^{n}\frac{(n+k)!}{(n+1)!}B_{n,k}(-a_1, -2!a_2, -3!a_3, \ldots) \\
& = \frac{b_{n+1}}{n+1}
\end{align*}
and
\[
\sum_{k=1}^n\binom{m-n-1}{k-1}(k-1)!B_{n,k}(y_1, y_2, \ldots) = \sum_{k=1}^n\binom{m-1+k}{k-1}(k-1)!B_{n,k}(x_1, x_2, \ldots)
\]

By Lemma~\ref{lem shift} with $b_i$ in place of $x_i$, $m$ in place of $n$, and $m-n$ in place of $k$ we get
\begin{align*}
B_{m,m-n}(b_1, b_2, b_3, b_4, \ldots) & = \sum_{\substack{m-2n \leq j\\0 \leq j\\j \leq m-n-1}}\frac{m!}{n!j!}B_{n,m-n-j}(b_2/2, b_3/3, b_4/4, \ldots) \\
& = \sum_{k=1}^n\frac{m!}{n!(m-n-k)!}B_{n,k}(b_2/2, b_3/3, b_4/4, \ldots) 
\end{align*}

Calculate
\begin{align*}
\sum_{k=1}^n\binom{m-n-1}{k-1}(k-1)!B_{n,k}(y_1, y_2, \ldots) 
& = \sum_{k=1}^n\frac{(m-n-1)!}{(m-n-k)!}B_{n,k}(b_2/2, b_3/3, \ldots) \\
& = \frac{(m-n-1)!n!}{m!}B_{m,m-n}(b_1, b_2, b_3, b_4, \ldots)
\end{align*}
So
\begin{align*}
B_{m,m-n}(b_1, b_2, b_3, b_4, \ldots) & = 
\frac{m!}{(m-n-1)!n!}\sum_{k=1}^n\binom{m-1+k}{k-1}(k-1)!B_{n,k}(x_1, x_2, \ldots) \\
& = \sum_{k=1}^n \frac{(m-1+k)!}{(m-n-1)!n!}B_{n,k}(x_1, x_2, \ldots)
\end{align*}
which is the second equation.
\end{proof}

From now on the $a_i$ are the coefficients of the original field diffeomorphism, as in Subsection~\ref{sec reduction}.  In particular $a_0=1$.

\begin{prop}\label{prop m part}
  Let $b'_n$ be defined recursively by $b'_1=1$ and 
  \begin{equation}\label{eq bprime}
b'_n = - \sum_{k = 2}^{n} B_{n,k}(b_1', b_2', \ldots) \frac{(k-1)!}{2}\sum_{j=0}^{k-1}a_j a_{k-1-j}\left(\frac{2n(j+1)(k-j) - k(k+1)}{n-1}\right)
  \end{equation}
  Then 
  \[
b'_{n+1} = \sum_{k=0}^{n} \frac{(n+k)!}{n!}B_{n,k}(-1!a_1, -2!a_2, -3!a_3, \ldots)
  \]
\end{prop}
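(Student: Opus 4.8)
The plan is to show that the sequence defined by the target closed form satisfies the recursion \eqref{eq bprime}, and then invoke uniqueness: since \eqref{eq bprime} together with $b'_1=1$ determines all the $b'_n$ recursively (the right-hand side for $b'_n$ involves only $b'_1,\ldots,b'_{n-1}$ because $k\geq 2$ forces every Bell argument to have index $<n$), it suffices to verify that the sequence
\[
\tilde b_{n+1} := \sum_{k=0}^n \frac{(n+k)!}{n!}B_{n,k}(-1!a_1,-2!a_2,\ldots), \qquad \tilde b_1 = 1,
\]
obeys \eqref{eq bprime}. The crucial leverage is Lemma~\ref{lem equiv}: by construction $\tilde b$ satisfies part (1), hence also part (2), so I may replace each $B_{n,k}(\tilde b_1,\tilde b_2,\ldots)$ appearing in \eqref{eq bprime} by $\sum_{\ell}\frac{(n-1+\ell)!}{(k-1)!(n-k)!}B_{n-k,\ell}(-1!a_1,-2!a_2,\ldots)$, taking $m=n$ and lemma-index $n-k$.

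After this substitution the factor $(k-1)!$ cancels the $\frac{(k-1)!}{2}$ in \eqref{eq bprime}, and reindexing by $k=i+1$ turns the right-hand side into
\[
-\frac{1}{2(n-1)}\sum_{i=1}^{n-1}\sum_{j=0}^{i}\sum_{\ell} u(i,j)\, a_{j}a_{i-j}\,(n-1+\ell)!\frac{B_{n-1-i,\ell}(-1!a_1,-2!a_2,\ldots)}{(n-1-i)!},
\]
where $u(i,j)=2n(j+1)(i+1-j)-(i+1)(i+2)$. This is exactly the bilinear expression of Lemma~\ref{lem bell prod}, specialized to $s=n-1$, with the lemma's index equal to $n-1$ and $x_i=-i!\,a_i$ (so that $x_0=-1$ respects the convention and $\frac{x_{i-j}}{(i-j)!}\frac{x_j}{j!}=a_{i-j}a_j$). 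Writing $S[w]$ for the same sum run over $i=0,\ldots,n-1$ with a general weight $w(i,j)$, Lemma~\ref{lem bell prod} asserts precisely $S[w_1]=S[w_2]=S[w_3]=S[w_4]=0$ for its four weights, which after this specialization read $w_1(i,j)=2n(i-j)j+(n-1-i)i$ and $w_2(i,j)=2(n-1-i)+ni$ (and two cubic ones).

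The key step is then two reductions. First, because the $i=0$ term of $S[u]$ equals $u(0,0)\,\tilde b_n=2(n-1)\tilde b_n$, the reindexed right-hand side equals $-\frac{1}{2(n-1)}S[u]+\tilde b_n$, so the whole proposition collapses to the single claim $S[u]=0$. Second, since $a_j a_{i-j}$ is symmetric under $j\leftrightarrow i-j$, only the symmetric part of a weight contributes, and I would expand $u,w_1,w_2$ in the symmetric monomials $\{1,\,i,\,i^2,\,j(i-j)\}$ (using $j(j-1)+(i-j)(i-j-1)=i(i-1)-2j(i-j)$). Matching coefficients of $j(i-j)$, $i^2$, $i$, and $1$ then gives $u=w_1+w_2$ for every $n\geq 2$, whence $S[u]=S[w_1]+S[w_2]=0$. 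The right-hand side of \eqref{eq bprime} is therefore exactly $\tilde b_n$, and the base case $n=1$ is immediate, completing the verification.

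I expect the main obstacle to be bookkeeping: correctly aligning the three index layers ($k$ versus $i+1$, the internal variable $j$, and the Bell index $\ell$) when passing from \eqref{eq bprime} through Lemma~\ref{lem equiv}(2) into the template of Lemma~\ref{lem bell prod}, and in particular pinning down that the shift $x_i=-i!a_i$ is the one making $x_0=-1$ consistent while producing $a_ja_{i-j}$. Once the dictionary is fixed, the genuinely delicate point is the linear identity $u=w_1+w_2$: it is what makes the mass-part recursion collapse, and the fact that the two cubic weights $w_3,w_4$ are not needed (indeed $w_1$ and $w_4$ coincide after symmetrization) is worth flagging. I would double-check the coefficient matching at $n=2$, where the denominator $n-1$ and the summation range are most degenerate, to confirm $u=w_1+w_2$ still holds there.
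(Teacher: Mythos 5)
Your proposal is correct and follows essentially the same route as the paper's proof: reduce the recursion to the vanishing of the full sum including the $k=1$ (your $i=0$) term, convert the Bell polynomials in the $b'_i$ to Bell polynomials in the $a_i$ via Lemma~\ref{lem equiv}(2) with $m=n$, and recognize the resulting weight $u(i,j)=2n(j+1)(i-j+1)-(i+1)(i+2)$ as the sum of the first two weights of Lemma~\ref{lem bell prod} specialized to $s=n-1$ and $x_i=-i!\,a_i$. Your uniqueness framing and the explicit decomposition $u=w_1+w_2$ in symmetric monomials are just cleaner phrasings of the paper's ``take the sum of the first and second equation of Lemma~\ref{lem bell prod} with $n=s$'' step.
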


\begin{proof}
  First, when $n=0$ we have $\sum_{k=0}^{0} \frac{(0+k)!}{0!}B_{0,k}(-1!a_1, -2!a_2, -3!a_3, \ldots) = 1 = b_1$ since $B_{0,0}=1$.  For all other values of $n$ it makes no difference if the sum in the expression for $b'_{n+1}$ starts at $1$ or at $0$ since $B_{0,k} = 0$ for $k>0$.

  Note that the missing $k=1$ term in the first sum of \eqref{eq bprime} would be exactly $b'_n$ so the recurrence \eqref{eq bprime} is equivalent to 
\begin{equation}\label{rec as eq}
\sum_{k = 1}^{n} B_{n,k}(b_1', b_2', \ldots) \frac{(k-1)!}{2}\sum_{j=0}^{k-1}a_j a_{k-1-j}\left(\frac{2n(j+1)(k-j) - k(k+1)}{n-1}\right) = 0
\end{equation}
We could prove this inductively by assuming the desired form for $b'_i$ for $i<n$ and using \eqref{eq bprime} to obtain the desired form for $b'_n$.
Equivalently we could assume the desired form for $b'_i$ for $i<n$ and then show that plugging in the desired form for $b'_n$ gives that \eqref{rec as eq} holds.  
That is, it suffices to assume
\[
b'_{i+1} = \sum_{k=0}^{i} \frac{(i+k)!}{i!}B_{i,k}(-1!a_1, -2!a_2, -3!a_3, \ldots)
\]
for $i<n$
and show that \eqref{rec as eq} holds.

So, assume 
\[
b'_{i+1} = \sum_{k=0}^{i} \frac{(i+k)!}{i!}B_{i,k}(-1!a_1, -2!a_2, -3!a_3, \ldots)
\]
for $i < n$.
By Lemma~\ref{lem equiv} we also have 
\[
B_{m,m-i}(b'_1, b'_2, b'_3, \ldots) = \sum_{\ell=0}^{i}\frac{(m-1+\ell)!}{(m-1-i)!i!}B_{i,\ell}(-a_1, -2!a_2, -3!a_3, \ldots).
\]
for $m > i$.

Taking the sum of the first and second equation of Lemma~\ref{lem bell prod} with 
$n=s$ we get
\begin{align*}
0 &= \sum_{i=0}^s\sum_{j=0}^i(2(s+1)(i-j)j + (s-i)i + 2(s-i) + (s+1)i)\frac{x_{i-j}}{(i-j)!}\frac{x_j}{j!}\sum_{\ell=0}^s (s+\ell)!\frac{B_{s-i, \ell}(x_1, x_2, \ldots)}{(s-i)!} \\
& = \sum_{i=0}^s\sum_{j=0}^i(2(s+1)(j+1)(i-j+1) - (i+1)(i+2))\frac{x_{i-j}}{(i-j)!}\frac{x_j}{j!}\sum_{\ell=0}^s (s+\ell)!\frac{B_{s-i, \ell}(x_1, x_2, \ldots)}{(s-i)!}
\end{align*}

So
\begin{align*}
& \sum_{k = 1}^{n} B_{n,k}(b_1', b_2', \ldots) \frac{(k-1)!}{2}\sum_{j=0}^{k-1}a_j a_{k-1-j}\left(\frac{2n(j+1)(k-j) - k(k+1)}{n-1}\right) \\
  & = \sum_{k = 1}^{n} \frac{(k-1)!}{2(n-1)}\sum_{\ell=0}^{n-k}\frac{(n-1+\ell)!}{(k-1)!(n-k)!}B_{n-k, \ell}(-a_1, -2!a_2, -3!a_3, \ldots)\\
  & \qquad \qquad \sum_{j=0}^{k-1}a_j a_{k-1-j}\left(2n(j+1)(k-j) - k(k+1)\right) \\
  & = \sum_{i = 0}^{s} \frac{1}{2s}\sum_{\ell=0}^{s-i}\frac{(s+\ell)!}{(s-i)!}B_{s-i, \ell}(-a_1, -2!a_2, -3!a_3, \ldots)\\
  & \qquad \qquad \sum_{j=0}^{i}a_j a_{i-j}\left(2(s+1)(j+1)(i-j+1) - (i+1)(i+2)\right)
\end{align*}
where $s=n-1$ (and $i=k-1$).  This is $0$ by the previous calculation with $x_i = -i!a_i$ and hence \eqref{rec as eq} holds.
\end{proof}

\begin{prop}\label{prop p.p part}
  Let $b''_n$ be defined recursively by $b''_1=1$ and 
  \begin{equation}\label{eq bdoubleprime}
b''_n = - \sum_{\substack{k > 1 \\ P_1 \cup \cdots \cup P_k = \{1,\ldots, n\} \\ P_i \text{ disjoint, nonempty}}} b''_{|P_1|}\cdots b''_{|P_k|} \frac{(k-1)!}{2}\sum_{j=0}^{k-1}a_ja_{k-1-j}\left(\sum_{i=1}^{k}\binom{|P_i|}{2} + \binom{n}{2}\right)\frac{(j+1)(k-j)}{\binom{n}{2}}
  \end{equation}
  for $n\geq 2$.
  Then 
  \[
b''_{n+1} = \sum_{k=0}^{n} \frac{(n+k)!}{n!}B_{n,k}(-1!a_1, -2!a_2, -3!a_3, \ldots)
  \]
\end{prop}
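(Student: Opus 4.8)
The plan is to follow the template of Proposition~\ref{prop m part} as closely as possible, treating \eqref{eq bdoubleprime} as a recurrence that determines $b''_n$ from strictly smaller values and proving by strong induction on $n$ that the claimed closed form solves it. As in the mass case (the passage from \eqref{eq bprime} to \eqref{rec as eq}), the first move is to note that the missing $k=1$ term of \eqref{eq bdoubleprime} is exactly $b''_n$: the trivial partition gives block weight $2\binom{n}{2}$, while $a_0a_0(0+1)(1-0)\tfrac{(1-1)!}{2}=\tfrac12$, so the prefactor $1/\binom{n}{2}$ leaves $b''_n$. Hence the recurrence is equivalent to the vanishing statement $\sum_{k=1}^n\frac{(k-1)!}{2}A_k\,\Sigma_{n,k}=0$, where $A_k=\sum_{j=0}^{k-1}a_ja_{k-1-j}(j+1)(k-j)$ and $\Sigma_{n,k}$ is the set-partition sum into $k$ nonempty blocks weighted by $\sum_i\binom{|P_i|}{2}+\binom{n}{2}$. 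Assuming the closed form for all $b''_i$ with $i<n$, Lemma~\ref{lem equiv} supplies the companion identities $B_{m,m-i}(b''_1,b''_2,\ldots)=\sum_\ell\frac{(m-1+\ell)!}{(m-1-i)!\,i!}B_{i,\ell}(-a_1,-2!a_2,\ldots)$, which are exactly what is needed to rewrite the Bell polynomials in the $b''$ that occur inside $\Sigma_{n,k}$ as Bell polynomials in the $-i!a_i$.

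The genuinely new feature, absent from Proposition~\ref{prop m part}, is that the weight $\sum_i\binom{|P_i|}{2}$ depends on the block sizes, so $\Sigma_{n,k}$ is not a single multiple of one Bell polynomial. I would write $\Sigma_{n,k}=\binom{n}{2}B_{n,k}(b'')$ for the uniform piece plus a block-marked piece for the $\sum_i\binom{|P_i|}{2}$ part; marking one block of size $s$ and using $\partial B_{n,k}/\partial x_s=\binom{n}{s}B_{n-s,k-1}$ turns the latter into $\sum_s\binom{n}{s}\binom{s}{2}b''_s\,B_{n-s,k-1}(b'')$. After converting through Lemma~\ref{lem equiv} and reindexing by $s=n-1$, $i=k-1$, the resulting expression should coincide with a linear combination of the identities of Lemma~\ref{lem bell prod} evaluated at $x_i=-i!a_i$: the vertex factors $a_j,a_{k-1-j}$ play the role of the distinguished arguments $x_{i-j}/(i-j)!$ and $x_j/j!$, and the converted partition sum becomes the factor $\sum_\ell(s+\ell)!B_{n-i,\ell}/(n-i)!$. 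The decisive point is that the quadratic block weight forces the combinations $j(j-1)+(i-j)(i-j-1)=2\bigl(\binom{j}{2}+\binom{i-j}{2}\bigr)$ into play, and these are precisely the second-order terms carried by parts (3) and (4) of Lemma~\ref{lem bell prod} — the ones involving a second derivative of $F$ — whereas the mass case needed only the first-derivative identities (1) and (2). Since each identity of Lemma~\ref{lem bell prod} vanishes, so does the matched combination.

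The main obstacle I anticipate is exactly this matching step. Unlike the clean $(1)+(2)$ combination for $b'$, the second-moment term produces, after the block-marking, convolutions $b''_s\,B_{n-s,k-1}(b'')$ rather than a single Bell polynomial, and it is the inverse-relation content of Lemma~\ref{lem equiv} (built on Theorem~\ref{thm BGW}) that must repackage these into the single $\sum_\ell(s+\ell)!B_{n-i,\ell}$ factors that Lemma~\ref{lem bell prod} provides; verifying that the polynomial weights then assemble into $A_k$ times the block weight is delicate bookkeeping. A useful guide, and a possible shorter alternative, is that the claimed closed form is, by the Bell-polynomial/power dictionary used to prove Lemma~\ref{lem bell prod} together with Lagrange inversion, nothing but the coefficient sequence of the inverse diffeomorphism, $b''_m/m!=[t^m]F^{-1}(t)$. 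With $G(t)=\sum_m b''_m t^m/m!$, summing $A_k$ against the relevant powers of $G$ (using $\sum_{k\geq1}A_k x^{k-1}=F'(x)^2$) collapses the recurrence to the factor $G''\,F'(G)+(G')^2F''(G)$, which is the second derivative of $F(G(t))=t$ and hence zero. This both explains why parts (3) and (4) of Lemma~\ref{lem bell prod} are the ones that must enter and offers an escape route should the direct Bell-polynomial matching prove too unwieldy.
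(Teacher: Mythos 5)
Your proposal follows essentially the same route as the paper: absorb the $k=1$ term to turn \eqref{eq bdoubleprime} into a vanishing statement, convert $\sum_i\binom{|P_i|}{2}$ by block-marking into $\sum_s\binom{n}{s}\binom{s}{2}b''_s B_{n-s,k-1}(b''_1,b''_2,\ldots)$, induct using Lemma~\ref{lem equiv}, and kill the result with combinations of the identities of Lemma~\ref{lem bell prod}, with parts (3) and (4) carrying the second-moment weight exactly as you predicted. The ``delicate bookkeeping'' you defer is precisely where the paper does its work: it first normalizes $ks(s-1)+n(n-1)$ to $ks(s+n-2)$ via Lemma~\ref{lem bell bparts}, then makes two round trips between Bell polynomials in the $b''_i$ and in the $-i!\,a_i$ (using Lemma~\ref{lem classical} and Lemma~\ref{lem equiv}), inserting the combination $(1)+(3)+(4)$ of Lemma~\ref{lem bell prod} in the middle and finishing with $(1)+(2)$.
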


\begin{proof}
  As in the proof of the previous result, note that the missing $k=1$ term in the first sum of \eqref{eq bdoubleprime} would be exactly $b''_n$ so the recurrence \eqref{eq bdoubleprime} is equivalent to 
\begin{equation}\label{eq rec version 2}
\sum_{\substack{k \geq 1 \\ P_1 \cup \cdots \cup P_k = \{1,\ldots, n\} \\ P_i \text{ disjoint, nonempty}}} b''_{|P_1|}\cdots b''_{|P_k|} \frac{(k-1)!}{2}\sum_{j=0}^{k-1}a_ja_{k-1-j}\left(\sum_{i=1}^{k}\binom{|P_i|}{2} + \binom{n}{2}\right)\frac{(j+1)(k-j)}{\binom{n}{2}}
= 0
\end{equation}
for $n\geq 2$.
Next we need to understand how to deal with the $\binom{|P_i|}{2}$.  For fixed $k$ and $n\geq 2$, by the fact that Bell polynomials count set partitions and that the explicit formula for Bell polynomials, we have
\begin{align*}
  & \sum_{\substack{P_1 \cup \cdots \cup P_k = \{1,\ldots, n\} \\ P_i \text{ disjoint, nonempty}}} b''_{|P_1|}\cdots b''_{|P_k|}\sum_{i=1}^{k}\binom{|P_i|}{2} \\
  & = \sum_{s=1}^{n-k+1}\sum_{\substack{j_1+j_2+\cdots = k\\j_1+2j_2+3j_3+\cdots = n\\j_i \geq 0}} \frac{n!}{j_1!j_2!\cdots}\left(\frac{b_1''}{1!}\right)^{j_1}\left(\frac{b_2''}{2!}\right)^{j_2}\cdots \frac{j_ss(s-1)}{2} \quad \text{since $b_s''$ appears $j_s$ times}\\
  & = \sum_{s=1}^{n-k+1}\frac{b_s''s(s-1)}{2s!}\sum_{\substack{j_1+j_2+\cdots = k-1\\j_1+2j_2+3j_3+\cdots = n-s\\j_i \geq 0}} \frac{n!}{j_1!j_2!\cdots}\left(\frac{b_1''}{1!}\right)^{j_1}\left(\frac{b_2''}{2!}\right)^{j_2}\cdots \\
  & =  \sum_{s=1}^{n-k+1}\frac{b_s''s(s-1)n!}{2s!(n-s)!}B_{n-s,k-1}(b_1'', b_2'', \ldots)
\end{align*}

Using this and the first equation of Lemma~\ref{lem classical} to rearrange \eqref{eq rec version 2} we get for $n\geq 2$
\begin{align*}
& \sum_{\substack{k \geq 1 \\ P_1 \cup \cdots \cup P_k = \{1,\ldots, n\} \\ P_i \text{ disjoint, nonempty}}} b''_{|P_1|}\cdots b''_{|P_k|} \frac{(k-1)!}{2}\sum_{j=0}^{k-1}a_ja_{k-1-j}\left(\sum_{i=1}^{k}\binom{|P_i|}{2} + \binom{n}{2}\right)\frac{(j+1)(k-j)}{\binom{n}{2}}\\
& = \sum_{k=1}^{n}\frac{(k-1)!}{4}\sum_{s=1}^{n-k+1}\frac{b''_ss(s-1)n!}{s!(n-s)!}B_{n-s, k-1}(b''_1, b''_2, \ldots ) \sum_{j=0}^{k-1}a_ja_{k-1-j}\frac{(j+1)(k-j)}{\binom{n}{2}} \\
& \qquad + \sum_{k=1}^n\frac{(k-1)!}{2}B_{n,k}(b''_1, b''_2, \ldots)\sum_{j=0}^{k-1}a_ja_{k-1-j}(j+1)(k-j) \\
& = \frac{n!}{k\binom{n}{2}} \sum_{k=1}^n\sum_{j=0}^{k-1}\frac{(k-1)!}{4}a_ja_{k-1-j}(j+1)(k-j)\sum_{s=1}^{n-k+1}\frac{b''_s}{s!(n-s)!}B_{n-s, k-1}(b''_1, b''_2, \ldots)(ks(s-1)+ n(n-1))
\end{align*}
Therefore for $n\geq 2$ the recurrence \eqref{eq bdoubleprime} is equivalent to
\begin{equation}\label{p.p rec as eq}
\sum_{k=1}^n\sum_{j=0}^{k-1}\frac{(k-1)!}{2k}a_ja_{k-1-j}(j+1)(k-j)\sum_{s=1}^{n-k+1}\frac{b''_s}{s!(n-s)!}B_{n-s, k-1}(b''_1, b''_2, \ldots)(ks(s-1)+ n(n-1)) = 0
\end{equation}
As in the proof of the previous result, we can prove this inductively by assuming the desired form for $b''_i$ for $i<n$ and then showing that plugging in the desired form for $b''_n$ gives that \eqref{p.p rec as eq} holds.  
That is it suffices to assume
\[
b''_{i+1} = \sum_{k=1}^{i} \frac{(n+k)!}{n!}B_{n,k}(-1!a_1, -2!a_2, -3!a_3, \ldots)
\]
for $i<n$
and show that \eqref{p.p rec as eq} holds.

So assume
\[
b''_{i+1} = \sum_{k=1}^{i} \frac{(i+k)!}{i!}B_{i,k}(-1!a_1, -2!a_2, -3!a_3, \ldots)
\]
for $i < n$.  As before, by Lemma~\ref{lem equiv} we also have
\[
B_{m,m-i}(b''_1, b''_2, b''_3, \ldots) = \sum_{k=0}^{i}\frac{(m-1+k)!}{(m-1-i)!i!}B_{i,k}(-a_1, -2!a_2, -3!a_3, \ldots).
\]

Let's work on rewriting the right hand side of \eqref{p.p rec as eq} using the Bell polynomial identities we know.  By Lemma~\ref{lem bell bparts}
\begin{align*}
& \sum_{k=1}^n\sum_{j=0}^{k-1}\frac{(k-1)!}{2k}a_ja_{k-1-j}(j+1)(k-j)\sum_{s=1}^{n-k+1}\frac{b''_s}{s!(n-s)!}B_{n-s, k-1}(b''_1, b''_2, \ldots)(ks(s-1)+ n(n-1)) \\
& = \sum_{k=1}^n\sum_{j=0}^{k-1}\frac{(k-1)!}{2k}a_ja_{k-1-j}(j+1)(k-j)\sum_{s=1}^{n-k+1}\frac{b''_s}{s!(n-s)!}B_{n-s, k-1}(b''_1, b''_2, \ldots)ks(s+n-2) \\
& = \sum_{s=1}^{n}\frac{b''_s}{2(s-1)!}\sum_{\ell=0}^{n-s}\frac{\ell!}{(n-s)!}\sum_{j=0}^{\ell}a_ja_{\ell-j}B_{n-s, \ell}(b''_1, b''_2, \ldots)(s+n-2)(j+1)(\ell-j+1)
\end{align*}
Plug in our assumption for $B_{n-s, \ell}(b''_1, b''_2, \ldots)$ 
\begin{equation}\label{eq first step done}
\sum_{s=1}^{n}\frac{b''_s}{2(s-1)!}\sum_{\ell=0}^{n-s}\sum_{j=0}^{\ell}\sum_{k=0}^{n-s-\ell}a_ja_{\ell-j}\frac{(n-s-1+k)!}{(n-s-\ell)!}B_{n-s-\ell, k}(-a_1, -2!a_2, \ldots)\ell(s+n-2)(j+1)(\ell-j+1)
\end{equation}
Now add the sum of the first, third and fourth identities of Lemma~\ref{lem bell prod} with the following substitutions
\begin{itemize}
  \item $n-s-1$ in the place of $s$,
  \item $n-s$ in the place of $n$,
  \item $\ell$ in the place of $i$,
  \item $k$ in the place of $\ell$, and
  \item $-n!a_n$ in the place of $x_n$
\end{itemize} in order to cancel $s$ from $\ell(s+n-2)(j+1)(\ell-j+1)$ in \eqref{eq first step done}.  This gives
\begin{align*}
  & \sum_{s=1}^{n}\frac{b''_s}{2(s-1)!}\sum_{\ell=0}^{n-s}\sum_{j=0}^{\ell}\sum_{k=0}^{n-s-\ell}a_ja_{\ell-j}\frac{(n-s-1+k)!}{(n-s-\ell)!}B_{n-s-\ell, k}(-a_1, -2!a_2, \ldots)\\
  & \qquad \qquad \qquad \ell(2nj(\ell-j) + (\ell+1)(2n-\ell-2)) \\
& = 
\sum_{s=1}^{n}\frac{b''_s}{2(s-1)!}\sum_{\ell=0}^{n-s}\frac{\ell!}{(n-s)!}\sum_{j=0}^{\ell}a_ja_{\ell-j}B_{n-s, \ell}(b''_1, b''_2, \ldots)(2nj(\ell-j) + (\ell+1)(2n-\ell-2)) \\
& = \sum_{\ell=0}^{n-1}\frac{\ell!}{2}\sum_{j=0}^{\ell}a_ja_{\ell-j}(2nj(\ell-j) + (\ell+1)(2n-\ell-2))\sum_{s=1}^{n-\ell}s\frac{b''_s}{s!(n-s)!}B_{n-s, \ell}(b''_1, b''_2, \ldots) \\
& = \sum_{\ell=0}^{n-1}\frac{\ell!}{2(n-1)!}\sum_{j=0}^{\ell}a_ja_{\ell-j}(2nj(\ell-j) + (\ell+1)(2n-\ell-2)) B_{n, \ell+1}(b_1'', b_2'', \ldots)
\end{align*}
by the second equation of Lemma~\ref{lem classical}.  Then replacing $B_{n, \ell+1}$ by the sum of Bell polynomials in terms of the $a_i$ one last time we obtain
\[
\sum_{\ell=0}^{n}\frac{1}{2(n-1)!}\sum_{j=0}^{\ell}\sum_{k=0}^{n-\ell-1}a_ja_{\ell-j}\frac{(n-1-k)!}{(n-\ell-1)!}B_{n-\ell-1, k}(-a_1, -2!a_2, \ldots) (2nj(\ell-j) + (\ell+1)(2n-\ell-2))
\]
which is 0 as it is the sum of the the first two identities of Lemma~\ref{lem bell prod} with $n-1$ playing the roles of $s$ and $n$ and other substitutions as above.

Therefore \eqref{p.p rec as eq} holds proving the result.
\end{proof}

\begin{thm}\label{thm bn}
Let $b_n$ be as in Subsection~\ref{sec reduction}.  Then
\[
b_{n+1} = \sum_{k=1}^{n} \frac{(n+k)!}{n!}B_{n,k}(-1!a_1, -2!a_2, -3!a_3, \ldots)
\]
\end{thm}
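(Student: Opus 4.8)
The plan is to read the theorem as the step that fuses the two sector-computations of Proposition~\ref{prop m part} and Proposition~\ref{prop p.p part}, exploiting that their common value is exactly what forces the rational expression in Proposition~\ref{prop bn rec} to simplify. Write $\beta_{n+1}$ for the claimed right-hand side $\sum_{k=1}^{n}\frac{(n+k)!}{n!}B_{n,k}(-1!a_1,-2!a_2,-3!a_3,\ldots)$, with $\beta_1=1$, and prove $b_n=\beta_n$ by strong induction on $n$, the base case being $b_1=1$.

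For the inductive step I would take the recurrence of Proposition~\ref{prop bn rec}, impose the on-shell conditions $p_e^2=m^2$, and treat $m^2$ and the scalar products $p_i\cdot p_j$ as the free variables. Expanding each square, the denominator becomes
\[
(p_1+\cdots+p_n)^2 - m^2 = (n-1)m^2 + 2\sum_{i<j}p_i\cdot p_j,
\]
so it carries a pure-$m^2$ sector $(n-1)m^2$ and a dot-product sector $2\sum_{i<j}p_i\cdot p_j$. By the induction hypothesis every factor $b_{|P_i|}$ with $|P_i|<n$ is already a constant in $m$ and the $p_e$, so the numerator is affine in $m^2$ and the $p_i\cdot p_j$; by full symmetry in the external labels all dot products enter with a single common coefficient, so the numerator has the shape $A\,m^2 + C\sum_{i<j}p_i\cdot p_j$.

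The key step is then to identify $A$ and $C$ through the two propositions. Collecting only the $m^2$-terms of the numerator and dividing by $(n-1)m^2$ (keeping the global sign) is, by construction, exactly the defining recurrence \eqref{eq bprime} for $b_n'$; since the induction hypothesis gives $b_{|P_i|}=b_{|P_i|}'$ and Proposition~\ref{prop m part} gives $b_n'=\beta_n$, this yields $A=-(n-1)\beta_n$. Collecting instead the dot-product terms and dividing by $2\sum_{i<j}p_i\cdot p_j$ is the defining recurrence \eqref{eq bdoubleprime} for $b_n''$, so Proposition~\ref{prop p.p part} gives $C=-2\beta_n$. Hence the numerator equals $-\beta_n\bigl((n-1)m^2 + 2\sum_{i<j}p_i\cdot p_j\bigr)$, which is precisely $-\beta_n$ times the denominator, and therefore $b_n=\beta_n$. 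This closes the induction and, as a by-product, shows the denominator genuinely divides the numerator, so that $b_n$ is independent of $m$ and the $p_e$ (this is Corollary~\ref{massindep}).

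The genuine difficulty of the whole development sits in Proposition~\ref{prop m part} and Proposition~\ref{prop p.p part}, hence in the Bell-polynomial identities of Section~\ref{sec bell}, all of which I am free to assume here. Within the proof of the theorem itself the one conceptual point worth flagging is that the $m^2$-sector and the dot-product sector are governed by two a priori unrelated recurrences whose solutions must nonetheless coincide: it is exactly the coincidence $b_n'=b_n''$ that makes $A/(n-1)=C/2$ and lets the denominator cancel. Were the two values different, $b_n$ would retain a real dependence on the kinematics — the tadpole obstruction of \cite{KVdiff} — and the tree amplitudes would not collapse. So the only thing I would be careful to verify is that the numerator's two sectors reproduce the recurrences \eqref{eq bprime} and \eqref{eq bdoubleprime} verbatim, so that the two propositions plug in cleanly.
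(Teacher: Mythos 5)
Your proposal is correct and follows essentially the same route as the paper's own proof: induction via Proposition~\ref{prop bn rec}, expansion of the squares using the on-shell condition, a symmetry argument to reduce the dot products to a single common coefficient, identification of the $m^2$ sector with the recurrence \eqref{eq bprime} of Proposition~\ref{prop m part} and the dot-product sector with \eqref{eq bdoubleprime} of Proposition~\ref{prop p.p part}, and cancellation of the resulting common factor against the denominator $(n-1)m^2+2\sum_{i<j}p_i\cdot p_j$. Your explicit framing of the numerator as $A\,m^2+C\sum_{i<j}p_i\cdot p_j$ with $A/(n-1)=C/2=-\beta_n$ is just a cleaner restatement of the paper's ``factor out the common coefficient'' step, and your observation that Corollary~\ref{massindep} drops out as a by-product matches the paper as well.
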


\begin{proof}
The proof is by induction.  One can check directly for small values of $n$.  Assume the result holds for $i<n$.  Proposition~\ref{prop bn rec} gives a recurrence for $b_n$.  Expand all dot products so that only dot products of distinct external momenta and powers of $m^2$ remain.  By symmetry we know all these dot products appear with the same coefficient so we don't need to distinguish them.  Consider the coefficient of $m^2$ in the numerator of the right hand side of Proposition~\ref{prop bn rec}.  This gives the recurrence of Proposition~\ref{prop m part} weighted by $1/(n-1)$ which is the coefficient of $m^2$ in the denominator.  Consider the remaining parts of the numerator of the right hand side of Proposition~\ref{prop bn rec}.  These give the recurrence of Proposition~\ref{prop p.p part} weighted by $\binom{n}{2}$ which is the coefficient of the dot products in the denominator.  So factoring out the common coefficient what is left in the numerator and denominator cancels giving the desired expression for $b_n$.
\end{proof}
As a consequence we have 
\begin{cor}\label{massindep}
$b$ is independent of masses and momenta. In particular all internal propagator factor $1/x_e$, for $e$ any tree edge, cancel against numerator contributions of vertices in the sum over all trees. 
\end{cor}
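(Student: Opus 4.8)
The plan is to read the corollary off directly from the closed form established in Theorem~\ref{thm bn}. First I would observe that the right-hand side of that theorem,
\[
b_{n+1} = \sum_{k=1}^{n} \frac{(n+k)!}{n!}B_{n,k}(-1!a_1, -2!a_2, -3!a_3, \ldots),
\]
is an expression whose only inputs are the diffeomorphism coefficients $a_1, a_2, \ldots$\,: the prefactors $(n+k)!/n!$ are pure rational numbers, and by their defining formula the Bell polynomials $B_{n,k}(-1!a_1, -2!a_2, \ldots)$ are polynomials in the $a_i$ alone. Neither $m^2$ nor any external momentum $p_e$ appears anywhere on the right. Hence $b_{n+1}$, and therefore every $b_n$, is independent of the masses and momenta, which is the first assertion of the corollary.

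For the ``in particular'' assertion I would trace what this independence means back at the level of the recursive definition. By Proposition~\ref{prop bn rec} each $b_n$ is a sum over set partitions in which every summand carries the propagator factor $1/((p_1+\cdots+p_n)^2-m^2)=1/x_e$ of the edge $e$ in its denominator, while the numerator collects the vertex Feynman rules, which are linear combinations of the $x_j$ and $m^2$. A priori $b_n$ is therefore a rational function in the $x_e$ and $m^2$, and the force of the first part is exactly that this rational function degenerates to a constant: the single propagator pole $1/x_e$ must be cancelled by a matching factor of $x_e$ manufactured in the numerator by the vertices below $e$. Applying this to every internal edge in turn, via the recursion, gives that all internal propagator factors $1/x_e$ cancel in the full sum over trees.

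I would then simply cite the mechanism already assembled in Subsection~\ref{sec reduction} and in the proof of Theorem~\ref{thm bn} to make this precise: splitting the numerator into its $m^2$-part and its dot-product part yields the two recurrences solved in Proposition~\ref{prop m part} (for $b'_n$) and Proposition~\ref{prop p.p part} (for $b''_n$), both producing the same closed form, and combining them with the weights $1/(n-1)$ and $\binom{n}{2}$ --- the respective coefficients of $m^2$ and of the dot products in the denominator $x_e$ --- is precisely what forces numerator and denominator to cancel. Because the corollary is an immediate consequence of a theorem already proved, I do not expect a genuine obstacle; the only care needed is to phrase the ``in particular'' clause so that it correctly reflects that the cancellation is a property of the \emph{summed} amplitude, arising only after summing over all trees below $e$ through the non-trivial cross-cancellations, and not of the individual tree contributions.
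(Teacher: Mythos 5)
Your proposal is correct and matches the paper's treatment: the paper offers no separate proof, presenting the corollary as an immediate consequence of Theorem~\ref{thm bn}, exactly as you argue by noting the closed form depends only on the $a_i$. Your unpacking of the ``in particular'' clause via the recursion of Proposition~\ref{prop bn rec} and the $m^2$/dot-product split is a faithful elaboration of the cancellation mechanism already established in the proof of Theorem~\ref{thm bn}, not a different route.
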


\begin{thm}\label{thm vanish}
The on-shell tree-level $n$-point amplitudes of the Kreimer Velenich massive theory are $0$ for $n\geq 3$.
\end{thm}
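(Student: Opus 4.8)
The plan is to reduce the full amputated $n$-point amplitude to a single building block $b_{n-1}$ and then invoke Corollary~\ref{massindep}. First I would single out one external leg, say leg $n$ with momentum $p_n$, and regard its incident edge as the distinguished edge $e$ appearing in the definition of $b_{n-1}$. For every tree contributing to the $n$-point amplitude, the portion lying on the far side of $e$ from leg $n$ is precisely a ``below $e$'' configuration carrying the $n-1$ remaining external legs $1,\ldots,n-1$, all on-shell. Summing the Feynman rules of the lower vertex of $e$ and everything beneath it over all such configurations is, by construction, exactly the numerator of $b_{n-1}$, namely the quantity $b_{n-1}\cdot\big((p_1+\cdots+p_{n-1})^2-m^2\big)$ obtained by clearing the propagator of $e$ out of $b_{n-1}=\mathrm{Num}_{n-1}/\big((p_1+\cdots+p_{n-1})^2-m^2\big)$. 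Because external legs are amputated, leg $n$ contributes no propagator, so the $n$-point amplitude equals this numerator up to the single factor of $i$ carried by edge $e$.

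Next I would use momentum conservation: the momentum flowing through $e$ is $P=p_1+\cdots+p_{n-1}=-p_n$, so that $P^2=p_n^2$ and the amplitude is a fixed nonzero multiple of $b_{n-1}\,(p_n^2-m^2)=b_{n-1}\,x_n$. Here Corollary~\ref{massindep}, which rests on the closed form of Theorem~\ref{thm bn}, is decisive: it guarantees that $b_{n-1}$ is a pure constant in the $a_i$ with no residual dependence on $m$ or on the momenta, so that every internal propagator denominator has already cancelled against vertex numerators and the entire momentum dependence of the amplitude is carried by the single explicit factor $x_n$. Putting leg $n$ on-shell forces $x_n=p_n^2-m^2=0$, and the amplitude vanishes. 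Since $n\geq 3$ corresponds to $n-1\geq 2$, the relevant $b_{n-1}$ are precisely the nontrivial constants $b_2,b_3,\ldots$ furnished by Theorem~\ref{thm bn}, so the conclusion holds for every $n\geq 3$.

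The main obstacle is making the identification in the first step rigorous: one must check that, after singling out leg $n$, the sum over all trees contributing to the $n$-point amplitude is in bijection with the sum over ``below $e$'' configurations defining $b_{n-1}$, that the labelled-versus-unlabelled symmetry factors and the combinatorial weights match term by term (as already encoded in Proposition~\ref{prop bn rec}), and that the bookkeeping of the factors of $i$ contributed by the various vertices and internal propagators collapses to the single overall constant claimed --- the same conspiracy of factors of $i$ and denominators that renders each $b_m$ a momentum-independent constant in Corollary~\ref{massindep}. Once this identification is secured, the vanishing is immediate from $x_n=0$.
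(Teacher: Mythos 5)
Your proposal is correct and follows essentially the same route as the paper's own proof: identify the external leg $n$ with the distinguished edge $e$, recognize the amplitude as $\bigl((p_1+\cdots+p_{n-1})^2-m^2\bigr)b_{n-1}$, and use momentum conservation plus the on-shell condition together with the momentum- and mass-independence (finiteness) of $b_{n-1}$ from Theorem~\ref{thm bn} to conclude it vanishes. The combinatorial bookkeeping you flag as the main obstacle is exactly what Proposition~\ref{prop bn rec} and the definition of $b_n$ already encode, so no genuinely new step is needed.
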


\begin{proof}
By definition $b_{n-1}$ is the result of applying Feynman rules to the sum of all the subtrees with $n-1$ external edges below an internal edge $e$.  The result of applying Feynman rules to these same subtrees but without including the factor for the edge $e$ is 
\[
((p_1+\cdots +p_{n-1})^2 - m^2)b_{n-1}. 
\]
and by Theorem~\ref{thm bn} $b_{n-1}$ does not depend on $m$ or the $p_i$.

Consider any tree with $n$ external edges.  Let $e$ be the external edge labelled $n$.  The sum over all subtrees below $e$ with $n-1$ external edges is the same as the sum over all trees with $n$ external edges.  Edge $e$ is external now, so does not contribute.  Thus the sum we want is $((p_1+\cdots +p_{n-1})^2 - m^2)b_{n-1}$.  However, $p_1+\cdots +p_{n-1} = p_n$,  $p_n^2 = m^2$, and $b_{n-1}$ is a finite quantity, so the sum we want is $0$.
\end{proof}

\section{All loop order results}
\subsection{Symmetry Factors}
The tree-level result is a result about sums of trees, not about individual trees, so as we build up to diagrams with loops we don't have the freedom to take trees in any proportion that we like.  The first order of business for the loop result, then, is to check that diagrams are generated with the appropriate symmetry factors.  Write $\text{Sym}(G)$ for the symmetry factor of $G$.

One good way to understand symmetry factors rigorously is to view Feynman diagrams as graphs with the half-edges labelled up to isomorphism and then the labelling forgotten.  In this view the exponential generating function of these labelled objects is exactly the sum over Feynman diagrams weighted by their symmetry factor.  See Lemma 2.14 and the discussion following in \cite{Ymem} or \cite{kythesis}.  It will be helpful in the following to keep the labels through the construction and only forget them at the end.  

Note that the external edges in this view are half-edges which are not paired with another half-edge to form an internal edge.  Typically for Feynman graphs external edges are viewed as fixed and so in particular isomorphisms of the graph should not permute them.  This gives the correct symmetry factors.  Also, we will think of cutting an internal edge as breaking the half-edge-half-edge pairing which forms the edge without getting rid of the two half-edges which made it up; they simply become external edges in the pieces.

With this in mind let $G$ be a graph of this sort.  A \emph{minimal cut} or \emph{Cutkosky cut} of the graph is a set of internal edges of the graph such that if cutting these edges breaks the graph into $k$ connected components then cutting any proper subset of these edges breaks the graph into strictly fewer connected components.

{}From a graph $G$ along with a minimal cut $C$ which cuts $G$ into $k$ connected components we need to extract the following information. Let the connected components be $G_1, G_2, \ldots, G_k$, then for each $G_i$ we want to keep
\begin{itemize}
  \item The number $x_i$ of external edges of $G_i$ which were external edges of $G$,
  \item for each $i \neq j$, the number $e_{i,j}$ of external edges of $G_i$ which originally connected to $G_j$ in $G$.
\end{itemize}
If we have any set of graphs $H_1, H_2, \ldots, H_k$ where the total number of vertices in the $H_i$ equals the number of vertices of $G$ and where the external edges of each $H_i$ are partitioned into a set of size $x_i$ and sets of size $e_{i,j}$ for $i\neq j$, then we say $H_1, H_2, \ldots, H_k$ with these partitions is \emph{compatible} with the pair $G$, $C$.

For a graph $H_i$ with such a partition of its external edges we will consider an isomorphism of $H_i$ to be any bijection of the half-edges which preserves the external edges in the part of size $x_i$ and is an isomorphism of $H_i$ ignoring the partition.  Intuitively this means that the external edges which were external in the original graph are fixed but not the external edges made by the cut which is also reflected by the symmetry factor.

Given $H_1, H_2, \ldots, H_k$ with external edge partitions compatible with $G$, $C$, we can put the $H_i$ together by taking any bijection between the external edges of $H_i$ from the $e_{i,j}$ part and the external edges of $H_i$ from the $e_{j,i}$ part and using this bijection to pair the half-edges into internal edges.  Write
\[
F(H_i, H_2, \ldots, H_k)
\]
for the sum of the graphs built by running over all $\prod_{i<j}e_{i,j}!$ bijections.

\begin{prop}\label{gluetree}
Given a Feynman graph $G$ and a minimal cut $C$ consider
\[
\sum_{H_1, H_2, \ldots, H_k \text{ compatible with }G, C} \frac{1}{\text{Sym}(H_1)\text{Sym}(H_2)\cdots \text{Sym}(H_k)}F(H_i, H_2, \ldots, H_k)
\]
Then, $G$ appears in this sum weighted by exactly $\frac{1}{\text{Sym}(G)}$.
\end{prop}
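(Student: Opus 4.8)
The plan is to prove this entirely inside the half-edge-labelled picture recalled just before the statement (cf.\ Lemma 2.14 and the following discussion in \cite{Ymem}, or \cite{kythesis}), in which a graph $H$ counted with weight $1/\text{Sym}(H)$ is the groupoid cardinality of its automorphism groupoid, the automorphisms being bijections of half-edges that fix the genuine external edges. In this language the whole identity becomes a single orbit--stabilizer (equivalently double-counting) computation, and the factor $\prod_{i<j}e_{i,j}!$ that is built into the definition of $F$ is exactly what converts the unlabelled cut half-edges into labelled ones so that the bookkeeping closes.

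First I would fix representative graphs $G_1,\ldots,G_k$ for the pieces obtained by cutting $G$ along $C$, recording for each $G_i$ the $x_i$ genuine external half-edges, which carry the fixed external labels inherited from $G$, together with the cut half-edges grouped into the parts of sizes $e_{i,j}$. The first substantive step is a rigidity statement: every automorphism $\phi$ of $G$ preserves the decomposition induced by $C$, sending each vertex block $V_i$ to itself and carrying cut edges to cut edges. This uses that $C$ is a minimal cut (each cut edge has its two ends in distinct components and restoring it merges them, so the cut edges are intrinsically recognizable) together with the fact that the labelled genuine external edges pin down which block each externally-attached vertex belongs to. Consequently the restriction map sends $\phi$ to a tuple $(g_1,\ldots,g_k)$ with $g_i\in\text{Aut}(G_i)$, where each $g_i$ fixes the genuine external edges of $G_i$ but may permute its cut half-edges, subject to a compatibility condition across the glued edges.

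Next I would let the group $\prod_i\text{Aut}(G_i)$, which has order $\prod_i\text{Sym}(G_i)$, act on the set of the $\prod_{i<j}e_{i,j}!$ gluing bijections $\beta$ occurring in $F(G_1,\ldots,G_k)$, by relabelling the cut half-edges on each side. The orbit of the bijection $\beta_0$ that reconstructs $G$ consists of exactly those $\beta$ with $\mathrm{glue}(G_\bullet,\beta)\cong G$: the inclusion $\supseteq$ is immediate, and $\subseteq$ is where the rigidity step is used, since any isomorphism onto $G$ is induced by a tuple $(g_i)$ that moves $\beta$ to $\beta_0$. The stabilizer of $\beta_0$ is precisely the set of tuples $(g_i)$ whose cut-half-edge permutations are compatible across the gluing, and by the previous paragraph this is exactly $\text{Aut}(G)$. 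Orbit--stabilizer then gives
\[
\#\{\beta : \mathrm{glue}(G_\bullet,\beta)\cong G\} \;=\; \frac{\prod_i\text{Sym}(G_i)}{\text{Sym}(G)}.
\]
Because the fixed external labels force any compatible tuple that can glue to $G$ to consist of pieces isomorphic to $G_1,\ldots,G_k$, the only surviving term of the outer sum is $[G_1],\ldots,[G_k]$, whose contribution to the coefficient of $G$ is $\tfrac{1}{\prod_i\text{Sym}(G_i)}$ times the count above, namely $\tfrac{1}{\text{Sym}(G)}$, as claimed.

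The main obstacle is the rigidity identification $\text{Stab}(\beta_0)=\text{Aut}(G)$ together with the matching claim that the $\beta$ producing $G$ form a single orbit: concretely, one must rule out isomorphisms of $G$ that mix the pieces, or that carry the gluing cut to a genuinely different minimal cut realizing the same piece data. This is exactly the point where the labelling of the genuine external edges and the minimality of $C$ are indispensable, and it is cleanest to phrase the entire argument at the level of labelled half-edges, where mixing of pieces cannot occur since each piece carries its own fixed external labels and the cut half-edges are tracked automatically; the honest gap to nail down is that, under these hypotheses, the pieces together with their external decoration reconstruct the cut uniquely.
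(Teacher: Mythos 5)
Your route is genuinely different from the paper's. The paper argues through exponential generating functions: it replaces each sum $\sum_{H_i}H_i/\text{Sym}(H_i)$ by the corresponding sum over half-edge-labelled objects weighted by $1/n_i!$, observes that $F$ together with the merging of label sets is precisely the product of labelled combinatorial classes, and reads off the weight $1/\text{Sym}(G)$ from the standard dictionary between labelled classes weighted by $1/n!$ and unlabelled ones weighted by inverse automorphism counts. You instead fix the pieces and run an orbit--stabilizer computation for the action of $\prod_i\mathrm{Aut}(G_i)$ on the $\prod_{i<j}e_{i,j}!$ gluing bijections; the identification of the stabilizer of $\beta_0$ with $\mathrm{Aut}(G)$ and the resulting count $\prod_i\text{Sym}(G_i)/\text{Sym}(G)$ are correct and make explicit the group theory that the generating-function product hides. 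The paper's route buys brevity and handles the outer sum over tuples automatically; yours buys an explicit description of which gluings produce $G$ and why.

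One correction to the step you flag as the ``honest gap'': the claim that $(G_1,\dots,G_k)$ is the \emph{only} compatible tuple gluing to $G$ is false in general, and no amount of rigidity will rescue it. Compatibility as defined fixes only the sizes $x_i$, $e_{i,j}$ and the \emph{total} vertex count, so a graph such as a chain of three bubbles with one external leg at each end admits three distinct minimal two-component cuts with identical data; the three corresponding tuples all glue to $G$, and by your own orbit--stabilizer count each contributes $1/\text{Sym}(G)$. This is not a defect of your argument relative to the paper's: the generating-function proof enumerates labelled graphs \emph{together with the cut that produced them}, so it yields exactly the same multiplicity. What both arguments actually establish is that the pair $(G,C)$ --- equivalently, $G$ counted once per compatible minimal cut --- carries weight $1/\text{Sym}(G)$, which is what is needed downstream where one sums over Cutkosky cuts anyway. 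So keep your per-tuple computation, state the conclusion for graph-with-cut pairs, and drop the uniqueness claim rather than trying to prove it.
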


\begin{proof}
Since we sum over all compatible $H_1, H_2, \ldots, H_k$, if we let $X_i$ be the sum over all $H_i$ with external edges appropriately partitioned and weighted by $\frac{1}{\text{Sym}(H_i)}$, then the homogeneous piece of 
\begin{equation}\label{eq product}
F(X_1, X_2, \ldots, X_k)
\end{equation}
with the same number of vertices as $G$ is the sum in the statement of the proposition.

Instead take the sum over all half-edges labelled $H_1, H_2, \ldots, H_k$ (up to isomorphism where the $x_i$ half-edges are fixed) and weight each one by $\frac{1}{n_i!}$ where $n_i$ is the number of half-edges of $H_i$.  Upon forgetting the labelling this will give us \eqref{eq product}.  

In the glued graphs of $X(H_1, H_2, \ldots, H_k)$ to get all possible labellings we must sum over all bijections of matching half-edges (as we do) and also consider all the ways of merging the labels from each $H_i$ into one set of labels for the result.  This is the standard product for labelled combinatorial objects and corresponds to the product of exponential generating functions.  Hence it gives all labelled graphs weighted by $\frac{1}{n!}$ where $n$ is the total number of half-edges, and upon forgetting the labellings the sums of graphs, now weighted by their symmetry, are simply multiplied, that is we get \eqref{eq product} which proves the result. 
\end{proof}

\subsection{$b_n$ off-shell}
We now progress as follows.
\begin{itemize}
\item We express amplitudes for sums of trees with a given number $j$ of off-shell external edges in terms of the dimenionless quantities $b_k$
and elementary symmetric polynomials in variables $x_e$, $x_e=q_e^2-m^2$, for off-shell edges $e$. Effectively, we can write such off-shell 
tree amplitudes in terms of internal propagators and meta-vertices provided by sums $\sum_j b_j$. 
\item Loop amplitudes are built from gluing sums of trees along $j\geq 2$ off-shell edges.
\item The Euler characteristic is used to conclude that for loop amplitudes with none or one external off-shell edge, internal edges cancel 
due to the Feynman rules so that the resulting graph is a one-vertex graph with $(|\Gamma|-1)$ (for zero external off-shell edges) or $|\Gamma|$ self-loops
(for one external off-shell edge). 
\item For zero off-shell external edges, the loop integrals vanish in any renormalization scheme. With one off-shell edge, they vanish in kinematic renormalization schemes. Accordingly, the $S$-matrix remains the unit matrix.
\item Cutkosky rules combined with dispersion relations lead to the same conclusion.
\item We outline the mechanism how an interacting field theory remains invariant under field diffeomorphisms in the context of kinematic renormalization schemes.
\item We discuss the very peculiar case of the two-point function with its two external legs off-shell, with regard to foundational properties of scattering and Haag's theorem.
\end{itemize}
\subsubsection{The tree amplitude $A^j$}
As an introductory remark, we mention that the first non-trivial coefficient $a_1$
of the field diffeomorphism provides a grading: It makes sense to regard a coefficient
$a_j,\,j>1$ as having order $a_1^j$ in $a_1$, given that  a tree on $j$ such $a_1$-vertices has order $a_1^j$
and has $j+2$ external legs, as have the vertices $d_j,c_j$.  Another way to say this is that we can think of $a_j$ as having degree $j$ and then the total degree is exactly this grading.

In this section, we investigate the behaviour of a tree-amplitude $A_n^j$  with $n$  external legs, $j$ of them off-shell, as a function of the kinematic variables 
\[ x_i:=q_i^2-m^2,\] defined by those off-shell legs $i$, $1 \leq i\leq j$. We label the off-shell legs $1,\ldots,j$, and the on-shell legs $j+1,\ldots,n$. Hence $x_i=0,\,i>j$.

Such off-shell external edges $i,\,i\leq j$ are incident to
a distinguished set of  vertices
$v_r\in V_{\mathrm{Ext}}\subseteq V_T$, $1\leq r \leq s$, with $s\leq j$ as there can be less than $j$ such vertices as two off-shell external edges might connect to the same vertex. $V_T$ is the set of all vertices of a tree $T$ contributing to $A^j_n$, similarly $E_T$ is the set of all internal edges. We set $|V_T|=:v_T$, $|E_T|=:e_T$.

The tree-amplitude $A^j=\sum_{m\geq j} A_m^j$ is a sum over contributions of all trees $T$  with $m\geq j$ external legs allowing for $j$ external off-shell and $(m-j)$ on-shell legs, and $A_m^j$ itself is defined through a sum over trees $T$
\[
A_m^j=\sum_{T\in\mathcal{T}_m} A_T,
\] 
where $A_T$ is the contribution of a tree $T$ with the given set of external legs off-shell.
Finally, $\mathcal{T}_m$ is the set of all trees with $m$ external legs. 

Note that we assume that all external momenta are in general position, so external momenta  or any partial sums of external momenta
fulfill no relations beyond momentum conservation. It follows that we can regard the set of variables $x_e$, $e\in E_T$
($e$ an internal edge) and the set of variables $x_e$, $e$ incident to $v\in V_{\mathrm{Ext}}$ ($e$ an external edge) as independent.

$A^j$ can be expanded in terms of the variables $x_i$ using elementary symmetric polynomials $E_j^i$ which defines functions  
$C_i^{(j)}=C_i^{(j)}(\{a_j\})$  such that
\begin{definition}\label{defc}
\[
A^j(\{x_i\})=:\sum_{i=0}^j E_j^i C_i^{(j)},
\]
\end{definition}
with \[
E_j^0=1,\, E_j^1=\sum_{i=1}^j x_i,\, E_j^2=\sum_{i_1<i_2}x_{i_1}x_{i_2},\, \ldots,\, E_j^j=\prod_{i=1}^j x_i,
\] 
the elementary symmetric polynomials in $j$ variables. 

We know already that the coefficient functions $C_0^{(j)}=0$, $C_1^{(j)}=b$, where
\[
b=\sum_{k=1}^\infty b_{k+1},
\]
is a formal sum omitting the constant term $b_1=1$, with $b_k$ given in Thm.(\ref{thm bn}). 
The $C_j^{(n)}$ are formal series which vanish when the diffeormorphism is trivial so that all $a_i=0$.  

To continue, we remind ourselves that we consider each tree as having one type of vertex which combines the standard and massive case.
Let $v$ be such a vertex of valence $n\geq 3$.

As announced earlier we re-expand it as
\[
v=\frac{d_{n-2}}{2} \left(\sum_{i=1}^n x_i\right)+ (c_{n-2}+n m^2\frac{d_{n-2}}{2})=:\sum_{i=0}^n v(i),
\]
with $v(0)=c_{n-2}+n m^2\frac{d_{n-2}}{2}$.
The summation runs over the $n$ edges incident to $v$
plus a constant term $v(0)\sim m^2$. As usual $x_i=q_i-m^2$.

For example for a three-valent $v$,
\[
v=\frac{d_1}{2}(x_1+x_2+x_3)+\left(\frac{3}{2}d_1m^2+c_1\right).
\]

In doing so, each vertex $v$ of valence $n$ is replaced by a sum over $n+1$ vertices $v(i)$.
If $v\in V_T$ for a tree $T$ we similarly consider $v(i)$ as an element  $v(i)\in V_T$.

For any tree $T$ we call a vertex $v(i)\in V_T$ externally marked if $i>0$ and edge $i$ is an off-shell external edge of $T$. Note that then $v(i)\in V_{\mathrm{Ext}}\subseteq V_T$. If $i$ an internal  edge of $T$, we call $i$ the marking of the vertex $v(i)$. Note that an internal edge can be at most the marking of two vertices simultaneously, as an internal edge connects two vertices.  Another way to think about the marking is as a selection of half-edges, one incident to each vertex.

An amplitude is called $k$-external if $k$ of its vertices $v_r\in V_{\mathrm{Ext}}$ are externally marked, a $0$-external amplitude is called internal. See Fig.(\ref{marking}) for an example.
\begin{figure}[!h]
\includegraphics[width=12cm]{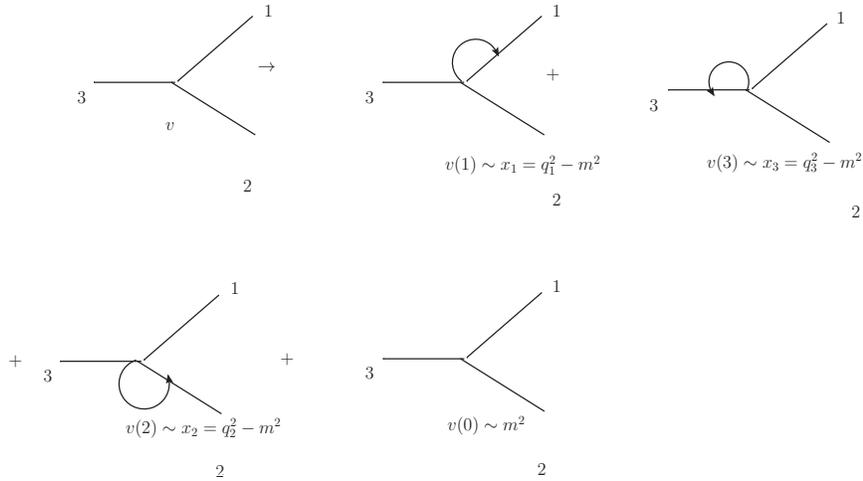}
\caption{Markings are given as a little arrow on the edge, originating from the marked vertex $v(i)$, we also give the massive vertex $v(0)$.  
We consider a vertex $v$ with three incident edges. Note that $v(i)=0$
if one of the edges $i$ is on-shell. If an edge $i$ is off-shell and internal, the vertex $v(i)$  cancels the propagator $1/x(i)$.}
\label{marking}
\end{figure}

We can organize the amplitude $C_i^{(j)}$ in terms of this decomposition of Feynman rules:
\[
C_i^{(j)}=\sum_{n=0}^i C_i^{(j)}(n)
\]
where $C_i^{(j)}(n),\,i\geq n,$ is the sum of all contributions of trees with $n$ externally marked off-shell vertices to it.
Note that the mass and momentum independence of $b$ implies that $C_1^{(j)}(0)$ is  itself independent of mass and momentum, as $C_1^{(j)}(1)$ is by definition. See also Cor.(\ref{corollaryER}).

To continue, we use that a tree with $n= |V_{\mathrm{Ext}}|\geq 3$ external edges is either a $n$-valent vertex or has an internal edge:
\begin{lemma}\label{treelemma}
Let $T^{E_n}$ be the sum of all trees with external legs labeled by the $n$-element set $E_n$.
Then, we have
\[
T^{E_n}=v+\sum_{E_n=E_{n_1}\amalg E_{n_2}} T^{E_{n_1}\cup e}\cdot e\cdot T^{E_{n_2}\cup e}, 
\]
where $v$ is a vertex of valence $n$ with its external edges labeled from $E_n$, the sum is over all partitions of $E_n$ into two disjoint
non-empty subsets $E_{n_i}$,
and $\cdot e \cdot$ implies a sum over all ways of connecting the two sums over trees by an internal edge $e$. Iterating, we get a decomposition of the sum of all trees  with with $(n-3)\geq k\geq 1$ internal edges into tree sums $T^{E_{n_i}}$, $i=1,\ldots,k+1$, and $n_1+\cdots n_{k+1}=n$ connected in all possible ways.
\end{lemma}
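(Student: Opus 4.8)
The plan is to prove the lemma as a purely combinatorial statement about the set of trees, treating a ``tree'' as a connected cycle-free graph whose internal vertices all have valence at least $3$ and whose external legs (the leaf edges) are labelled by $E_n$; the combined meta-vertices and their Feynman-rule weights ride along unchanged and play no role in the decomposition. First I would dispose of the dichotomy asserted just before the lemma: a tree with $|E_n|=n\geq 3$ is either a single $n$-valent vertex or has at least one internal edge. This is immediate from connectivity, for if the tree has no edge joining two internal vertices then it has exactly one internal vertex, and all $n$ external legs are incident to it, forcing valence $n$; otherwise an internal edge exists. This accounts for the base term $v$ on the right-hand side.

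The heart of the argument is a cut--glue bijection. Given a tree $T$ on $E_n$ together with a chosen internal edge $e$, break the half-edge--half-edge pairing of $e$ while keeping both half-edges, exactly as in the half-edge conventions introduced for the symmetry-factor discussion above. This splits $T$ into two connected pieces $T_1,T_2$, and the key point is that valences are preserved: each orphaned half-edge of $e$ becomes a new external leg labelled $e$, so every internal vertex still has valence at least $3$ and each $T_i$ is a legitimate tree on leg-set $E_{n_i}\cup\{e\}$, where $E_{n_1}\amalg E_{n_2}=E_n$ is a partition into nonempty subsets. Conversely, given such a partition and any trees on $E_{n_1}\cup\{e\}$ and $E_{n_2}\cup\{e\}$, re-pairing the two $e$-half-edges into one internal edge yields a tree on $E_n$, and the two operations are mutually inverse. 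This is precisely the meaning of the notation $\cdot\,e\,\cdot$ and of the sum over partitions, and it realises the single-edge step: the trees with at least one internal edge are exactly the gluings $T^{E_{n_1}\cup e}\cdot e\cdot T^{E_{n_2}\cup e}$ recorded by cutting one internal edge.

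For the iteration statement I would induct on the number $k$ of internal edges, which ranges over $1\leq k\leq n-3$, the upper bound coming from the $n-2$ internal vertices and hence $n-3$ internal edges of a fully trivalent tree. Applying the single-edge step to every factor $T^{E_{n_i}\cup e}$ that is not already a single vertex, and repeating, resolves after $k$ steps each tree with $k$ internal edges into its $k+1$ constituent vertices joined in a tree pattern; summing over all trees then becomes the sum over all ways of connecting $k+1$ vertex-blobs $T^{E_{n_i}}$ with $n_1+\cdots+n_{k+1}=n$, which is the claimed decomposition.

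The step requiring genuine care, and the one I regard as the main obstacle, is the bookkeeping of multiplicities. A single pass of the cut--glue bijection summed over all partitions reproduces each tree once for every internal edge it has, so the right-hand side naively overcounts any tree with more than one internal edge. The clean way to see that the decomposition nonetheless holds as intended is to read the displayed identity as cutting at one distinguished edge and to organise the induction around the \emph{complete} decomposition: cutting all internal edges simultaneously is a canonical operation with no choices, so the sum over all trees with $k$ internal edges matches the sum over all connection patterns of $k+1$ vertices exactly once each. Verifying this compatibility between the one-edge-at-a-time recursion and the canonical full decomposition --- equivalently, arranging the iterated cuts on pairwise distinct edges so that no tree is produced twice --- is the only delicate point, everything else being the valence-preservation observation above. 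This is the same style of cutting that underlies the $b_n$ recursion of Proposition~\ref{prop bn rec} and Theorem~\ref{thm bn}, where a single distinguished edge $e$ is always present.
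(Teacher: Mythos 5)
Your proof is correct and considerably more careful than the paper's, which disposes of the lemma with the single phrase ``Definition of a labeled tree''; the cut--glue bijection on half-edges that you spell out is evidently what that phrase is meant to invoke, and your observations that cutting an internal edge preserves the valence condition on the resulting pieces and that the single-vertex term accounts exactly for the trees with no internal edge are precisely the two facts the paper takes for granted. The genuinely valuable part of your write-up is the multiplicity point, which the paper does not address at all: read literally as an identity of formal sums of trees, the displayed formula counts each tree once per internal edge (already for $n=5$ the three-vertex caterpillar trees, having two internal edges, appear twice on the right-hand side while the two-vertex trees appear once), so the equality only holds either with a distinguished cut edge or after passing to the canonical complete decomposition into $k+1$ vertex blobs, exactly as you say. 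Your resolution is consistent with how the lemma is actually used downstream --- in the computation of $C_2^{(2)}$ a specific unmarked edge $e$ on the path between the two marked vertices is singled out, and in Theorem~\ref{Cdecomp} it is the complete decomposition into meta-vertices that is invoked --- though it would strengthen your write-up to state explicitly that the iterated form is the one free of overcounting while the one-step displayed identity must be read as ``cut at a chosen edge.'' There is no gap beyond this interpretive one, which originates in the lemma's statement rather than in your argument.
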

\begin{proof}
Definition of a labeled tree.
\end{proof}

To continue, we note that every vertex has mass dimension 2, and every internal propagator has mass dimension -2.
By the Euler characteristic, each tree amplitude has therefore dimension +2.
It follows that $C_i^{(j)}$ has dimension $-2i+2$.

Now consider $C_2^{(2)}(x_1,x_2)$. Contributions to it come from trees in which the two off-shell external edges $x_1,x_2$ connect to corresponding distinct vertices $v_1,v_2$ say (all vertices are at most linear in off-shell variables $x_i$, hence coupling two off-shell external legs $x_1,x_2$ to the same vertex only generates terms $\sim(x_1+x_2)$). Therefore there is a path $p_{12}$ between $v_1,v_2$ which contains at least one edge $e$ say,
which crucially remains unmarked.

For $C_2^{(2)}$ we have an expansion then using this intermediate off-shell propagator $1/x_e$, in particular:
\[
C_2^{(2)}(x_1,x_2)=\sum_{n=2}^\infty \sum_{n_1+n_2=n,n_i>0}b_{n_1+1}\frac{1}{x_e}b_{n_2+1}=:b\frac{1}{x_e}b,
\]
where $x_e=q_e^2-m^2$ with $q_e$ the sum of external momenta flowing into the tree sums at $v_e^+$, the vertex of $e$ closer to $v_1$, (and $v_e^-$ the vertex closer to $v_2$). Sums over trees, orientations and over all distributions of external edges are understood in this condensed notation. 

We use Lemma (\ref{treelemma}) that all trees with at least one internal propagator and $n$ external edges are obtained from connecting two sums over trees by an internal edge, and all ways of distributing $n=n_1+n_2$ external edges over them. Note that the vertices $v_e^+,v_e^-$ are internal with respect to edge $e$: edge $e$ is neither the marking for $v_e^+$ nor for $v_e^-$, as $v_e^+,v_e^-$ are 1-external emplitudes with respect to $x_1,x_2$. 
Therefore 
\[
A^2=x_1x_2 b\frac{1}{x_e} b+(x_1+x_2) b,
\]
as desired.
Note that $C^{(2)}_2$ is independent of masses and momenta,
as it factorizes into $C^{(2)}_1$ factors.

This argument continues, and $C_k^{(n)},n\geq k$, has an expansion in terms of products of $k-1$ intermediate propagators.
Crucial is the Euler characteristic, which determines for a tree $T$ that is has one more vertex than edge,  $e_T=v_T-1$.
So if $k$ vertices mark external edges, we have $k-1$ unmarked internal edges.

This determines $A^n$ completely. We set
\[
B_j:=\sum_{i=j+1}^\infty b_i,
\]
so $B_1=b$.
Now consider $k$ tree sums which are 1-external each, and connected by $k-1$ unmarked internal edges in all possible orientations. 
Regarding a 1-external tree-sum as a meta-vertex itself, of valence given by the number of internal edges incident to it,  this gives sums over meta-trees with $k$ meta-vertices of valence $\geq 1$: 
\begin{thm}\label{Cdecomp}
\[
C^{(j)}_k=\sum_{T\in\mathcal{T}_k} \prod_{v\in V_T} B_{|v|+1}\prod_{e\in E_T}\frac{1}{x_e},
\]
with $|v|$ the valency of the vertex $v$, $x_e=q_e^2-m_2$ containing the momentum flow through the internal edge $e$, and the sum is over all non-rooted trees $T\in\mathcal{T}_k$ which is  the set of trees with $k$ vertices and  with vertex set $V_T$, and the valence $|v|\geq 1$ for all vertices. 
\end{thm}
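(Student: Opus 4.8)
\textbf{Proof plan for Theorem~\ref{Cdecomp}.}

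The plan is to prove this by induction on $k$, the number of externally marked off-shell vertices, building on the structural decomposition already established for the small cases $C_1^{(j)}$ and $C_2^{(2)}$. The base case $k=1$ is just the definition $C_1^{(j)} = b = B_1$, since a single externally marked vertex forms a meta-tree on one vertex and $\mathcal{T}_1$ consists of a single vertex with no internal edges. For the inductive step I would unpack what it means for a tree $T$ contributing to $C_k^{(j)}$ to have exactly $k$ externally marked vertices. The key geometric observation, already used for $C_2^{(2)}$, is the Euler-characteristic count $e_T = v_T - 1$: if exactly $k$ of the vertices carry external markings, then exactly $k-1$ internal edges remain \emph{unmarked}. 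These $k-1$ unmarked edges are precisely the edges that survive as genuine propagators $1/x_e$, because every off-shell internal edge incident to a marked vertex has its propagator cancelled by the marking vertex $v(i)$, exactly as recorded in Figure~\ref{marking}.

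The heart of the argument is to organize the full tree $T$ by contracting each maximal \emph{1-external} sub-tree-sum into a single meta-vertex. Concretely, I would argue that every tree contributing to $C_k^{(j)}$ decomposes uniquely as $k$ regions, each of which is a 1-external tree sum (hence contributing a factor $b=B_1$ together with its internal structure), glued together along the $k-1$ unmarked internal edges. Regarding each such 1-external region as a meta-vertex whose meta-valence is the number of unmarked internal edges incident to it, the $k$ meta-vertices joined by $k-1$ edges form a tree $T' \in \mathcal{T}_k$ on $k$ vertices (a tree is exactly a connected graph on $k$ vertices with $k-1$ edges, and connectivity is inherited from $T$). To get the correct Feynman-rule weight for a meta-vertex of meta-valence $|v|$, I would invoke Lemma~\ref{treelemma} to expand the 1-external tree sum attached at that meta-vertex: summing over all ways of distributing the remaining on-shell external legs and over all internal tree structure, while keeping $|v|$ distinguished half-edges reserved for the incident unmarked propagators, yields precisely $B_{|v|+1} = \sum_{i \geq |v|+2} b_i$. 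The shift by one in the index reflects that a meta-vertex of valence $|v|$ must have at least $|v|+1$ legs total before the on-shell legs are summed away, mirroring the single-propagator computation $C_2^{(2)} = b\,\frac{1}{x_e}\,b$ where each side is a 1-external sum $B_1 = b$ attached to the one internal edge.

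The main obstacle I anticipate is verifying that this decomposition is a genuine bijection with the correct multiplicities: I must check that summing over all trees $T$ contributing to $C_k^{(j)}$, with all orientations and all distributions of external edges, reproduces exactly the sum over non-rooted meta-trees $T' \in \mathcal{T}_k$ with no overcounting and no spurious symmetry factors. The subtlety is that the meta-trees are \emph{non-rooted} and the meta-vertices are genuinely symmetric objects, so I must confirm that the orientations of the $k-1$ unmarked internal edges (summed over in the condensed notation, as for $C_2^{(2)}$) and the labellings of external legs are accounted for exactly once per underlying graph. Here I would lean on the structural content of Lemma~\ref{treelemma}, which expresses the sum of all trees as an unordered gluing over partitions of the external leg set, so that iterating it naturally produces each meta-tree topology once. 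The remaining bookkeeping — that the propagator factors $1/x_e$ attach to exactly the $k-1$ unmarked edges and that no marked edge contributes a surviving propagator — follows from the cancellation mechanism of Corollary~\ref{massindep} together with the marking convention, so that what is left is precisely $\prod_{v \in V_{T'}} B_{|v|+1} \prod_{e \in E_{T'}} \frac{1}{x_e}$.
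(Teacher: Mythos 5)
Your plan is essentially the paper's own argument: the Euler-characteristic count $e_T=v_T-1$ giving $k-1$ unmarked surviving propagators, the contraction of each $1$-external tree sum into a meta-vertex weighted by a $B$ factor, and the gluing of these meta-vertices into meta-trees via Lemma~\ref{treelemma}. The paper's proof is a one-line compression of exactly this; your additional attention to the bijection and multiplicity bookkeeping only makes explicit what the paper leaves implicit.
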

\begin{proof}
Summing over meta-trees gives the indicated $B$ factor for each vertex, and a propagator for each internal edge using Lemma (\ref{treelemma}) again.
\end{proof}
 
\subsubsection{Using the Euler characteristic}
Consider a loop amplitude with $n$ on-shell external edges.
It is a sum over all connected graphs with the indicated number of on-shell external edges and decomposes into homogeneous parts with respect to the loop number. Refine further and concentrate on those graphs which allow for a choice of $j\geq 2$ internal edges such that removing these edges decomposes the amplitude into two tree-level amplitudes $A^{j}_{n_1},\,A^{j}_{n_2}$. Summing over all $j$ reproduces the full amplitude.

Now let us go back and discuss the presence of a twice-marked edge.
As $C_1^{(j)}(1)$ is built from trees in which the number of markings equals the number of internal edges so that all internal propagators cancel 
out (directly or by the mechanism of Fig.(\ref{fig0})) it follows that it is on its own independent of masses and momenta. Hence, $C_1^{(j)}(0)$ is. After cancelling internal edges in this way in $C_1^{(j)}(0)$ there remains a single  twice-marked edge.  We can evaluate this by evaluating
\[
T^{E_n}(0):=v(0)+\sum_{E_n=E_{n_1}\amalg E_{n_2}} T^{E_{n_1}\cup e}(1)^e\cdot e\cdot T^{E_{n_2}}(1)^e,
\]
where
$T^{E_{n_i}\cup e}(1)^e$ is the set of all trees which are $1$-external with $e$ as their corresponding marked edge for both of them.
In summary we have
\begin{cor}\label{corollaryER}
\[
E_j^1C_1^{(j)}=b E_j^1=E_j^1 \left(C_1^{(j)}(0)+C_1^{(j)}(1)\right).
\]
\end{cor}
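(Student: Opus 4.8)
The plan is to treat the two equalities in the statement separately. The left equality $E_j^1 C_1^{(j)} = b\,E_j^1$ is nothing more than the already-recorded value $C_1^{(j)} = b$, so all the genuine content sits in the right equality, which amounts to the marking decomposition $C_1^{(j)} = C_1^{(j)}(0) + C_1^{(j)}(1)$ together with the individual mass- and momentum-independence of the two pieces.

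First I would explain why only the marking counts $n=0$ and $n=1$ can contribute to $C_1^{(j)}$. By Definition~\ref{defc}, $C_1^{(j)}$ is the coefficient of the linear symmetric polynomial $E_j^1 = \sum_{i=1}^j x_i$, so every contributing tree carries exactly one power of an off-shell variable. Each externally marked vertex $v(i)$ supplies precisely one factor $\frac{d_{|v|-2}}{2}x_i$; a tree with two or more externally marked off-shell vertices therefore contributes at degree at least two in the $x_i$ and cannot enter $C_1^{(j)}$. Hence the general decomposition $C_i^{(j)} = \sum_{n=0}^i C_i^{(j)}(n)$ specializes at $i=1$ to exactly $C_1^{(j)} = C_1^{(j)}(0) + C_1^{(j)}(1)$, which is the right equality of the statement.

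Next I would establish the mass- and momentum-independence of $C_1^{(j)}(1)$ using the Euler characteristic $e_T = v_T - 1$. In a $1$-externally-marked tree the marked vertex acts as a root; every other vertex then marks its unique parent half-edge, so the $v_T - 1$ remaining markings match the $e_T = v_T - 1$ internal edges bijectively. Each internal propagator $1/x_e$ is thereby cancelled by the numerator factor $x_e$ of the vertex $v(e)$ marking it, exactly the cancellation of Corollary~\ref{massindep} illustrated in Figure~(\ref{marking}). With no $x_e$ or $m^2$ surviving, $C_1^{(j)}(1)$ is independent of masses and momenta; and since $C_1^{(j)} = b$ is itself mass- and momentum-independent by Theorem~\ref{thm bn}, the difference $C_1^{(j)}(0) = C_1^{(j)} - C_1^{(j)}(1)$ is independent as well.

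The hard part will be pinning down $C_1^{(j)}(0)$, where no vertex is externally marked yet the contribution is still linear in the $x_i$. Here the lone surviving off-shell factor cannot come from an external marking, so it must arise from an internal edge $e$ marked by both of its incident vertices: such a twice-marked edge contributes $x_e^2$ against a single propagator $1/x_e$, leaving one factor $x_e$, which reduces to the external $x_i$ through momentum conservation together with the on-shell conditions $x_\ell = 0$ for $\ell > j$. The delicate point is to check, again via $e_T = v_T - 1$, that after all once-marked internal edges cancel exactly one twice-marked edge can survive, and that summing the two sub-tree sums flanking it reassembles into the recursive shape $T^{E_n}(0)$ recorded just above the statement. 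Once this is in hand the three expressions agree and the corollary follows.
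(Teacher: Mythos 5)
Your proposal is correct and takes essentially the same route as the paper: the left equality is the already-recorded value $C_1^{(j)}=b$ from Theorem~\ref{thm bn}, the right equality is the marking decomposition $C_i^{(j)}=\sum_{n=0}^i C_i^{(j)}(n)$ specialized to $i=1$, and your treatment of once- and twice-marked edges via the Euler characteristic and Lemma~\ref{treelemma} is exactly the paper's surrounding discussion and the mechanism of Figure~\ref{fig0}. The paper's own proof is a one-liner citing Theorem~\ref{thm bn}, Lemma~\ref{treelemma}, and the fact that $v(0)$ is the only $0$-external vertex, so your write-up is, if anything, more explicit.
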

\begin{proof}
Follows from Thm.(\ref{thm bn}) and from Lem.(\ref{treelemma}), using that $v(0)$ is the only 0-external vertex.
\end{proof}
Note that $C_1^{(j)}(0)$ can be easily computed setting masses to zero using mass independence of $b_n$ and using momentum conservation.

Cor.(\ref{corollaryER}) is explained in Fig.(\ref{fig0}).
\begin{figure}[!h]
\includegraphics[width=16cm]{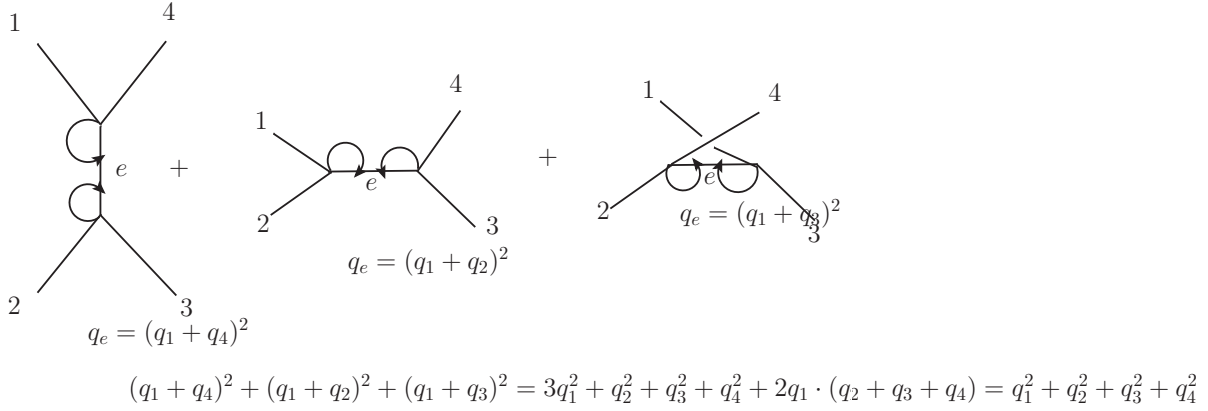}
\caption{Summing over orientations, a twice-marked edge gives a contribution proportional to the sum of its external propagators $x_i=q_i^2$
(in the massive case, massive vertices $v(0)$ guarantee the same result for $x_i=q_i^2-m^2$.}
\label{fig0}
\end{figure}
One can understand this from kinematics and momentum conservation. The momentum flow through the internal edge $e$ is given by the squared sum $(\sum_i q_i)^2$ of all external momenta $q_i$ incident to one tree sum.
Scalar products $2q_i\cdot q_j=(q_i+q_j)^2-q_i^2-q_j^2$ in that square can be replaced by sums of squares $q_i^2$ of momenta in the sum over all orientations due to momentum conservation. The internal propagators cancel, and by dimension counting, the result is linear and symmetric in off-shell variables $x_e$.  In fact, every symmetric function of variables $x_{ij}=q_i\cdot q_j$ can be replaced by suitable symmetric functions in variables given by squares $q_i^2$.

Consider two contributions $C_{i_1}^{(j)},\,C_{i_2}^{(j)}$ of such amplitudes $A^{j}_{n_1},\,A^{j}_{n_2}$.
They contain $i_1+i_2-2$ unmarked internal edges. Let $e$ be an externally marked edge of $C_{i_1}^{(j)}$
and $f$ be an externally marked edge of $C_{i_2}^{(j)}$. 

Let $v_e$ be the 1-external meta-vertex  to which edge $e$ is adjacent, similarly for $v_f$.
Glue $e,f$ together so that they form a new internal doubly marked edge $g$. Then, $(v_e,g,v_f)$ constitute an internal amplitude in an obvious manner,
as in the lhs of Fig.(\ref{fig0}).
Hence edge $g$ shrinks and the resulting sum over the two edges external to $v_e\cup v_f$ cancels either an internal unmarked edge adjacent to $v_e$ or one adjacent to $v_f$, in accordance with Fig.(\ref{fig0}) above. Summarizing, a single marked edge $e$ cancels (as $x_e/x_e=1$) and  a double-marked edge $e$ cancels itself and a neighbouring unmarked edge. It follows that the number of cancelled edges agrees with the number of vertices in total.  

As all connected graphs with loops can be obtained from gluing tree-sums in all possible ways, Prop.(\ref{gluetree}) gives us now
\begin{lemma}
i) In such a sum of graphs $G$ with $v_G$ vertices and $v_e$ edges  and $l$ loops we can cancel $v_G$ propagators if all external edges are on-shell. We are left with a single vertex with $l-1$ self-loops.\\
ii) In such a sum of graphs $G$ with $v_G$ vertices and $v_e$ edges  and $l$ loops we can cancel $v_G-1$ propagators if all external edges but one  are on-shell. We are left with a single vertex with $l$ self-loops.  
\end{lemma}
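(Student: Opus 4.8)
The plan is to let the Euler characteristic do the counting while the gluing result of Proposition~\ref{gluetree} and the off-shell tree analysis of Section~\ref{treeamp} supply the algebraic cancellations. First I would fix a connected graph $G$ in the sum with $v_G$ vertices, $v_e$ internal edges, and $l$ loops, so that the Euler relation reads $v_e = v_G + l - 1$. By Proposition~\ref{gluetree}, summing over all such graphs weighted by $1/\mathrm{Sym}$ reproduces the gluing of the tree-sums along the cut edges with the correct symmetry factors, so it suffices to follow the propagators under the marking decomposition $v = \sum_i v(i)$ inherited from those tree-sums. Theorem~\ref{Cdecomp} and Corollary~\ref{corollaryER} fix, for each tree-sum, that a $k$-external piece carries $k$ markings against $k-1$ unmarked internal edges, which is exactly the input the gluing needs.

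Next I would use that, when all external edges are on-shell, any vertex $v(i)$ whose marking $i$ points to an on-shell external edge vanishes, so every surviving marking either points to an internal edge or is the massive vertex $v(0)$. The cancellation mechanism isolated in the tree case then applies edge by edge: a singly-marked internal edge cancels directly since $x_e/x_e = 1$, while an edge glued from two externally-marked half-edges becomes doubly-marked and, together with the accompanying $v(0)$ terms and the sum over orientations, cancels both itself and a neighbouring unmarked edge exactly as in Figure~\ref{fig0}. The content of the lemma is then the bookkeeping already announced in the text: each vertex carries exactly one marking, and summed over all configurations these markings cancel precisely one propagator per vertex, that is $v_G$ propagators in all.

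Granting this count the topology is forced. Cancelling $v_G$ edges among $v_G$ vertices uses a spanning tree's worth of $v_G-1$ edges to merge all vertices into one, and cancels one further edge which at that stage is already a self-loop; the surviving edges number $v_e - v_G = l-1$ and are all self-loops on the single remaining vertex, proving (i). For (ii) exactly one marking is diverted: the vertex incident to the single off-shell external edge is now externally marked, so its marking no longer cancels an internal propagator, and only $v_G-1$ internal edges cancel. These $v_G-1$ edges form a spanning tree, merging all vertices into one without touching any loop, so the $v_e - (v_G-1) = l$ surviving edges are self-loops, proving (ii).

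The main obstacle I anticipate is the combinatorial accounting of the markings through the gluing. One must verify that, after summing over all graphs with their symmetry factors and over all marking configurations, the doubly-marked edges produced at the cuts interact with their unmarked neighbours exactly as in Figure~\ref{fig0} without two doubly-marked edges competing for a shared neighbour, and that the set of cancelled edges forms a connected spanning subgraph so that the collapse genuinely yields a single vertex rather than several disconnected ones. Turning the heuristic ``number of cancelled edges equals number of vertices'' into the exact equalities $v_G$ and $v_G-1$ — rather than an inequality — is precisely where the rigid marking structure guaranteed by Theorem~\ref{Cdecomp} and Corollary~\ref{corollaryER} must be invoked.
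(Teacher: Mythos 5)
Your proposal follows essentially the same route as the paper: the Euler relation $e_G = v_G + l - 1$ combined with the marking/cancellation mechanism (one cancelled propagator per vertex, with doubly-marked edges handled as in Fig.~\ref{fig0}) gives $v_G$ cancellations in the all-on-shell case and $v_G-1$ when one leg is off-shell, collapsing the graph to a rose with $l-1$ resp.\ $l$ petals. The paper's own proof is far terser, delegating the marking bookkeeping to the surrounding discussion; your version makes that bookkeeping and the spanning-tree collapse explicit, and correctly flags the connectedness of the cancelled edge set as the point needing care.
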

\begin{proof}
From the Euler characteristic, $e_G=v_G+l-1$ for a connected graph $G$, with $l=|G|$ its loop number.
With all external legs on-shell,  
$v_G$ internal propagators shrink leaving a single vertex. Actually,  shrinking $v_\Gamma-1$ of them leaves a rose, that is a single vertex with $l$ edges attached forming petals (self-loops). One of the petals is then still cancelled. 
If we leave one external edge off-shell, all $l$ petals remain.
\end{proof}

To continue, we use some elementary facts from kinematic renormalization.

By analytic continuation, we can consistently set
\[
\int d^Dk 1=0,
\]
which in fact is true in any renormalization scheme by analytic continuation,
and 
\[
\int d^Dk \frac{1}{k^2-m^2}=0,
\]
which is true in any kinematic renormalization scheme \cite{BrK}.

We now conclude
\begin{thm}\label{ThmLoop}
Let $A_m^{(l)}$ be a connected $l$-loop amplitude with $m$ external edges of which are at least $m-1$ are  on-shell. Let it be renormalized in kinematic renormalization conditions. Then
$A_m^{(l)}=0$.
\end{thm}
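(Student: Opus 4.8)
The plan is to combine the structural reduction built up in the preceding subsection with the two vanishing integral identities just recorded. By the discussion following Theorem~\ref{Cdecomp}, any connected $l$-loop amplitude $A_m^{(l)}$ with the prescribed on-shell conditions is assembled by gluing one-external and internal tree-sums along off-shell internal edges, and Proposition~\ref{gluetree} guarantees that this gluing reproduces each diagram with exactly its correct symmetry factor $1/\mathrm{Sym}(G)$. So it suffices to show that the glued object vanishes graph by graph (equivalently, in the symmetry-weighted sum), and for that I would invoke the preceding Lemma on the Euler characteristic.

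First I would fix a connected graph $G$ contributing to $A_m^{(l)}$, with $v_G$ vertices and $e_G$ internal edges, and recall $e_G = v_G + l - 1$. The mechanism of Figure~\ref{fig0} and Corollary~\ref{corollaryER} tells us that each single-marked internal edge cancels as $x_e/x_e=1$, while each doubly-marked internal edge cancels itself together with a neighbouring unmarked edge; by the counting in the Euler-characteristic Lemma the total number of cancelled propagators equals $v_G$ in the fully on-shell case and $v_G-1$ when exactly one external edge is left off-shell. In the first case this collapses $G$ to a single vertex carrying $l-1$ self-loops, and in the second to a single vertex carrying $l$ self-loops. The key point is that the numerator contributions supplied by the meta-vertices $B_{|v|+1}$, which by Theorem~\ref{thm bn} and Corollary~\ref{massindep} are independent of masses and momenta, are precisely what cancel the propagators, so no leftover momentum-dependent numerator obstructs the collapse.

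Once $G$ has been reduced to a rose, the remaining integral is a product of scaleless tadpole integrations. For the fully on-shell case ($l-1$ petals after one further petal is cancelled, per the Lemma) the surviving loop integrations are of the form $\int d^Dk\,1$, which vanish by analytic continuation in \emph{any} renormalization scheme. For the case with one off-shell external edge the $l$ petals each furnish an integral $\int d^Dk\,\frac{1}{k^2-m^2}$, which vanishes in any kinematic renormalization scheme by the cited result of \cite{BrK}. Since every contributing $G$ collapses to such a vanishing rose, and the symmetry factors are correct by Proposition~\ref{gluetree}, summing over all $G$ gives $A_m^{(l)}=0$.

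The main obstacle I anticipate is bookkeeping the cancellation of propagators against vertex numerators at the level of the symmetry-weighted sum rather than for an individual diagram: the edge cancellations of Figure~\ref{fig0} only occur after summing over orientations and over the distribution of external edges, so I must be careful that the collapse to a rose is a statement about the \emph{sum} of glued graphs with their symmetry factors, which is exactly what Proposition~\ref{gluetree} licenses, and not a naive diagram-by-diagram manipulation. A secondary subtlety is the threshold case $l=1$ in the fully on-shell situation, where $l-1=0$ petals remain and one must check that the single vertex with no loop still yields zero (it does, as the one surviving petal is cancelled leaving $\int d^Dk\,1$); I would verify this boundary case explicitly to make sure the induction on loop order launches correctly.
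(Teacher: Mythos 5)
Your proposal is correct and follows essentially the same route as the paper: the Euler-characteristic lemma collapses the symmetry-weighted sum of glued graphs to a rose, and the surviving petals give $\int d^Dk\,\tfrac{1}{k^2-m^2}=0$ in kinematic schemes (plus the extra $\int d^Dk\,1=0$ factor when all legs are on-shell). The paper's own proof is just a terser statement of this same argument, so your added care about symmetry factors and the $l=1$ boundary case is elaboration rather than a different method.
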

\begin{proof}
By the Euler characteristic, the amplitude (for one external edge off-shell) is proportional to  
\[
\prod_{j=1}^{l}\int\frac{d^4k_j}{k_j^2-m^2}=0,
\]
and each factor vanishes in kinematic renormalization conditions. If no external edge is off-shell, we even get an extra  factor $\int d^4 k 1 =0$ 
which vanishes under any renormalization condition even.
\end{proof} 
Note that we allow one external edge to be off-shell. With Thm.(\ref{ThmLoop}) this allows us to conclude that the propagator remains free in any scattering process. Fig.(\ref{figtwopoint}) gives the mechanism.
\begin{figure}[!h]
\includegraphics[width=10cm]{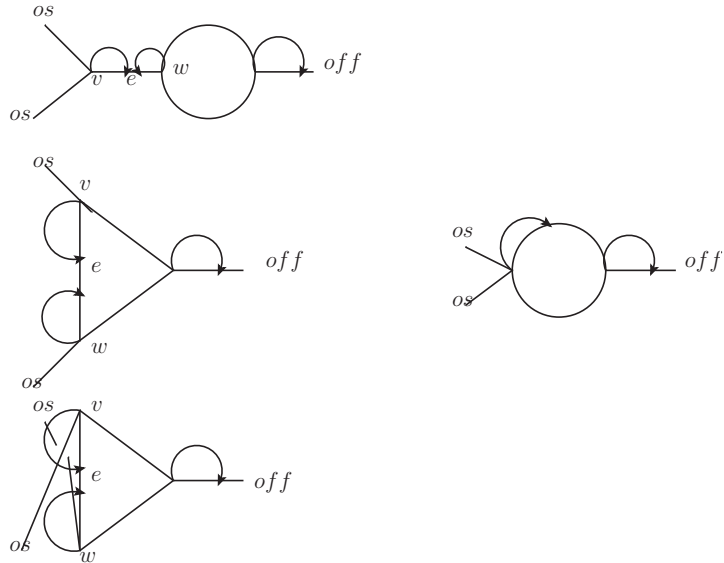}
\caption{In the three graphs on the left, a twice-marked edge $e$ connects two vertices $v,w$ in all three possible orientations. It hence shrinks, and as the only off-shell edges attached to it are internal edges of the one-loop bubble, the latter becomes a tadpole as indicated. Note that this has bearing on the LSZ formalism.}
\label{figtwopoint}
\end{figure}
This has a remarkable interpretation with regard to the LSZ formalism and asymptotic states which we consider below
in the context of field diffeomorphisms of an interacting theory.
In preparation, let us store the following lemma.
\begin{lemma}\label{lemmabanana}
For two external off-shell legs, the two-point function is supported on banana graphs. The latter are primitive elements in the Hopf algebra of renormalization.
\end{lemma}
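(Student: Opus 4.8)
The plan is to establish the two assertions separately: first that, after the propagator cancellations already in hand, the surviving doubly-off-shell two-point topology is a two-vertex graph, and second that such graphs have trivial reduced coproduct once tadpoles are discarded.

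For the first assertion I would extend the marking/Euler-characteristic argument of the lemma preceding Theorem~\ref{ThmLoop} from zero or one off-shell external legs to \emph{two}. Write the two-point function with both legs off-shell as $A^2 = x_1 x_2\,C_2^{(2)} + (x_1+x_2)\,C_1^{(2)}$, recalling $C_0^{(2)}=0$. The genuinely doubly-off-shell content is the coefficient of $x_1 x_2$, and exactly as in the discussion of $C_2^{(2)}$ this forces the two external legs to attach to two \emph{distinct} vertices $v_1,v_2$, each of which is then externally marked; the configurations in which both legs meet a single vertex are linear in the off-shell variables and so feed only the symmetric $(x_1+x_2)$ piece, which by Theorem~\ref{ThmLoop} and Figure~\ref{figtwopoint} collapses to tadpoles that vanish in kinematic renormalization. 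For a connected $l$-loop graph $G$ with $v_G$ vertices the Euler relation $e_G = v_G + l - 1$ leaves $v_G-2$ vertices whose markings are free to fall on internal edges, the two external markings being spent on $v_1,v_2$. Summing over graphs, markings and orientations with the symmetry factors supplied by Proposition~\ref{gluetree}, each internal marking cancels one internal propagator by the mechanism of Figure~\ref{fig0} and Corollary~\ref{corollaryER}; cancelling $v_G-2$ internal edges contracts a spanning forest whose two components are rooted at $v_1$ and $v_2$, collapsing $G$ to two vertices joined by the remaining $e_G-(v_G-2)=l+1$ parallel edges, which is precisely the $l$-loop banana, its two meta-vertices being the one-external sums $B_{|v|+1}$ of Theorem~\ref{Cdecomp}.

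For the second assertion I would compute the reduced coproduct of a banana directly. Since all of its edges join the same two vertices, every proper $1$PI subgraph is again a banana on those two vertices carrying $k$ of the edges with $2\le k\le l$; contracting such a subgraph identifies the two vertices and converts the remaining $l+1-k\ge 1$ edges into self-loops, so each cograph is a tadpole. By the defining property of the kinematic scheme recalled in Subsection~\ref{subsec prev}, namely $\Phi_R(\Gamma)=0$ for every tadpole $\Gamma$, all of these subdivergence terms vanish, so modulo tadpoles $\Delta\Gamma = \Gamma\otimes\One + \One\otimes\Gamma$ and the banana is primitive.

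The main obstacle is the combinatorial bookkeeping of the first assertion: verifying that, after summing over all graphs with their symmetry factors and over all markings and edge orientations, the marked internal edges really do assemble into a spanning forest with exactly two components, with no residual configurations that contract to more than two vertices and no double counting. This is the two-rooted analogue of the single-vertex and rose reductions already proved, and the delicate point is checking that the two external markings at $v_1$ and $v_2$ persist as the two roots throughout the edge-by-edge cancellation of Figure~\ref{fig0}, so that what survives are exactly the $l+1$ rungs of the banana rather than a larger multigraph.
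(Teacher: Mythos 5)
Your proposal is correct and follows essentially the same route as the paper: the paper's (much terser) proof likewise invokes the Euler-characteristic/propagator-cancellation mechanism to reduce the doubly-off-shell two-point function to a graph on two vertices, discards tadpoles via the kinematic scheme to leave pure bananas, and derives primitivity from the fact that every co-graph of a banana subgraph is a tadpole. Your version merely spells out the bookkeeping (the $v_G-2$ contracted edges forming a two-component spanning forest, the $l+1$ surviving rungs) that the paper leaves implicit.
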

\begin{proof}
With two off-shell external edges, we remain with a graph on two vertices. Such graphs are multiple edges between the two vertices, with possible tadpoles at either vertex. As tadpoles vanish in kinematic renormalization, we are left with pure banana graphs. The second assertion follows from the fact that every co-graph is a tadpole. Tadpole graphs form an ideal and co-ideal by which we can divide in a kinematic renormalization scheme.
See also \cite{Blat}.
\end{proof}
\subsubsection{Consistency with analyticity}
Note that the above considerations are in accord with the expectations from
the study of analytic structures of Feynman graphs \cite{BlKcut}, and in particular with the structure of iterated dispersion  derived there.

For this consider Cutkosky's theorem and dispersion relations. 
By the former, we can relate the imaginary part of an amplitude to processes with intermediate states on-shell. The latter allow us to regain the real parts from dispersion integrals over the imaginary parts.

Consider a connected one-loop amplitude with $2\leq k=k_1+k_2$ external edges on-shell, $k_1>0$ assigned to incoming states, $k_2>0$ to outgoing states.
Its imaginary part is given as a 2-particle phase space integral over two tree-level on-shell amplitudes with $k_1+1$ and $k_2+1$ on-shell external particles each. The latter amplitudes vanish, and hence does the imaginary part. So does the dispersion integral and therefore the real part, thus the full amplitude.

Note that any cut on a one-loop amplitude is a complete cut -a complete cut is set of internal edges which upon removal decomposes the graph into pieces which have vanishing first Betti number, i.e.\ no loops, in the sense of \cite{BlKcut}. Generalizing, complete cuts vanish on $l$-loop amplitudes as we are left with a phase-space integral over on-shell tree integrals. Incomplete cuts give us phase-space integrals over on-shell loop amplitudes
over $k<l$ loops. So we can use induction over the loop number using that at one loop, every cut is complete.

\section{Diffeomorphisms of an interacting theory}
Now let us add the interaction term $\frac{g}{4!}\phi^4$ to the original theory, and let us  apply the field diffeomorphisms.
Apart from the vertices constructed above, we have a new infinite set of vertices $e_n, n\geq 4$, all $\sim g$ of valency $n$, from the new 
interaction term
\[
\frac{g}{4!}\left(\phi+a_1\phi^2+a_2 \phi^3+\cdots    \right)^4=\frac{g}{4!}(\phi^4+4 a_1 \phi^5+\cdots)=\sum_{n\geq 4} e_n \phi^n.
\]

For an example, let us just look at the five-point interaction.
There is a five-point vertex $\sim 4a_1g$. But another five-point interaction comes from the connected tree diagram with a four-point interaction $\sim g$
with one of its four external legs propagating to another three-point vertex $\sim a_1$. Putting external legs on-shell, the intermediate propagator is cancelled against the Feynman rule for the three-point vertex, and we get four contributions $a_1 g$ which pair off against the contribution from the five-point vertex. 

Now consider on-shell tree sums containing one original $\phi^4$ vertex $\sim g$ and all other vertices of type $d_n,c_n$.
Shrinking internal edges between the $g$-vertex and its adjacent vertices this can be paired off with tree sums containing one vertex of type $e_n$
and all other vertices of type $d_n,c_n$. 

This pairing off eliminates the contributions of all tree sums apart from the original $g$-vertex. We conclude:
\begin{thm}
The interacting theory is diffeomorphism invariant: the $n$-point interaction
of order $g$ vanishes for $n>4$ and is $\frac{g}{4!}$ for $n=4$. 
\end{thm}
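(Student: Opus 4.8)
The plan is to reduce the statement to the single generating‑function identity $F\circ F^{-1}=\mathrm{id}$, which is the algebraic shadow of the invertibility of a field diffeomorphism tangent to the identity (see \eqref{defF}). Concretely, I would show that the effective order‑$g$ interaction obtained after resumming all diffeomorphism dressings on‑shell equals $\tfrac{g}{4!}F\big(F^{-1}(\phi)\big)^4=\tfrac{g}{4!}\phi^4$, whose coefficient of $\phi^n$ is $\tfrac{g}{4!}\delta_{n,4}$; this is exactly the asserted vanishing for $n>4$ together with the value $\tfrac{g}{4!}$ at $n=4$.

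First I would observe that at order $g$ every contributing tree is connected and contains exactly one interaction vertex $e_m$ (with $m\ge 4$, the case $m=4$ being the bare vertex $e_4\sim\tfrac{g}{4!}\phi^4$), all of whose remaining vertices are the diffeomorphism vertices of type $d,c$ studied in Section~\ref{treeamp}. Removing the interaction vertex splits such a tree into $m$ diffeomorphism subtrees, one hanging off each leg. The key step is to resum, leg by leg, the sum over all such subtrees carrying a prescribed set of on‑shell external legs. By the definition of $b_k$ and Corollary~\ref{massindep}, this sum --- including the propagator on the internal edge joining the subtree to $e_m$ --- collapses to the \emph{mass- and momentum-independent} constant $b_k$, where $k$ is the number of external legs on that leg (with $b_1=1$ for a bare external leg). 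Packaging the $b_k$ into a generating series, and using that the external legs are labelled, I would identify
\[
\sum_{k\ge 1} b_k\,\frac{\phi^k}{k!}=F^{-1}(\phi),
\]
the compositional inverse of $F$; this identity is precisely the content of Theorem~\ref{thm bn} rewritten through Lagrange inversion, equivalently the generating-function incarnation of the vanishing Theorem~\ref{thm vanish}.

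With this in hand the assembly is immediate: dressing each of the $m$ legs of a vertex $e_m$ by the series $F^{-1}(\phi)$ and summing over $m$ amounts to substituting $\phi\mapsto F^{-1}(\phi)$ into the interaction generating function $\sum_m e_m\phi^m=\tfrac{g}{4!}F(\phi)^4$, producing $\tfrac{g}{4!}F\big(F^{-1}(\phi)\big)^4=\tfrac{g}{4!}\phi^4$. Reading off the coefficient of $\phi^n$ gives the theorem. The term-by-term cancellations described in the text --- for instance the four dressed $\phi^4$ contributions $\sim a_1 g$ pairing against the single $e_5$ vertex $\sim 4a_1 g$ at $n=5$ --- are exactly the vanishing of the coefficients of $\phi^n$, $n>4$, in $F\big(F^{-1}(\phi)\big)^4-\phi^4=0$.

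I expect the main obstacle to be the rigorous justification of the leg-by-leg resummation, not the final algebraic collapse. Two points need care. First, one must track the symmetry factors correctly when gluing the $m$ subtrees to the interaction vertex and merging the labelled external legs; this is governed by Proposition~\ref{gluetree} and the half-edge bookkeeping, and one must check that the $m!$ from the symmetric vertex $e_m\phi^m$ and the labelled-structure (exponential generating function) normalisation of the $b_k$ are consistent, so that the substitution $\phi\mapsto F^{-1}(\phi)$ is legitimate. Second --- and this is where Corollary~\ref{massindep} is indispensable --- one must verify that the internal edge connecting each subtree to $e_m$ is genuinely off-shell yet contributes only the finite constant $b_k$, with all internal propagators cancelling against vertex numerators as in Lemma~\ref{treelemma}; were any residual propagator to survive, the clean substitution argument would fail. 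Once these are settled, diffeomorphism invariance at order $g$ is nothing but $F\circ F^{-1}=\mathrm{id}$.
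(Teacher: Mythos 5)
Your argument is correct and arrives at the paper's conclusion, but it organizes the cancellation in a genuinely different way. The paper's own proof is a local pairing-off: it verifies the five-point case explicitly (the vertex $e_5\sim 4a_1g$ against the $e_4$-trees carrying one adjacent three-valent diffeomorphism vertex) and then asserts that, on shell, shrinking the internal edges adjacent to the $g$-vertex pairs every dressed tree sum against the corresponding higher $e_n$ vertex, leaving only the bare $\frac{g}{4!}\phi^4$ term. You instead resum each leg of the unique order-$g$ vertex $e_m$ into the series $\sum_{k\geq 1}b_k\phi^k/k!$ and identify it with $F^{-1}(\phi)$; this identity is indeed equivalent to Theorem~\ref{thm bn} via the classical Bell-polynomial form of Lagrange inversion (check: $b_2/2!=-a_1$, $b_3/3!=2a_1^2-a_2$ against $F^{-1}(\phi)=\phi-a_1\phi^2+(2a_1^2-a_2)\phi^3+\cdots$), although the paper never states it. The order-$g$ amplitude then becomes $\sum_m m!\,e_m B_{n,m}(b_1,b_2,\ldots)=n!\,[t^n]\tfrac{g}{4!}F\bigl(F^{-1}(t)\bigr)^4=n!\,[t^n]\tfrac{g}{4!}t^4$, which is the statement. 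What your route buys is a single closed-form identity in place of the paper's term-by-term pairing, making explicit the compositional-inversion structure that Section~\ref{sec bell} only alludes to; what it costs is exactly the two points you flag, namely the set-partition/EGF bookkeeping that produces $B_{n,m}(b_1,b_2,\ldots)$ when the labelled external legs are distributed over the legs of $e_m$, and the collapse of each dressed leg plus its connecting propagator to the momentum-independent constant $b_k$ --- which is the same input (Theorem~\ref{thm bn} and Corollary~\ref{massindep}) that powers the paper's shrinking argument. One small correction: describing the inverse-series identity as ``the generating-function incarnation of Theorem~\ref{thm vanish}'' overstates the link, since Theorem~\ref{thm vanish} uses only the finiteness and momentum-independence of $b_n$, whereas your substitution needs the precise values supplied by Theorem~\ref{thm bn}.
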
 
For on-shell renormalization conditions we hence obtain identical renormalized Green functions before and after field diffeomorphisms.
 
Now assume you compute the two-point function in an interacting scalar quantum field theory, with external momentum $q$ off-shell, $x:=q^2-m^2\not=0$. Then,
\begin{lemma}\label{lemmafreetwopt}
There exists an $x$-dependent  field diffeormorphism $\{a_n=a_n(x)\}$ such that in the diffeomorphed theory the two-point self-energy function vanishes. 
\end{lemma}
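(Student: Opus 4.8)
The plan is to exploit the banana-graph structure established in Lemma~\ref{lemmabanana}. Fix the off-shell external momentum $q$ and set $x=q^2-m^2\neq 0$. In a kinematic renormalization scheme the full two-point self-energy of the diffeomorphed interacting theory is, by Lemma~\ref{lemmabanana}, supported entirely on banana graphs, and since these are primitive in the Hopf algebra of renormalization there are no nested subdivergences to resolve: each banana is rendered finite by a single subtraction at the reference point, and the renormalized self-energy is a genuine analytic function of $x$ that depends on the coupling $g$ and on the diffeomorphism coefficients $\{a_n\}$. I would first record this dependence explicitly, writing $\Sigma=\Sigma(x;g,\{a_n\})$ as a formal series whose coefficients are the renormalized banana integrals evaluated at external momentum $q$.

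Next I would isolate the two sources of $x$-dependence. Setting all $a_n=0$ leaves the self-energy $\Sigma_{\mathrm{int}}(x)$ of the pure $\tfrac{g}{4!}\phi^4$ theory; its order-$g$ contribution is a tadpole and hence vanishes in the kinematic scheme, so $\Sigma_{\mathrm{int}}$ begins at order $g^2$ with the two-vertex, three-edge banana and is a fixed, generically nonzero function of $x$. Turning on the diffeomorphism adds banana contributions built from the vertices $c_n,d_n$ and from mixed bananas carrying both interaction and diffeomorphism vertices. The crucial point is that the two-point function with \emph{both} external legs off-shell is exactly the case excluded from Theorem~\ref{ThmLoop}, so these diffeomorphism-induced bananas do \emph{not} cancel: they supply contributions that are nonzero at $x\neq 0$ and genuinely depend on the $a_n$.

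With this in hand the statement reduces to solving the single analytic equation $\Sigma(x;g,\{a_n\})=0$ for the $\{a_n\}$ at the fixed value $x$. Because $\Sigma$ depends nontrivially on the diffeomorphism coefficients, I would solve it recursively in powers of $g$. At each order the accumulated self-energy is a known function of $x$ and of the previously fixed $a_n(x)$; one then cancels it by adjusting a coefficient that enters \emph{linearly} --- for instance through a mixed banana carrying one interaction vertex $e_n\sim g$ and one diffeomorphism vertex $\sim a_m$, which is linear in $a_m$ --- so that the cancellation condition is a linear equation whose solution exists precisely when the corresponding banana integral at the off-shell point $x$ is nonzero. Iterating yields $\{a_n(x)\}$ as a formal power series in $g$ with $x$-dependent coefficients, the required field diffeomorphism.

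The main obstacle is the non-degeneracy underlying this last step: one must show that the diffeomorphism-induced part of $\Sigma$ really does depend on the $a_n$ at the chosen off-shell $x$, i.e.\ that the relevant banana integral does not itself vanish there, for otherwise the tuning parameter drops out and the equation cannot be solved. This is precisely the place where off-shellness ($x\neq 0$, away from the kinematic subtraction point) and the failure of Theorem~\ref{ThmLoop} for the doubly off-shell two-point function are indispensable. Establishing that the leading tunable coefficient is nonzero, and that the order-by-order recursion closes consistently, is the real content of the argument.
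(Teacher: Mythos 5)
Your overall strategy --- reduce to banana graphs via Lemma~\ref{lemmabanana} and then determine the $a_n(x)$ recursively by demanding that the self-energy $\Sigma(x;g,\{a_n\})$ vanish --- is the same as the paper's. What you miss is the structural fact the paper actually leans on: because every banana graph has exactly two vertices, the newest coefficient $a_k$ at a given order can sit at \emph{both} vertices, so the determining equation for $a_k$ is \emph{quadratic}, not linear. At one loop the paper writes this out explicitly, obtaining $a_1x_e=-g\,\tfrac{d_1}{e_1}\bigl(1-\sqrt{1-c_1/d_1}\bigr)$ with the square-root branch chosen to be tangent to the trivial solution, and argues that the same two-vertex mechanism produces a quadratic equation for each subsequent $a_k$. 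Your linear equations are what you get if you expand that quadratic solution order by order in $g$ (the $a_k^2$ term is then of higher order and is absorbed into the ``known accumulated self-energy''), so your organization is not inconsistent, but it hides where solvability actually comes from and it is not the paper's route.

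The genuine gap is the one you name yourself: you defer the non-degeneracy of the tunable coefficient --- that the mixed banana integral multiplying $a_m$ does not vanish at the chosen off-shell $x$ --- and call it ``the real content of the argument,'' which means the proposal does not actually close. The paper settles this, at the level of rigor it aims for, by exhibiting the one-loop quadratic explicitly and by the power-counting statement $a_k\sim 1/x_e^k$, which you do not derive and which is needed downstream: the subsequent corollary on the adiabatic limit uses precisely that the on-shell limit $x_e\to 0$ corresponds to $a_k\to\infty$. Without either the explicit low-order computation or the power counting, your recursion is an existence claim conditional on exactly the input the lemma is supposed to provide.
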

\begin{proof}
We use Lem.(\ref{lemmabanana}). As all banana graphs are primitive elements in the Hopf algebra of renormalizations (all co-graphs are tadpoles),
the self-energy graphs resulting from diffeomorphisms is a series $C(\{a_n\})\ln x$. 

Let us prove  this first for the one-loop case.
\begin{figure}[!h]
\includegraphics[width=14cm]{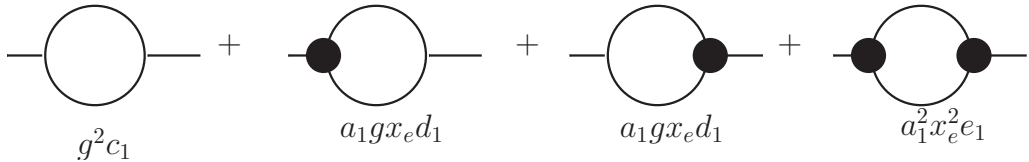}
\caption{At one-loop, the only contributing graphs involve $a_1$ and $g$ vertices.}
\label{lszone}
\end{figure}
Fig.(\ref{lszone}) shows that there is a quadratic equation for $a_1$:
\[
a_1x_e=-g \frac{d_1}{e_1}\left(1-\sqrt{1-\frac{c_1}{d_1}}\right)
\]
This patterns continues at higher loops, and there is always a quadratic equation which determines $a_k$.
Fig.(\ref{lsztwo}) shows this in the two-loop case.
\begin{figure}[!h]
\includegraphics[width=8cm]{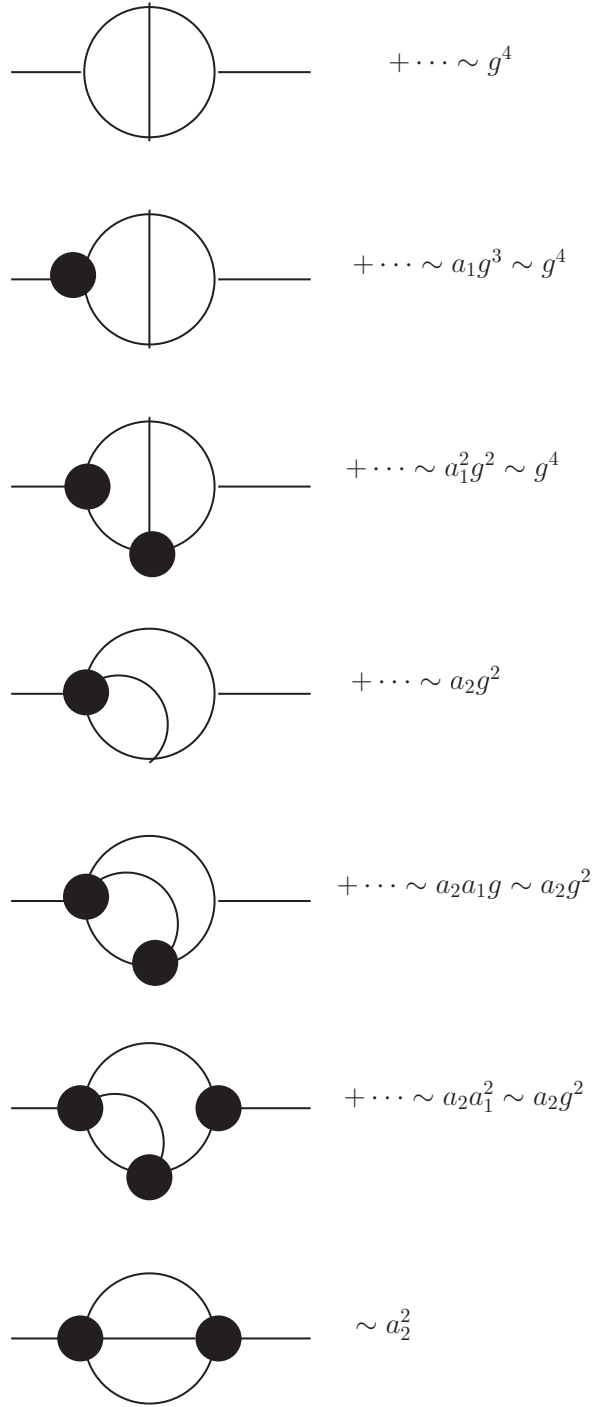}
\caption{At two-loop, all contributions can be expressed through  involve $a_2$ and $g^2$.}
\label{lsztwo}
\end{figure}
A power-counting argument shows that $a_k\sim 1/x_e^k$. Proceeding recursively, we obtain a quadratic equation for each $a_k$.
It is quadratic as each banana graph has two vertices.
\end{proof}

Now consider a quantum field theory defined by its set of edges and vertices $\mathcal{R}$ and a renormalization scheme $R$.
We call two pairs $(\mathcal{R},R)$ and $(\mathcal{R}',R')$ equivalent if they are related by a field diffeomorphism and $R,R'$ are both kinematic renormalization schemes related by a change of the renormalization point.

As two equivalent pairs give rise to identical physics, it makes sense to consider equivalence classes of such pairs.

An old problem of quantum field theory (see \cite{Duncan}, in particular section 10.5 for a clear account) is that for an interacting field theory the two-point function can not be shown to asymptotically approach the free propagator. This problem has a solution in terms of such equivalence classes.
\begin{cor}
Let $(\mathcal{R},R)$ denote an interacting quantum field theory. Then there exist a field diffeomorphism to an equivalent 
theory $(\mathcal{R}',R')$ such that in the latter the propagator is free.
In particular, computing the theory off-shell, it has a well-defined adiabatic limit
in the same equivalence class. Using the LSZ formalism to remove external propagators, the on-shell limit can then be taken in this equivalence class.
\end{cor}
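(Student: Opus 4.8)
The plan is to obtain the three assertions in sequence, with the first resting directly on Lemma~\ref{lemmafreetwopt} and the remaining two being interpretations of what the first makes possible. First I would fix the external momentum $q$ off-shell, so that $x:=q^2-m^2\neq 0$ is a fixed nonzero kinematic parameter, and invoke Lemma~\ref{lemmafreetwopt} to produce the $x$-dependent diffeomorphism $\{a_n=a_n(x)\}$ that kills the two-point self-energy to all orders. Once the self-energy vanishes the full two-point function collapses to the bare $i/(q^2-m^2)$, that is, the propagator is free. The diffeomorphed edge-and-vertex data $\mathcal{R}'$ together with a kinematic scheme $R'$ obtained from $R$ by moving the renormalization point defines the pair $(\mathcal{R}',R')$, and by the very definition of the equivalence relation this pair is equivalent to $(\mathcal{R},R)$; the diffeomorphism invariance of the interacting theory established above guarantees that the two pairs describe the same physics. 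This settles the existence statement.

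For the adiabatic limit I would argue that the Haag-type obstruction recalled from \cite{Duncan} is precisely the failure of the interacting propagator to approach the free one at asymptotic times. In the representative $(\mathcal{R}',R')$ this failure is absent by construction, since the propagator is already free off-shell. Because $x$ is held fixed and nonzero throughout the computation, adiabatic switching of the coupling $g$ is unobstructed, and the resulting limit computed in $(\mathcal{R}',R')$ lies in the same equivalence class as the original theory by invariance. Thus the off-shell theory possesses a well-defined adiabatic limit inside the equivalence class.

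The final assertion, and the crux of the argument, concerns taking the on-shell limit $x\to 0$ through the LSZ reduction. Here the main obstacle is that the diffeomorphism is \emph{not} a fixed field redefinition but carries the kinematic dependence $a_k\sim x_e^{-k}$ recorded in the proof of Lemma~\ref{lemmafreetwopt}, so that the field redefinition degenerates exactly in the limit one wishes to take. The resolution I would pursue is that LSZ amputates the external legs \emph{before} placing momenta on-shell: since in $(\mathcal{R}',R')$ the full propagator is the free one, amputation removes precisely the free pole, and the would-be singular $x$-dependence of $\{a_n(x)\}$ is carried entirely by the external-leg factors that LSZ strips away. One then checks that the residue at the pole is the free-field residue, so that the reduction formula yields a finite on-shell $S$-matrix element computed entirely within the equivalence class. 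Making this cancellation precise --- tracking how each power of $x_e^{-1}$ in the diffeomorphism is matched by a propagator factor removed in amputation, and using the banana-graph support of Lemma~\ref{lemmabanana} to control which graphs contribute --- is the step I expect to require the most care.
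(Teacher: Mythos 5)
Your proposal follows essentially the same route as the paper's own (very terse) proof: invoke Lemma~\ref{lemmafreetwopt} to pass to an equivalent pair $(\mathcal{R}',R')$ with free propagator at fixed $x_e\neq 0$, then use LSZ amputation \emph{before} taking the on-shell limit $x_e\to 0$, which is precisely where the paper notes $a_k\to\infty$. Your additional discussion of matching the $x_e^{-k}$ divergences against amputated propagator factors is a reasonable elaboration of a step the paper itself leaves as an outline for future work, so there is no substantive divergence to report.
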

\begin{proof}
Use Lem.(\ref{lemmafreetwopt}) to construct an equivalent theory which has the correct adiabatic -that is free- propagators for any off-shell $x_e\not=0$. Use the LSZ formalism to amputate connected vertex functions before taking the on-shell limit $x_e\to 0
\Leftrightarrow a_k\to\infty$.
\end{proof}

\section{Conclusion}
Let us summarize the main points of this paper.
\begin{itemize}
\item[1.] We completed the perturbative endeavour of \cite{KVdiff} and proved to all orders that a free massive field theory, after a field diffeomorphism, has no interactions
at tree level.
It indeed should not  be surprising that Bell polynomials play an important rule here because Bell polynomials can be used to describe compositions of power series.  It makes sense that applying a diffeomorphism translates to manipulating series with Bell polynomials.  However, field theory hides the original diffeomorphism very well and so the proof is far from a straightforward undoing of the original diffeomorphism, but rather an intricate  manipulation of Bell polynomials.
\item[2.] We offered two ways to extend the result to the full theory including loops, in the context of kinetic renormalization. A direct combinatorial argument featuring
the Euler characteristic delivers the result. On the other hand, the tree level result implies the vanishing of all variations of loop amplitudes. Hence, a loop amplitude could at best be a rational function of kinematic invariants, but the direct proof shows that these rational functions are absent in kinematic renormalization, as expected.
\item[3.] We gave the mechanism by which to extend these results to field diffeomorphisms of an interacting theory.
\item[4.] The problem of the adiabatic limit in an interacting quantum field theory is vexing. What has been missing so far is a clear perturbative argument how this limit could be well-defined in terms of Feynman graphs. As a first step we offer such an argument in exemplifying how  a field diffeomorphism can be constructed which diffeomorphms  the off-shell two-point propagator --- renormalized kinematical as always ---  of an interacting theory to a free
 propagator. The resulting  interacting amplitudes for connected Green functions with amputated external legs are in the same equivalence class as the adiabatically free diffeomorphed theory.
\end{itemize}
Whilst the first two points above are established in this paper, for the last two points we only outlined the basic arguments which will be expanded upon in future work.

\bibliographystyle{plain}
\bibliography{main}

\end{document}